\newcommand{\vecrow}[1]{Vec_{row}\left(#1\right)}
\newcommand{\rightinverse}[1]{\bm{R}_{#1}}
\newcommand{\CG}[2]{{
		\begin{bmatrix} \textit{#1} \\ \textit{#2} \end{bmatrix}_q
}}
\newcommand\C{\mathcal{C}}
\newcommand\D{\mathcal{D}}
\newcommand\Rank{\mathsf{rank}}
\newcommand\oneto[1]{[1, #1]}
\newcommand\MSE{\mathsf{MSE}}
\newcommand\MCPKP{\mathsf{MCPKP}}
\newcommand\MCE{\mathsf{MCE}}
\newcommand\QMLE{\mathsf{QMLE}}
\newcommand\QSMLE{\mathsf{QSMLE}}
\newcommand\iQSMLE{\mathsf{inhQSMLE}}
\newcommand\hQSMLE{\mathsf{hQSMLE}}
\newcommand\F{\mathcal{F}}
\newcommand\Pcal{\mathcal{P}}
\newcommand\GL{\mathsf{GL}}
\newcommand\Fq{\mathbb{F}_q}
\newcommand\Fqkmn{\Fq^{k\times mn}}
\newcommand\qm{q^m}
\newcommand\Fqm{\mathbb{F}_{\qm}}
\newcommand\sampler{\overset{\;\$}{\longleftarrow}}
\newcommand\accept{\textsf{accept}}
\newcommand\share[1]{ [\![#1]\!] }
\newcommand{\REQUIRE}{\Require}
\newcommand{\ENSURE}{\Ensure}
\newcommand{\IF}{\If}
\newcommand{\ENDIF}{\EndIf}
\newcommand{\STATE}{\State}
\newcommand{\REPEAT}{\Repeat}
\newcommand{\UNTIL}{\Until}
\newtheorem{theorem}{Theorem}%
\newtheorem{proposition}[theorem]{Proposition}% 
\newtheorem{problem}{Problem}
\newtheorem{lemma}{Lemma}
\newtheorem{remark}{Remark}%
\newtheorem{definition}{Definition}%
\title{The Matrix Subcode Equivalence problem and its application to signature with MPC-in-the-Head}
\begin{document}

	\author[1]{{Magali} {Bardet}}%\email{magali.bardet@univ-rouen.fr}
	
	\author[1]{{Charles} {Brion}}%\email{charles.brion@univ-rouen.fr}
	
	\author[2]{{Philippe} {Gaborit}}%\email{gaborit@unilim.fr}
	
	\author[2]{{Mercedes} {Haiech}}%\email{mercedes.haiech@unilim.fr}
	
	\author[2]{{Romaric} {Neveu}}%\email{romaric.neveu@unilim.fr}
	
	\affil[1]{{LITIS}, {University of Rouen Normandie},{ {Rouen}, {France}}}
	
	\affil[2]{{XLIM}, {University of Limoges}, {{Limoges}, {France}}}

    \date{}

	\maketitle

	\begin{abstract}
	Nowadays, equivalence problems are widely used in cryptography, most notably to establish cryptosystems such as digital signatures, with MEDS, LESS, PERK as the most recent ones. However, in the context of matrix codes, only the code equivalence problem has been studied, while the subcode equivalence is well-defined in the Hamming metric. In this work, we introduce two new problems: the Matrix Subcode Equivalence Problem and the Matrix Code Permuted Kernel Problem, to which we apply the MPCitH paradigm to build a signature scheme. These new problems, closely related to the Matrix Code Equivalence problem, ask to find an isometry given a code $C$ and a subcode $D$. Furthermore, we prove that the Matrix Subcode Equivalence problem reduces to the Hamming Subcode Equivalence problem, which is known to be NP-Complete, thus introducing the matrix code version of the Permuted Kernel Problem. We also adapt the combinatorial and algebraic algorithms for the Matrix Code Equivalence problem to the subcode case, and we analyze their complexities. We find with this analysis that the algorithms perform much worse than in the code equivalence case, which is the same as what happens in the Hamming metric. Finally, our analysis of the attacks allows us to take parameters much smaller than in the Matrix Code Equivalence case. Coupled with the effectiveness of \textit{Threshold-Computation-in-the-Head} or \textit{VOLE-in-the-Head}, we obtain a signature size of $\approx$ 4 800 Bytes, with a public key of $\approx$ 275 Bytes. We thus obtain a reasonable signature size, which brings diversity in the landscape of post-quantum signature schemes, by relying on a new hard problem. In particular, this new signature scheme performs better than SPHINCS+, with a smaller size of public key + signature. Our signature compares also well with other signature schemes: compared to MEDS, the signature is smaller, and we reduced the size of the sum of signature and public key by a factor close to 5. We also obtain a signature size that is almost half the size of the CROSS signature scheme.
	
	\noindent\paragraph{Keywords:}{Post-quantum cryptography,  Signature scheme, Code-based cryptography} 
\end{abstract}

	\section{Introduction}
	
	\textbf{Background on post-quantum cryptography.} Ever since Shor's algorithm in 1994 \cite{Shor} to solve the factorization problem, cryptosystems such as RSA or Diffie-Hellman are known to be broken by quantum computers. While the quantum computers are nowhere near close to breaking the current cryptosystems in use, the major and recent advances in this field pushed the NIST to start a post-quantum standardization call in 2017 \cite{nistcall2017}, to propose an alternative to schemes that would be broken by quantum computers. After  Kyber, Dilithium, Falcon, and SPHINCS+ were chosen for standardization, another call for digital signatures has been started in 2023 \cite{nistcall2023}, to obtain more diversity in the signature schemes. In fact, amongst the three signature schemes Dilithium, Falcon, and SPHINCS+, two of them rely on lattices, and the last one on hash functions. While these schemes are still secure to this day, it is highly beneficial in cryptography to have as much diversity as possible. For this new signature call, many propositions relying on the MPC-in-the-Head have been made (8 out of 40), on a large range of assumptions: code-based (Hamming or Rank) \cite{RYDE,SDitHNIST,Mirath}, Multivariate Quadratic, \cite{mqom} and even symmetric assumptions \cite{FAEST}. Other schemes, involving equivalence problems, have been submitted: LESS \cite{LESS}, MEDS \cite{MEDS2}, or ALTEQ \cite{ALTEQ}. However, these schemes do not involve the MPC-in-the-Head paradigm. Out of these 40 submissions, 14 have been selected for the second round, with 6 of them relying on the MPC-in-the-Head, which truly shows its efficiency. It is thus natural to wonder whether equivalence problems can be turned into efficient signature schemes by using the MPCitH paradigm.
	
	\textbf{Background on code equivalence problems.} The code equivalence problem is a well-known problem in cryptography, which has been studied and investigated for several years (\cite{PR97, Leon, S00} and many others). In the Hamming metric, the problem asks to find an isometry $\pi$ such that two codes $\C$ and $\D$ are such that $\pi(\C) = \D$. Leon proposed an algorithm to solve it in 1982 \cite{Leon}, which relies on finding codewords of small weight, and then finding the isometry from these codewords. In another work, \cite{PR97} proved that it is unlikely to be NP-complete. There have been further studies, such as \cite{S00}, with the Support Splitting algorithm, which is polynomial for random codes, but is much less efficient for codes with a large hull, or \cite{beullens}, which also attacks the problem. The code equivalence problem in the Hamming metric led to the signature scheme LESS, which has been submitted to the NIST call for signature in 2023 and has been selected for the second round. This signature has been improved in \cite{CPS24}, to obtain a signature size around $2$ kB. The downside of such a small signature is a large public key, of at least $14$ kB. For matrix codes, code equivalence has also been studied, in \cite{CDG21,RST22,MEDS1}, and several others (\cite{RS24,KQT24} amongst others). Endowed with the rank metric, the problem asks to find two invertible matrices $\bm{A}$ and $\bm{B}$ such that $\D = \bm{A}\C\bm{B}$. Similarly to the Hamming metric, the signature scheme MEDS has been submitted to the NIST. The problem reduces to the equivalence problem in the Hamming metric, and the attacks consist of an adaptation of Leon's algorithm, reduction to other equivalence problems, or algebraic attacks.
	
	Furthermore, there is another problem one can look at when dealing with code equivalence in Hamming metric: the subcode equivalence problem, in which one knows a code and a subcode instead of two codes. This problem is equivalent to the Permuted Kernel Problem (PKP), which led to several works, either cryptosystems or cryptanalysis: \cite{SH90,pkpdss,SBC24,BG22,PERK,BBGK24} and more. While well-known and studied in the Hamming metric, PKP and the subcode equivalence problem have not been adapted to the rank metric, and there are, to the best of our knowledge, no cryptosystems based on the permuted kernel problem for matrix codes.
	
	\textbf{The MPCitH paradigm.} Zero-knowledge proofs are crucial elements in cryptography. They allow a prover to authenticate himself to a verifier, without revealing any information on a secret. These proofs can then be transformed into a signature scheme, making them a useful tool in cryptography. A first such proof in cryptography was the Fiat-Shamir Authentication scheme \cite{FS87}, which proved the knowledge of a square root in $\mathbb{Z}_N$, with a cheating probability of $1/2$ per round. Then, in post-quantum cryptography, and more precisely, code-based cryptography, the Stern protocol \cite{Stern94} allowed to prove knowledge of a syndrome decoding instance, with a soundness error of $2/3$. As for the MPC-in-the-Head paradigm introduced in 2007 \cite{IKO}, the soundness error of the proof is of $1/N$ where $N$ is a chosen parameter, implying that the cheating probability can be made arbitrarily small. In the MPCitH protocol, a prover builds $N$ shares of his secret and simulates the execution of the MPC for every set of shares. Committing the executions and partially revealing some of them, the prover allows a verifier to check whether the computations were done honestly. Repeating this protocol multiple times guarantees the honesty of the prover. In recent years, the MPC-in-the-Head paradigm has been studied quite significantly. Several works aimed (and succeeded) at improving the MPCitH paradigm \cite{KKW18,BN20,FJR22,AGHH} (this list is far from exhaustive). Many schemes already exist based on the MPCitH, for instance, MQ \cite{mqom}, MinRank \cite{Mirath,BFG+}, Syndrome Decoding (Hamming or Rank) \cite{SDitHNIST,RYDE,BFG+}, and even more. In this work, we will use the work from \cite{FR23} and \cite{BBd+}, which leads to shorter proofs using Shamir's Secret Sharing, instead of the traditionally used additive sharing.
	
	\textbf{The VOLEitH paradigm.} The VOLE-in-the-Head framework, introduced in \cite{BBd+} is also a framework to build proofs of knowledge. Essentially, it uses Vector Oblivious Linear Evaluations (VOLE) correlations to commit to a witness, and then prove the knowledge of the solution of polynomial constraints by performing operations on the hidden witness. While protocols already existed such as Quicksilver \cite{quicksilver} for instance, they did not allow to build signatures as they were in a designated verifier setting. The VOLE-in-the-Head framework improved this, by removing the designated verifier setting and making it publicly verifiable. The signature scheme FAEST \cite{FAEST}, submitted to the additional call for signatures from the NIST and selected for the second round, relies on this framework and is an efficient signature scheme relying on the security of AES.
	
	\textbf{Our contribution.} We introduce in this work the Matrix Subcode Equivalence and the Matrix Code Permuted Kernel Problem, which are related to the Matrix Code Equivalence problem ($\MCE$) used in MEDS. These two problems can be seen as the rank metric version of the Hamming Subcode Equivalence problem and of the Permuted Kernel Problem. From the Matrix Code Permuted Kernel Problem, we build a signature scheme by using the recent MPC-in-the-Head techniques, to obtain a signature size smaller than MEDS for the first NIST security level, with much smaller public keys. We also detail the attacks on the problem, with a new formulation of the trilinear form algebraic attack from \cite{MEDS2} as well as a new algebraic modeling. The attacks are mostly the same as in $\MCE$, but their behaviour is widely different and require careful adaptation. In particular, we find that the attacks perform worse by a large margin, a difference closely related to the use of invariants. When looking at other signature schemes, ours compares quite well with other equivalence problems, as the public key is much smaller, but its signature size is larger than other MPC-in-the-Head schemes in rank metric, such as MinRank or Rank Syndrome Decoding (RSD). Furthermore, we compare well with SPHINCS+, with a smaller signature and signature + public key, which was the main comparison for the NIST call for additional signature schemes. We sum up the comparisons with other schemes in Table \ref{tab:comp}. 
	
	\begin{table}[h]
		\renewcommand*{\arraystretch}{1.5}
		\centering
		\scalebox{1}{
			\begin{tabular}{|c|c|c|c|c|c|c|}
				\hline
				Scheme &~$|\sigma|$~ & ~$|\pk|$~ & ~$|\sigma| + |\pk|$~ \\
				\hline
				CROSS \cite{CROSS} & 8 960 B & 54 B & 9 014 B \\
				MEDS \cite{medspqforum} & 5 200 B & 21 595 B & 26 795 B \\
				ALTEQ \cite{CNR+} & 4 432 B & 357 256 B & 361 688 B \\
				LESS \cite{CPS24} & 2 481 B & 13 939 B & 15 430 B\\
				SPHINCS+ (short version) \cite{Sphincs} & 7 856 B  & 32 B & 7 888 B \\
				RYDE \cite{RYDE} & 2 988 B & 69 B & 2 910 B \\
				Mirath \cite{Mirath} & 2 902 B & 57 B & 2 870 B \\
				Subfield Bilinear Collisions \cite{HJ24} & 2 722 B & - & $\approx$ 2 800 B \\
				Matrix Code Permuted Kernel Problem (this work)~ & \textbf{4 788 B} & \textbf{255 B} & \textbf{5 043 B} \\
				\hline
		\end{tabular}} \vspace{1.5mm}
		\caption{Comparison with other signature schemes for the NIST security level I}
		\label{tab:comp}
	\end{table}
	
	\textbf{Organization of the paper.} We present in section \ref{sec:preliminaires} notations and conventions, as well as some mathematical background. In section \ref{sec:MSE}, we introduce the Matrix Subcode Equivalence problem, prove its hardness, and give the protocol and signature scheme in Section \ref{sec:protocol}. Finally, we describe and analyze the attacks on the problem in Section \ref{sec:attaques}, and give the choice of parameter, resulting sizes, and an idea of the computational cost in Section \ref{sec:param}.

	\section{Preliminaries and notation} \label{sec:preliminaires}
	
	\subsection{Notation}
	
	We denote by $\Fq$ the finite field of size $q$. The set of vectors with $n$ coordinates in $\Fq$ is referred as $\Fq^{n}$, the set of matrices with $m$ rows and $n$ columns in $\Fq $ is referred as $\Fq^{m\times n}$. We will also always consider $m\ge n$. We denote by $\GL_n(q)$ the set of invertible matrices of size $n \times n$ in $\Fq$, and $\mathsf{PGL}_n(q)$ the projective linear group. We use lowercase bold letters to represent vectors and uppercase bold letters for matrices ($\bm{x} \in \Fq^k$, $\bm{X} \in \Fq^{m \times n}$,  $x \in \Fq$). The subset of integers from $1$ to $n$ is represented with $\oneto{n}$. If $S$ is a set, we write $x\sampler S$ the uniform sampling of a random element $x$ in $S$. We denote by $\langle x_1,\dots,x_n\rangle$ the $\Fq$-linear subspace of $\Fqm$ generated by $(x_1,\dots,x_n) \in \Fqm^{n}$, and by $\mathcal{C}^{k}_{m \times n}$ the space of matrix codes of dimension $k$ with matrices of size $m \times n$. Let us define the gaussian coefficient $\CG{m}{r} = \prod_{i=0}^{r-1}\frac{q^m-q^i}{q^r-q^i}$ $ \underset{q\to + \infty}{\longrightarrow} q^{r(m-r)}$, it equals the number of distinct $r$-dimensional $\Fq$-linear subspaces of $\Fqm$.The set of all $r$-dimensional subspaces of a vector space $E$ will be designated by $\mathsf{Gr}^r(E)$. We denote by $\mathcal{S}_n$ the set of permutations of $\oneto{n}$. 
	Two distributions $\{D_\lambda\}_\lambda$ and $\{E_\lambda\}_\lambda$ indexed by a security parameter $\lambda$ are $(t,\varepsilon)$-\textit{indistinguishable} (where $t$ and $\varepsilon$ are $\mathbb{N} \to \mathbb{R}$ functions) if, for any algorithm $\mathcal{A}$ running in time at most $t(\lambda)$ we have 
	$$\big| \Pr[\mathcal{A}^{D_\lambda}()=1] - \Pr[\mathcal{A}^{E_\lambda}()=1 ]\big| \leq \varepsilon(\lambda)~, $$ 
	with $\mathcal{A}^{Dist}$ meaning that $\mathcal{A}$ has access to a sampling oracle of distribution $Dist$. Finally, we use the notation $\share{x}_i$ to refer to the $i$-th share of a secret $x$ when using a secret sharing scheme.
	
	\subsection{Secret Sharing}
	
	We now define the secret sharing scheme that will be used in the protocols: Shamir's Secret Sharing scheme.
	\begin{definition}[Shamir's Secret Sharing]\label{def:shamir-sharing}
		Let $\mathbb{F}$ a field and $s \in \mathbb{F}$ a secret.
		A Shamir's secret sharing is the following $(\ell+1,N)$-threshold sharing scheme: \begin{itemize}
			\item Sample $(r_1, \ldots, r_\ell) \sampler \mathbb{F}^\ell$;
			\item Compute $P(X) = s + \sum_{i=1}^\ell r_iX^i$;
			\item Compute $\share{s}_i = P(e_i)$ where the $(e_i)_{i\in \oneto{N}}$ are distinct and non-zero public values.
		\end{itemize}
	\end{definition}
	
	We define the \textit{degree} of a Shamir's secret sharing as the degree of the underlying polynomial. A sharing generated using the above process is of degree $\ell$. The sum of a $d_1$-degree sharing and a $d_2$-degree sharing is of degree $\le \max(d_1,d_2)$, while the multiplication is of degree $d_1+d_2$.
	
	\subsection{Matrix Codes}
	It is necessary to recall some notions about matrix codes and matrix code equivalence. These notions can also be found in \cite{CDG21,MEDS1,MEDS2}, and other articles dealing with the subject.
	
	An $[m\times n,k] $ matrix code $\C$ is a subspace of $\Fq^{m \times n}$ of dimension $k$, with basis $\langle \bm{C}_1,\dots,\bm{C}_k \rangle$ where the $\bm{C}_i$'s are all linearly independent elements of $\Fq^{m\times n}$. In this context, it is natural to use the rank metric for the weight, i.e, $w_R(\bm{M}) = \Rank(\bm{M})$, and $\operatorname{d} (\bm{A}-\bm{B}) = \Rank(\bm{A}-\bm{B})$.
	
	An \textit{isometry} $\mu$ is a linear map from $\Fq^{m\times n}$ to itself, which conserves the rank of an element, i.e, $\Rank(\mu({\bm{M}})) = \Rank(\bm{M})$, for all $ \bm{M} \in \Fq^{m\times n}$.
	
	We say that two codes $\C$, $\D$ $\subset \Fq^{m\times n}$ are equivalent if there is an isometry $\mu$ such that $\mu(\C) = \D$. 
	Since we are working with the rank metric, isometries are of the form $\bm{M} \mapsto \bm{A}\bm{M}\bm{B}$ where $\bm{A}$ and $\bm{B}$ are matrices of full rank, or $\bm{M} \mapsto \bm{M}^T$ (when $m=n$), or a composition of these two forms. In our case, only the first one is of interest and we will restrict to this case only.
	
	\begin{definition} \label{def1}
		Let $\C$ and $\D$ $\subset \Fq^{m\times n}$ two matrix codes of dimension $k$. They are equivalent if and only if there exists $\bm{A} \in \mathsf{GL}_m(q)$ and $\bm{B} \in \mathsf{GL}_n(q)$ such that $\D=\bm{A}\C\bm{B}$ meaning for every $\bm{D} \in \D$, there is $\bm{C} \in \C$ such that $\bm{D} = \bm{A}\bm{C}\bm{B}$. 
	\end{definition}
	
	From this definition, one can easily define the Matrix Code Equivalence problem ($\mathsf{MCE}$):
	
	\begin{problem} \label{MCE}$\MCE(q,m,n,k)$:\\
		Let $\C$ and $\D$ $\subset \Fq^{m\times n}$ two matrix codes. Decide if there exist matrices $\bm{A} \in \mathsf{GL}_m(q)$ and $\bm{B} \in \mathsf{GL}_n(q)$ such that $\D = \bm{A}\C\bm{B}$? 
	\end{problem}
	
	It is helpful to approach this problem by using vectors instead of matrices. For that purpose, we will give a bijection between $\Fq^{m\times n}$ and $\Fq^{mn}$:
	\[ Vec_{row} \text{ }: \text{ }\bm{A} = \begin{pmatrix}
		a_{1,1} & \dots & a_{1,n} \\
		\vdots & \ddots & \vdots \\
		a_{m,1} & \dots & a_{m,n}
	\end{pmatrix} \mapsto \vecrow{\bm{A}} = (a_{1,1} \dots a_{1,n} \dots a_{m,1}\dots a_{m,n})\]
	
	Thanks to this bijection, we can view a [$m\times n,k$] matrix code $\C$ as a subspace of $\Fq^{mn}$ of dimension $k$ generated by a matrix $\bm{G} \in \Fqkmn$. We can also compute the dual of the code $\C$ generated by $\bm{G}$, by considering the dual matrix $\bm{H} \in \Fq^{(mn-k) \times mn}$ such that $\bm{G}\bm{H}^\top = \bm{0}$. Finally, this allows us to have the following lemma \cite{MEDS1}:
	\begin{lemma} \label{lemma1}
		Let $\C$ and $\D$ be two {\rm[$m\times n,k$]} equivalent matrix codes over $\Fq$ such that $\D = \bm{A}\C \bm{B}$. Let $\bm{G}$ and $\bm{G}'$ be the two generator matrices of $\C$ and $\D$ respectively. Then, there exists a matrix $\bm{T} \in \GL_k(q)$ such that $\bm{T}^\top\bm{G}(\bm{A}^T\otimes \bm{B}) = \bm{G}'$ 
	\end{lemma}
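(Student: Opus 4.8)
The plan is to reduce the statement to a single linear-algebra identity describing how row-wise vectorization interacts with two-sided multiplication $\bm{M}\mapsto \bm{A}\bm{M}\bm{B}$. Concretely, I would first establish the auxiliary identity
\[
\vecrow{\bm{A}\bm{M}\bm{B}} \;=\; \vecrow{\bm{M}}\,(\bm{A}^\top\otimes\bm{B})
\]
for every $\bm{M}\in\Fq^{m\times n}$, $\bm{A}\in\Fq^{m\times m}$, $\bm{B}\in\Fq^{n\times n}$. This is the matrix-code analogue of the classical fact that $\mathrm{vec}(\bm{A}\bm{M}\bm{B})=(\bm{B}^\top\otimes\bm{A})\,\mathrm{vec}(\bm{M})$ for column-wise vectorization; the only subtlety is tracking the transposes induced by the fact that $Vec_{row}$ stacks \emph{rows}. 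One clean route: observe that $\vecrow{\bm{M}}^\top=\mathrm{vec}(\bm{M}^\top)$ with $\mathrm{vec}$ the column-stacking operator, apply the classical identity to $(\bm{A}\bm{M}\bm{B})^\top=\bm{B}^\top\bm{M}^\top\bm{A}^\top$, and transpose back. Alternatively, and perhaps more transparently, I would verify it entrywise: the $(i,j)$-entry of $\bm{A}\bm{M}\bm{B}$ is $\sum_{r,s}a_{i,r}m_{r,s}b_{s,j}$, and one checks that the coefficient of $m_{r,s}$ in the coordinate indexed by $(i,j)$ of $\vecrow{\bm{M}}(\bm{A}^\top\otimes\bm{B})$ is exactly $a_{i,r}b_{s,j}$, using the block structure of $\bm{A}^\top\otimes\bm{B}$.

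Once this identity is available, the lemma follows quickly. Write $\bm{G}$ with rows $\vecrow{\bm{C}_1},\dots,\vecrow{\bm{C}_k}$, where $\bm{C}_1,\dots,\bm{C}_k$ is a basis of $\C$; then $\bm{G}(\bm{A}^\top\otimes\bm{B})$ has rows $\vecrow{\bm{A}\bm{C}_i\bm{B}}$ by the auxiliary identity. Since $\bm{M}\mapsto\bm{A}\bm{M}\bm{B}$ is a bijective linear map (because $\bm{A}$ and $\bm{B}$ are invertible), $\bm{A}\bm{C}_1\bm{B},\dots,\bm{A}\bm{C}_k\bm{B}$ is a basis of $\bm{A}\C\bm{B}=\D$, so $\bm{G}(\bm{A}^\top\otimes\bm{B})$ is a generator matrix of $\D$. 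As $\bm{G}'$ is also a generator matrix of the $k$-dimensional code $\D$, the two matrices have the same row space and hence differ by an invertible change of basis: there exists $\bm{S}\in\GL_k(q)$ with $\bm{S}\,\bm{G}(\bm{A}^\top\otimes\bm{B})=\bm{G}'$. Setting $\bm{T}=\bm{S}^\top\in\GL_k(q)$ yields $\bm{T}^\top\bm{G}(\bm{A}^\top\otimes\bm{B})=\bm{G}'$, as claimed.

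The only step demanding real care is the auxiliary Kronecker identity, and there the single thing to get right is the orientation of the transposes: depending on whether one writes $\bm{A}^\top\otimes\bm{B}$ or $\bm{A}\otimes\bm{B}^\top$, and on whether $Vec_{row}$ multiplies on the left or the right, the formula changes. I would therefore fix all conventions explicitly at the outset and sanity-check them on a small $2\times 2$ example. Everything after that is immediate from the definition of a generator matrix together with the invertibility of $\bm{A}$ and $\bm{B}$.
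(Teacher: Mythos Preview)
Your argument is correct and is exactly the standard proof: the Kronecker identity $\vecrow{\bm{A}\bm{M}\bm{B}}=\vecrow{\bm{M}}(\bm{A}^\top\otimes\bm{B})$ applied row by row, followed by the change-of-basis relation between two generator matrices of the same $k$-dimensional code. The paper itself does not give a proof of this lemma at all---it simply quotes it from \cite{MEDS1}---so there is no in-paper argument to compare against; your write-up would in fact supply the missing details.
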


	The matrix code equivalence problem has been studied in several articles \cite{CDG21,RST22,MEDS1}, and \cite{MEDS2}. This problem led to the MEDS signature scheme, which was submitted to the NIST call for digital signature schemes in 2023 \cite{MEDS2}.
	
	\section{The Matrix Subcode Equivalence Problem} \label{sec:MSE}
	
	Thanks to the previous notions, it is possible to define a new problem: similarly to the Subcode Equivalence problem in the Hamming metric, we define here the Matrix Subcode Equivalence problem ($\MSE$). We also define here the matrix code version of the Permuted Kernel Problem ($\MCPKP$).
	
	\begin{problem} \label{problem2} $\MSE(q,m,n,k,k')$:\\
		Let $\C$ and $\D$ $\subset \Fq^{m\times n}$ be two matrix codes of dimension $k$ and $k'$, with $k' < k$ respectively. Decide if there exist matrices $\bm{A} \in \mathsf{GL}_m(q)$ and $\bm{B} \in \mathsf{GL}_n(q)$ such that $ \D \subset \bm{A}\C\bm{B}$. 
	\end{problem}

	We can thus adapt Lemma \ref{lemma1}:
	\begin{lemma} \label{lemma2}
		Let $\C$ and $\D$ two matrix codes over $\Fq$ such that $ \D \subset \bm{A}\C\bm{B}$. Let $\bm{G}$ and $\bm{G}'$ be the two generator matrices of $\C$ and $\D$ respectively. Then, there exists a matrix $\bm{T} \in \Fq^{k\times k'}$ of rank $k'$ such that $\bm{G'} = \bm{T}^\top\bm{G}(\bm{A}^\top\otimes \bm{B})$.
	\end{lemma}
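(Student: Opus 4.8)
The plan is to mirror the proof of Lemma~\ref{lemma1}, the only genuinely new point being a rank bookkeeping step that replaces the invertibility of $\bm{T}$ by the condition $\rank(\bm{T}) = k'$.

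First I would record the vectorization identity on which everything rests: for any $\bm{M}\in\Fq^{m\times n}$, $\bm{A}\in\GL_m(q)$ and $\bm{B}\in\GL_n(q)$, one has $\vecrow{\bm{A}\bm{M}\bm{B}} = \vecrow{\bm{M}}\,(\bm{A}^\top\otimes\bm{B})$. This is the row-vectorization form of the classical identity $\operatorname{vec}(\bm{A}\bm{M}\bm{B}) = (\bm{B}^\top\otimes\bm{A})\operatorname{vec}(\bm{M})$, and is checked either by a direct index computation or via $\vecrow{\bm{M}} = \operatorname{vec}(\bm{M}^\top)^\top$. If $\bm{C}_1,\dots,\bm{C}_k$ denote the basis matrices of $\C$ whose row-vectorizations are the rows of $\bm{G}$, then the rows of $\bm{G}(\bm{A}^\top\otimes\bm{B})$ are exactly $\vecrow{\bm{A}\bm{C}_i\bm{B}}$ for $i\in\oneto{k}$; since $\bm{A}^\top\otimes\bm{B}$ is invertible, these $k$ vectors are linearly independent and span $\bm{A}\C\bm{B}$.

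Next, since $\D\subset\bm{A}\C\bm{B}$ and the rows of $\bm{G}'$ span $\D$, each row of $\bm{G}'$ is a linear combination of the rows of $\bm{G}(\bm{A}^\top\otimes\bm{B})$. Hence there exists $\bm{S}\in\Fq^{k'\times k}$ with $\bm{S}\,\bm{G}(\bm{A}^\top\otimes\bm{B}) = \bm{G}'$; putting $\bm{T} = \bm{S}^\top\in\Fq^{k\times k'}$ yields the claimed equality $\bm{G}' = \bm{T}^\top\bm{G}(\bm{A}^\top\otimes\bm{B})$. It then remains to see $\rank(\bm{T}) = k'$: as $\bm{G}'$ generates the $k'$-dimensional code $\D$ it has rank $k'$, while $\rank\big(\bm{T}^\top\bm{G}(\bm{A}^\top\otimes\bm{B})\big) \le \rank(\bm{T}^\top) = \rank(\bm{T}) \le k'$, so all inequalities are equalities and $\rank(\bm{T}) = k'$.

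The argument is short and essentially routine; the only place that needs care is the orientation of the Kronecker factors in the vectorization identity, i.e. $\bm{A}^\top\otimes\bm{B}$ rather than $\bm{B}^\top\otimes\bm{A}$, which is dictated by the row-versus-column convention fixed for $\vecrow{\cdot}$. I would therefore state and verify that identity explicitly, consistently with the convention already used in Lemma~\ref{lemma1}, before invoking it, and keep the rest of the proof verbatim from the equivalence case except for carrying the non-square matrix $\bm{T}$ through.
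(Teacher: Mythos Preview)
Your argument is correct: the vectorization identity, the existence of $\bm{S}$ from the subcode inclusion, and the rank bookkeeping all go through as you describe. The paper does not actually supply a proof of Lemma~\ref{lemma2}; it is stated as the obvious adaptation of Lemma~\ref{lemma1} and left without justification, so your write-up is exactly the routine argument the paper is implicitly invoking.
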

	
	Since $\bm{A}$ and $\bm{B}$ are invertible, one can easily see that this is equivalent to $\bm{G'}(\bm{A}^\top\otimes \bm{B})^{-1} = \bm{T}^\top\bm{G}$ which itself can be rewritten as $\bm{G'}(\bm{A}^\top\otimes \bm{B})^{-1}\bm{H}^\top = \bm{0}$ (where $\bm{H}$ is the parity-check matrix of $\bm{G}$). Note that, unlike the code equivalence problem, the matrix $\bm{T}$ in the subcode equivalence problem is not invertible (but has full rank).

	Furthermore, knowing that $\MCE$ is at least as hard as the monomial equivalence problem \cite{CDG21}, the question can be asked on the hardness of the matrix subcode equivalence one. We then recall the Subcode Equivalence Problem ($\mathsf{SEP}$) in Hamming metric, which has been shown to be NP-complete in \cite{BTK17}:

	\begin{problem}\label{SEP} $\mathsf{SEP}(q,n,k,k')$\\
		Let $\C \subset \Fq^n$ of dimension $k$, and $\D \subset \Fq^n$ of dimension $k'$ with $k' < k$ two codes in Hamming metric. Decide if there exists an isometry $\mu$ such that $\D \subset \mu(\C)$. In other words, given $\bm{G} \in \Fq^{k \times n}$ and $\bm{G}' \in \Fq^{k' \times n}$, decide if there exist $\mu$ and $\bm{T} \in \Fq^{k \times k'}$ of rank $k'$ such that $\bm{G}' = \bm{T}^\top \mu ( \bm{G})$.
	\end{problem}
	
	The MacWilliams equivalence theorem \cite{Mac62} says that the isometries in the Hamming metric are exactly the monomial transforms. Furthermore, the $\mathsf{SEP}$ is well-known to be NP-complete thanks to \cite{BTK17} when isometries are restricted to permutation. When the isometries are all the monomial transforms the problem is harder. We thus have the following result, which proves the NP-completeness of the $\MSE$ problem:
	
	\begin{proposition}
		The $\MSE$ problem is at least as hard as the Hamming $\mathsf{SEP}$.
	\end{proposition}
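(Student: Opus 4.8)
The plan is to exhibit a polynomial-time Karp reduction from the Hamming $\mathsf{SEP}$ (Problem~\ref{SEP}) to the $\MSE$ (Problem~\ref{problem2}), so that any algorithm deciding $\MSE$ decides Hamming $\mathsf{SEP}$. The reduction is the rank-metric counterpart of the classical diagonalisation embedding of a Hamming code into a matrix code: to $\bm v=(v_1,\dots,v_n)\in\Fq^n$ we associate the diagonal matrix $\delta(\bm v)=\operatorname{diag}(v_1,\dots,v_n)\in\Fq^{n\times n}$. The map $\delta$ is $\Fq$-linear, injective, and isometric in the sense that $\Rank(\delta(\bm v))=\hw{\bm v}$. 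Thus, from a Hamming $\mathsf{SEP}$ instance given by $\C\subset\Fq^n$ of dimension $k$ and $\D\subset\Fq^n$ of dimension $k'<k$, with generator matrices $\bm G,\bm G'$, we output the matrix codes $\widehat\C=\delta(\C)$ and $\widehat\D=\delta(\D)$ in $\Fq^{n\times n}$, of dimensions $k$ and $k'$, whose generator matrices are obtained by applying $\vecrow{\cdot}$ to the diagonalised rows of $\bm G$ and $\bm G'$. This is computable in polynomial time and produces a valid $\MSE(q,n,n,k,k')$ instance; it remains to prove that the two instances have the same answer. (Equivalently: Hamming $\mathsf{SEP}$ is the restriction of $\MSE$ to codes of diagonal matrices, and the content of the proof is that this restriction loses no information.)

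\noindent\emph{Forward direction.} Suppose $\D\subset\mu(\C)$ for a monomial isometry $\mu(\bm v)=\bm v\bm M$, and write the monomial matrix $\bm M=\bm P\bm Q$ with $\bm P$ a permutation matrix and $\bm Q$ invertible diagonal. A direct computation gives $\delta(\bm v\bm M)=\bm A\,\delta(\bm v)\,\bm B$ with $\bm A=\bm P^\top\in\GL_n(q)$ and $\bm B=\bm P\bm Q\in\GL_n(q)$. Hence for every $\bm d\in\D\subset\mu(\C)$ we get $\delta(\bm d)\in\bm A\widehat\C\bm B$, i.e.\ $\widehat\D\subset\bm A\widehat\C\bm B$: the $\MSE$ instance is a yes-instance.

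\noindent\emph{Converse direction.} This is the delicate part. From $\bm A,\bm B\in\GL_n(q)$ with $\widehat\D\subset\bm A\widehat\C\bm B$ one must recover a monomial isometry $\mu$ with $\D\subset\mu(\C)$; the obstruction is that $\bm A\widehat\C\bm B$ need not consist of diagonal matrices, so $(\bm A,\bm B)$ a priori carries more freedom than a monomial transform. The hypothesis says $\bm A^{-1}\delta(\bm d)\bm B^{-1}\in\widehat\C$ is diagonal for every $\bm d\in\D$. The key is a rigidity statement: if $\D$ contains two full-support vectors $\bm d_0,\bm d_1$ whose entrywise ratio has pairwise distinct coordinates, then writing $\bm A^{-1}\delta(\bm d_i)\bm B^{-1}=\bm D_i$ (invertible diagonal) one obtains $\bm D_0^{-1}\bm D_1=\bm B\,\delta(\bm d_1\oslash\bm d_0)\,\bm B^{-1}$, so $\bm B$ conjugates a diagonal matrix with pairwise distinct diagonal entries to a diagonal matrix; since such a matrix has the standard basis as its only eigenvectors up to scalars, this forces $\bm B$, and then (using that $\delta(\bm d_0)\bm B^{-1}$ is a full-support monomial matrix) $\bm A$, to be monomial. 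One then checks that the permutation parts of $\bm A^{-1}$ and $\bm B^{-1}$ are mutually inverse, so that $\nu:\bm v\mapsto\operatorname{diag}\big(\bm A^{-1}\delta(\bm v)\bm B^{-1}\big)$ is a well-defined monomial isometry with $\delta(\nu(\bm v))=\bm A^{-1}\delta(\bm v)\bm B^{-1}$; then $\delta(\nu(\D))=\bm A^{-1}\widehat\D\bm B^{-1}\subset\widehat\C$ gives $\nu(\D)\subset\C$, and $\mu=\nu^{-1}$ works.

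\noindent\emph{Supplying the hypothesis, and the main obstacle.} The rigidity argument requires $\D$ to contain full-support vectors with separating ratio, which a general Hamming $\mathsf{SEP}$ instance need not provide. One therefore precedes the embedding by a polynomial-time \emph{augmentation}: lengthen the support to $\Fq^{n+t}$ and adjoin, on the new coordinates, a small rigid ``gadget'' common to $\C$ and $\D$, designed so that the augmented $\D$ acquires a pair of full-support codewords with all-distinct ratio, while no monomial isometry can mix the gadget coordinates with the original ones (enlarging $q$ if necessary). Proving that this augmentation preserves the answer exactly --- that every monomial isometry of the augmented instance restricts to one of the original instance and conversely, ruling out spurious solutions that exploit the padding --- is the main technical obstacle; everything else (the properties of $\delta$, the explicit isometry in the forward direction, dimension bookkeeping, polynomial-time computability) is routine. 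Since Hamming $\mathsf{SEP}$ with monomial isometries is at least as hard as its restriction to permutations, which is NP-complete by \cite{BTK17}, this also yields the NP-hardness of $\MSE$.
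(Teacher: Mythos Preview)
Your forward direction is fine and matches the paper. The converse direction, however, is both incomplete and unnecessarily hard.

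\textbf{The gap.} You reduce the converse to a rigidity claim that only fires when $\D$ contains two full-support vectors with a separating entrywise ratio, and then propose to manufacture this by an ``augmentation'' gadget, possibly over a larger field. You do not construct the gadget, and you correctly flag that proving it introduces no spurious $\MSE$ solutions is the crux. That is a real obstacle: the gadget must simultaneously (i) force the rigidity hypothesis, (ii) be invariant under exactly the monomial maps you want and no others, and (iii) not interact with the original coordinates under any rank-metric isometry $(\bm A,\bm B)$. Nothing in your write-up controls (iii), and enlarging $q$ further muddies the parameter-preserving nature of the reduction. As written, the proof is not complete.

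\textbf{What the paper does instead.} The paper (following \cite{CDG21}) sidesteps all of this by not trying to show that $\bm A,\bm B$ are monomial. From $\widehat\D\subset\bm A\widehat\C\bm B$ one gets $\mu^{-1}(\widehat\D)\subset\widehat\C$, where $\mu(\bm M)=\bm A\bm M\bm B$. Since $\widehat\C$ consists \emph{only} of diagonal matrices, so does its subspace $\C'_M:=\mu^{-1}(\widehat\D)$; hence $\C'_M=\delta(\C')$ for some $\C'\subset\C$. Now the map $\phi:=\delta^{-1}\circ\mu\circ\delta\big|_{\C'}:\C'\to\D$ is linear and preserves Hamming weight, because $\delta$ turns Hamming weight into rank and $\mu$ preserves rank. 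By the MacWilliams extension theorem (which the paper recalls just before the proposition), any such linear Hamming isometry between codes extends to a monomial transform of $\Fq^n$. That extended monomial map witnesses $\D\subset\tilde\mu(\C)$. No gadgets, no field enlargement, no structural assumption on $\D$.

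So the idea you are missing is: you do not need $(\bm A,\bm B)$ themselves to be monomial; you only need the induced map on diagonal matrices to be a Hamming isometry between the relevant subcodes, which is automatic, and MacWilliams does the rest.
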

	
	\begin{proof}
		We prove here that a solver for the $\MSE$ problem can also solve a Hamming $\mathsf{SEP}$ instance, by proceeding as in~\cite[Section 3.3]{CDG21}.
		
		We first define $\psi$ as a map from $\Fq^n$ to $\Fq^{n \times n}$:
		\[ \psi \text{ }: \text{ }\bm{x} = (x_1,\dots,x_n) \mapsto \psi(\bm{x}) = \begin{pmatrix}
			x_1 & & 0\\
			& \ddots &\\
			0& & x_n
		\end{pmatrix} = \operatorname{Diag}(x_1,\dots,x_n).\]

		This map translates the Hamming weight to the rank weight since we have $\mathsf{rank}(\psi(\bm{x})) = w_H(\bm{x})$. It is in fact an isometry on its image. Furthermore, if we define $\bm{y} = \pi(\bm{xD})$ for a vector $\bm{x} \in \Fq^n$, a diagonal matrix $\bm{D}\in\GL_n(q)$ and a permutation $\pi$ in $\mathcal{S}_n$, we have that $\psi(\bm{y}) = \bm{P}^{-1}\psi(\bm{x})\bm{DP}$, where $\bm{P}$ is the matrix in $\GL_n(q)$ representing the permutation $\pi$. 
		
		Let $\C_H$ and $\D_H$ be two Hamming codes  of length $n$ and dimension $k$ and $k'$ respectively, and define the matrix codes $\C_M = \psi(\C_H)$ and $\D_M = \psi(\D_H)$, both subspaces of $\Fq^{n \times n}$, of dimension $k$ and $k'$ respectively. We will prove that $\C_H$ and $\D_H$ are $\mathsf{SEP}$-equivalent if and only if $\C_M$ and $\D_M$ are $\MSE$-equivalent.
		
		Indeed, if there exists a permutation $\pi$ and a diagonal matrix $\bm{D}$ such that $\D_H \subseteq \pi(\C_H\bm{D})$, then we have \[\D_M=\psi(\D_H)\subseteq\psi(\pi(\C_H\bm{D}))=\bm{P}^{-1}\psi(\C_H)\bm{DP}=\bm{P}^{-1}\C_M\bm{DP},\]
		i.e. $\D_M$ is equivalent to a subcode of $\C_M$, with the isometry given by the matrices $(\bm{P}^{-1},\bm{DP})$. 
		
		Conversely, assume $\D_M \subseteq \mu(\C_M)$ with $\mu$ an isometry. Note $\C'_M$ the $k'$ dimensional subspace of $\C_M$ s.t $\C'_M = \mu^{-1}(\D_M)$. As $\C_M$ is generated by diagonal matrices, it contains only diagonal matrices, hence is included in the image of $\psi$. As $\psi$ is injective, we can consider $\C'_H:=\psi^{-1}(\C'_M)$. This subcode is isometric to $\D_H$ since $\D_H=\psi^{-1}\circ\mu\circ\psi(\C'_H)$ and the applications $\mu$ and $\psi$ are isometries. 
	\end{proof}

	It is well-known that the subcode equivalence problem reduces to a homogeneous PKP instance in the Hamming metric \cite{SBC24}. We now define the inhomogeneous version of the Matrix Code PKP in Problem \ref{problem3}.
	\begin{problem} \label{problem3} $\MCPKP(q,m,n,k,k')$:\\
		Let $\bm{G} \in \Fq^{k \times mn}$, $\bm{G}' \in \Fq^{k' \times mn}$ and $\bm{G}'' \in \Fq^{k' \times mn}$ be three generator matrices of two matrix codes of dimension $k$ and $k'$ respectively. Decide if there exists matrices $\bm{A} \in \mathsf{GL}_m(q)$, $\bm{B} \in \mathsf{GL}_n(q)$ and $\bm{T} \in \Fq^{k \times k'}$ of rank $k'$ and  such that $(\bm{T}^\top\bm{G} +\bm{G}'')(\bm{A}^\top\otimes \bm{B}) = \bm{G}'$.
	\end{problem}
	
	\begin{remark}
		The previous problem can be reformulated as follows with the parity-check matrix $\bm{H} \in \Fq^{(mn-k) \times mn}$ of the code generated by $\bm{G}$ : let $\bm{H} \in \Fq^{(mn-k) \times mn}$, $\bm{Y} \in \Fq^{k' \times (mn-k)}$, and $\bm{G}' \in \Fq^{k' \times mn}$ be such that $\mathsf{rank}(\bm{H}) = mn-k$ and $\mathsf{rank}(\bm{G}') = k'$. Decide if there exist $\bm{A} \in \mathsf{GL}_m(q)$, $\bm{B} \in \mathsf{GL}_n(q)$ such that $\bm{G}'(\bm{A}^{-\top}\otimes \bm{B}^{-1})\bm{H}^\top = \bm{Y}$.
	\end{remark}
	
	It is easy to see that, when $\bm{Y} = \bm{0}$, $\MCPKP$ is in fact only an $\MSE$ instance. In fact, if $\bm{Y} = \bm{0}$, then this implies that $\bm{G}'(\bm{A}^{-\top}\otimes \bm{B}^{-1})\bm{H}^\top = \bm{0}$, which is the $\MSE$ problem. However, having a non-zero $\bm{Y}$ will be useful in the signature protocol. Furthermore, there is a way to solve a $\MCPKP$ instance given an $\MSE$ solver.
	
	\begin{lemma}\label{MCPKPtoMSE}
		Let $q,m,n,k,k'$ be positive integers with $k' < k$. Let $\mathcal{A}$ be an algorithm which solves a $(q,m,n,k+k',k')$ $\MSE$ instance with probability $\epsilon_{1}$. Then, there is an algorithm $\mathcal{A}'$ that solves a $(q,m,n,k,k')$ $\MCPKP$ instance with probability $\epsilon_{2}$ where $\epsilon_2 \ge \epsilon_1$
	\end{lemma}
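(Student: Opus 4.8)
The plan is to reduce by a \emph{stacking} construction. Given the $\MCPKP$ instance $(\bm{G},\bm{G}',\bm{G}'')$ with $\bm{G}\in\Fq^{k\times mn}$ and $\bm{G}',\bm{G}''\in\Fq^{k'\times mn}$, the algorithm $\mathcal{A}'$ forms the matrix $\tilde{\bm{G}}=\binom{\bm{G}}{\bm{G}''}\in\Fq^{(k+k')\times mn}$ obtained by stacking $\bm{G}$ and $\bm{G}''$ vertically; generically (and for the instances arising in the protocol) $\tilde{\bm{G}}$ has full rank $k+k'$ and generates a matrix code $\tilde{\mathcal{C}}$ of dimension $k+k'$. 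Then $\mathcal{A}'$ runs $\mathcal{A}$ on $(\tilde{\bm{G}},\bm{G}')$, viewed as a $(q,m,n,k+k',k')$ $\MSE$ instance. The forward direction is immediate: if $(\bm{A},\bm{B},\bm{T})$ is an $\MCPKP$ witness, i.e.\ $(\bm{T}^\top\bm{G}+\bm{G}'')(\bm{A}^\top\otimes\bm{B})=\bm{G}'$, then $\tilde{\bm{T}}:=\binom{\bm{T}}{\bm{I}_{k'}}\in\Fq^{(k+k')\times k'}$ satisfies $\tilde{\bm{T}}^\top\tilde{\bm{G}}=\bm{T}^\top\bm{G}+\bm{G}''$ and has rank $k'$ thanks to its $\bm{I}_{k'}$ block, so by Lemma~\ref{lemma2} the pair $(\bm{A},\bm{B})$ is an $\MSE$ solution for $(\tilde{\mathcal{C}},\mathcal{D})$, where $\mathcal{D}$ is the code generated by $\bm{G}'$. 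Hence the $\MSE$ instance built by $\mathcal{A}'$ is a YES instance and $\mathcal{A}$ returns a valid isometry with probability $\epsilon_1$.

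Next I would describe the post-processing that turns the isometry $(\bm{A}',\bm{B}')$ returned by $\mathcal{A}$ into an $\MCPKP$ witness. Since $(\bm{A}',\bm{B}')$ is an $\MSE$ solution, the rows of $\bm{G}'({\bm{A}'}^{-\top}\otimes{\bm{B}'}^{-1})$ all lie in $\tilde{\mathcal{C}}$, so (using that $\tilde{\bm{G}}$ has full rank) $\mathcal{A}'$ can solve by linear algebra for the unique $\bm{T}'\in\Fq^{k\times k'}$ and $\bm{V}'\in\Fq^{k'\times k'}$ with $\bm{G}'({\bm{A}'}^{-\top}\otimes{\bm{B}'}^{-1})={\bm{T}'}^\top\bm{G}+{\bm{V}'}^\top\bm{G}''$. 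If $\bm{V}'=c\,\bm{I}_{k'}$ for some $c\in\Fq^*$, rescaling yields a genuine $\MCPKP$ witness: a direct computation shows $\big((c^{-1}\bm{T}')^\top\bm{G}+\bm{G}''\big)\big((c\bm{A}')^\top\otimes\bm{B}'\big)=\bm{G}'$, with $c\bm{A}'\in\GL_m(q)$ and $\rank(c^{-1}\bm{T}')=\rank(\bm{T}')$, so $\mathcal{A}'$ outputs $(c\bm{A}',\bm{B}',c^{-1}\bm{T}')$; otherwise it declares failure.

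The step I expect to be the main obstacle is arguing that the recovered block $\bm{V}'$ is a nonzero scalar matrix: unlike in the full-dimension code-equivalence case, here the \emph{extra} left factor $\bm{V}'$ cannot in general be absorbed into $\bm{A}'$ or $\bm{B}'$, since those act on the $\Fq^{mn}$ factor and not on the $k'$ row indices. For the planted solution $(\bm{A},\bm{B},\bm{T})$ this block is $\bm{I}_{k'}$, and for any solution obtained from it through the unavoidable scalar ambiguity $(\bm{A},\bm{B})\mapsto(c_1\bm{A},c_2\bm{B})$ it is $(c_1c_2)^{-1}\bm{I}_{k'}$, so the post-processing succeeds on all of these. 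It is therefore enough that $\mathcal{A}$ cannot return a genuinely different solution with a non-scalar $\bm{V}'$ block; this is the case for the parameter ranges used, where these equivalence problems behave as one-to-one search problems, the solution of $(\tilde{\mathcal{C}},\mathcal{D})$ being unique up to scalars. Under this standard genericity hypothesis the post-processing never fails, $\mathcal{A}'$ succeeds whenever $\mathcal{A}$ does, and $\epsilon_2\ge\epsilon_1$; without it one still obtains $\epsilon_2\ge\epsilon_1\cdot\Pr[\mathcal{A}\text{ returns the planted solution}]$.
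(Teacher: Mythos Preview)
Your reduction is essentially the same as the paper's: form the augmented matrix $\tilde{\bm{G}}=\begin{bmatrix}\bm{G}\\ \bm{G}''\end{bmatrix}$ and feed $(\tilde{\bm{G}},\bm{G}')$ to the $\MSE$ solver. The paper works from the dual formulation (it first reconstructs a $\bm{G}''$ satisfying $\bm{G}''\bm{H}^\top=\bm{Y}$), while you start directly from the primal data $(\bm{G},\bm{G}',\bm{G}'')$ of Problem~\ref{problem3}; this is a cosmetic difference only.

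Where you go further than the paper is in the post-processing. The paper treats the lemma purely as a decision reduction: its Algorithm~$\mathcal{A}'$ simply forwards the bit $b$ returned by $\mathcal{A}$, and the proof only checks the forward implication (an $\MCPKP$ solution $(\bm{A},\bm{B},\bm{T})$ yields an $\MSE$ solution for $(\tilde{\bm{G}},\bm{G}')$ with the same isometry). It does not discuss recovering an $\MCPKP$ witness from an arbitrary $\MSE$ solution, nor does it argue the converse implication. Your analysis of the recovered block $\bm{V}'$ and the need for it to be scalar is a genuine observation that the paper omits; your resolution via the uniqueness-up-to-scalars hypothesis is consistent with the paper's own counting argument on the expected number of solutions. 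In short: same construction, but your write-up is more careful about what the $\MSE$ oracle actually hands back.
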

	
	\begin{proof}\hfill \vspace{-0.7cm}
		\begin{algorithm}[H]
			\caption{Algorithm $\mathcal{A}'$ to solve an $\MCPKP$ instance}
			\label{Reduction}
			\begin{algorithmic}[1]
				\REQUIRE $\bm{G}'$, $\bm{H}$, $\bm{Y}$ an $\MCPKP$ instance. 
				\ENSURE $1$ if the instance has a solution, $0$ else.
				\STATE Find $\bm{G}'' \in \Fq^{k' \times mn}$ such that $\bm{G}'' \bm{H}^\top = \bm{Y}$.
				\STATE Define $\bm{G}_{aug} := \begin{bmatrix}\bm{G}\\\bm{G}'' \end{bmatrix}$ and compute its dual matrix $\bm{H}_{aug} \in \Fq^{mn-k-k' \times mn}$.
				\STATE Run $\mathcal{A}$ on input $(\bm{G}',\bm{G}_{aug})$ and obtain a value $b$.
				\STATE return $b$.
			\end{algorithmic} 
		\end{algorithm}
		Let $\bm{G}', \bm{H}^\top$, and $\bm{Y} \ne \bm{0}$ be a $(q,m,n,k,k')$ $\MCPKP$ instance using the dual formulation. Let $\bm{G}'' \in \Fq^{k' \times mn}$ be such that $\bm{G}'' \bm{H}^\top = \bm{Y}$. Then, $\bm{G}'' = \bm{G}'(\bm{A}^{-\top}\otimes \bm{B}^{-1}) + \bm{G}_e$ for a matrix $\bm{G}_e = \bm{T}^\top \bm{G}$ with $\bm{T} \in \Fq^{k \times k'}$ (because $\bm{T}^\top \bm{G} \bm{H}^\top= \bm{0}$). Let $\bm{G}_{aug} := \begin{bmatrix}\bm{G}\\\bm{G}'' \end{bmatrix}$ be the augmented matrix. Then, the code $\bm{G}'$ and the code $\bm{G}_{aug}$ form an instance of the $\MSE$ problem with parameters $(q,m,n,k+k',k')$, using the very same isometry ($\bm{A},\bm{B}$), meaning Algorithm $\mathcal{A}'$ is correct.
	\end{proof}

	Finally, one crucial aspect of the problem is the number of solutions.
	We will consider the action of the isometries 
	
	\begin{align*}
		\star \text{ } : \text{} (\GL_m(q) \times \GL_n(q)) \times \mathcal{C}^{k}_{m \times n} &\mapsto \mathcal{C}^{k}_{m \times n}\\
		((\bm{A},\bm{B}),\C) &\mapsto \bm{A}\C\bm{B}
	\end{align*}
	
	While in the $\MCE$ case, it suffices to consider that the automorphism group of the above action group is trivial (up to scalar multiplication) to have a unique solution, for $\MSE$ or $\MCPKP$, some precisions are needed (we justify here with $\MSE$ but the same apply with $\MCPKP$). 
	
	We will here give an argument quite similar to what was done for PKP in \cite{SBC24}.
	Concretely, in $\MSE$, we are looking for matrices $\bm{A} \in \GL_m(q)$ and $\bm{B} \in \GL_n(q)$ such that $\bm{G}'\left( \bm{A}^\top\otimes \bm{B}\right)^{-1}$ generates a subspace of $\C$ of dimension $k'$. There are thus $\vert 
	\mathsf{PGL}_m(q)\vert \cdot\vert \mathsf{PGL}_n(q)\vert$ possible subspaces at most. There are also $\CG{k}{k'}$ subspaces of dimension $k'$ in the code generated by $\bm{G}$, and $\CG{mn}{k'}$ total subspaces of dimension $k'$. Then, assuming 
	$\bm{G}'\left(\bm{A}^\top\otimes \bm{B}\right)^{-1}$ behaves as a random subspace of 
	dimension $k'$ we get that the number of solutions on average is $$\vert 
	\mathsf{PGL}_m(q)\vert \cdot\vert \mathsf{PGL}_n(q)\vert \cdot \frac{\CG{k}{k'}}{\CG{mn}{k'}}$$
	
	As a result, we will take parameters where the above value is less than or equal to $1$, to get on average only one solution.

	We now explain the minimal subcode size that must be chosen for an $\MSE$ instance (and thus the minimal value of $k'$ for $\MCPKP$). Let $\D$ be a subcode of dimension $1$, i.e., $\D = \langle \bm{D}_1 \rangle$, and $\C$ a code of dimension $k$, i.e., $\C = \langle \bm{C}_1,\dots,\bm{C}_k \rangle$. Then, by taking as unknowns the matrices $\Tilde{\bm{A}}$ and $\Tilde{\bm{B}}$, and any matrix $\bm{C} \in \C$, the system $\Tilde{\bm{A}}\bm{D}_1 = \bm{C}\Tilde{\bm{B}}$ is clearly underdetermined as there are at most $mn$ independent linear equations for $m^2+n^2$ unknowns. By using simple Gaussian elimination, a basis of the solutions can be found easily. Given the large solution space, it is very likely that multiple solutions exist in which both matrices $\bm{A}$ and $\bm{B}$ are invertible, thereby increasing the expected number of solutions of the $\MSE$ instance. This means that the subcode must be taken with a dimension higher than $1$. The same reasoning can be applied with a subcode of dimension $2$: let  $\D = \langle \bm{D}_1, \bm{D}_2 \rangle$, and $ \bm{C}_1$ $\bm{C}_2$ two matrices of the bigger code $\C$. Then, the system 
	\begin{equation*}
		\begin{cases}\Tilde{\bm{A}}\bm{D}_1 = \bm{C}_1\Tilde{\bm{B}}\\\Tilde{\bm{A}}\bm{D}_2 = \bm{C}_2\Tilde{\bm{B}}\end{cases}
	\end{equation*} 
	can be underdetermined as there are \textit{at most} $2mn$ independent linear equations for $m^2+n^2$ unknowns, which results in the same process where several solutions can exist. This phenomenon does not appear when $k'\ge 3$, as such systems are then overdetermined.
	
	We give in the following section the MPC protocol for the signature, and we detail the attacks on the $\MSE$ problem in section \ref{sec:attaques}.
	
	\section{MPC modelings and application to TCitH and VOLEitH frameworks} \label{sec:protocol}
	
	We describe in this section the MPC-in-the-Head paradigm, and the variants of this paradigm we will consider. We only give a brief high-level overview of the techniques, as the frameworks are the same as all the previous work on the subject. However, we will detail the polynomial constraint checking protocol that will be used in our signature scheme.
	
	The MPC-in-the-Head paradigm, introduced in \cite{IKO}, is an extremely powerful tool to build proofs of knowledge. This paradigm relies on secure multi-party computations and is especially efficient for proving the knowledge of code-based problems. Essentially, an MPC-in-the-Head proof of knowledge works as follows:
	\begin{itemize}
		\item The prover shares his witness (his secret) in $N$ shares using an $(\ell+1,N)$ threshold sharing scheme. He commits to these $N$ shares and sends the commitments to the verifier.
		\item The prover performs the MPC protocol "in his head", and sends a commitment of the computations;
		\item The verifier asks for $\ell$ shares of the witness, as $\ell+1$ shares are necessary to retrieve it;
		\item The prover sends the shares of the $\ell$ parties. From them, the verifier can check the consistency of $\ell$ of the commitments.
	\end{itemize}
	
	The correctness is straightforward, while the zero-knowledge property is guaranteed because no more than $\ell$ shares of the witness are revealed. The soundness property is also guaranteed since a malicious prover will need to cheat on at least one share as he will not know the correct value of the secret. Finally, before performing the MPC protocol "in his head", the prover receives a challenge from the verifier that he will use in his computations so that he cannot cheat in the MPC protocol.
	
	This paradigm has known many improvements and optimizations over the years (e.g. \cite{limbo,KKW18,BN20,FR22,FR23} for a non-exhaustive list). We recall hereafter two recent frameworks: the \textit{Threshold-Computation-in-the-Head} from \cite{FR23}, and the \textit{VOLE-in-the-Head} from \cite{BBd+}.
	
	\subsection{The TCitH framework}
	
	The TCitH framework has been introduced in \cite{FR23}, which is an improvement of \cite{FR22}. The idea of this framework is to rely on Shamir's Secret Sharing instead of the usually used additive sharing. The main difference introduced by using Shamir's Secret Sharing consists in a better handling of the multiplication of shares. In fact, by using Shamir's Secret Sharing on two secrets $a$, $b \in \Fq$, it is well established that $\share{a}\cdot \share{b} = \share{a\cdot b}$ while it is not the case with additive secret sharing, thus making the handling of multiplications much easier. 
	
	As in \cite{FR23}, we will consider the framework with $\ell=1$, i.e., the shares are built by using a polynomial of degree $1$. We will only use this value of $\ell$ in this work.

	Furthermore, this framework described an MPC protocol for polynomial checking~\cite[Section 6]{FR23}, i.e, a protocol such that, given $m$ polynomial $f_1,\dots,f_m \in \Fq[x_1,\dots,x_n]$, one can check that a witness $\bm{x} \in \Fq^n$ is such that $f_1(\bm{x}) = \dots = f_m(\bm{x}) = 0$. In a nutshell, the protocol evaluates $f_j(\share{\bm{x}}_i)$ for all $j$ in $\oneto{m}$, and then computes a linear combination $\share{\alpha}_i= \sum_{j=1}^{m}\gamma_jf_j(\share{\bm{x}}_i)$. The obtained value $\alpha$ must be equal to $0$ if the witness is valid. 
	
	A straightforward example of this protocol is with the Multivariate Quadratic problem, which asks, given $m$ polynomials in $n$ variables $f_1,\dots,f_m$, to find a vector $\bm{x} \in \Fq^n$ such that $f_1(\bm{x}) = \dots = f_m(\bm{x}) = 0$. With this problem, the parties simply possess a share $\share{\bm{x}}_i$. Then, they can evaluate the polynomials, as Shamir's secret sharing scheme allows multiplications between shares. 
	
	This protocol has a false-positive probability of $\frac{1}{q}$ when each $\gamma_i$ is in $\Fq$. The idea of the proof of the false-positive probability is that when a prover does not know the witness, there is at least one $f_i$ such that $f_i(\bm{x}) \ne 0$ for the value of $\bm{x}$ the prover will use. Then, since all $\gamma_j$ are chosen uniformly random, the false-positive probability is easy to obtain if only one $f_i(\bm{x})$. If several $f_i(\bm{x}) \ne 0$, the linear combination is uniformly random, as, once again, all the $\gamma_j$ are uniformly random, which gives the $\frac{1}{q}$ probability. The MPC protocol is repeated $\rho$ times to reduce this probability, which brings the soundness of the TCitH scheme to (when the polynomials are of degree $d$ and $\ell=1$) \cite[Theorem 2]{FR23} $$\frac{1}{q}+\big(1-\frac{1}{q} \big) \cdot \frac{d}{N}.$$
	
	This MPC protocol is then turned into a proof of knowledge by applying the TCitH framework, which follows the same principles as the MPC-in-the-Head one:
	
	\begin{itemize}
		\item The prover shares his witness $\bm{x}$ in $N$ shares using Shamir's Secret Sharing, with a polynomial of degree one. He commits to these $N$ shares and sends the commitments to the verifier;
		\item The verifier sends the challenge $\gamma_1,\dots,\gamma_m \in \Fq$ to the prover;
		\item The prover performs the polynomial constraints checking protocol "in his head" using the challenge and sends a commitment of the shares of $\alpha$;
		\item The verifier asks for one share of $\bm{x}$, as only two of them are necessary to retrieve the witness;
		\item The prover sends one share of $\bm{x}$, as well as $d-1$ additional share of $\alpha$. From them, the verifier can check the consistency of the commitments and of the computations (only $d-1$ additional shares of $\alpha$ are needed, as the verifier already knows that $\alpha=0$ if the prover is honest).
	\end{itemize}
	
	The shares of the witness are generated using a pseudorandom generator as described in \cite[Section 3]{FR23}, which allows the prover to transmit the shares requested.  
	Finally, the proof of knowledge is transformed into a signature thanks to the well-known Fiat-Shamir transform: the protocol is repeated $\tau$ times, and the verifier is removed by sampling the challenges using the hash of the commitment. We do not describe in detail the proof of knowledge and the signature scheme, as the framework is very generic. Instead, we invite the reader to refer to \cite{FR23}, in particular for the proofs of soundness and zero-knowledge. The number of repetitions $\tau$ is chosen such that $\left( \frac{d}{N}\right)^\tau \le 2^{-\lambda}$, and $\rho$ such that $\frac{1}{q^\rho} \le 2^{-\lambda}$.

	As to the size of the signature, it consists in:
	\begin{itemize}
		\item A value $\Delta \bm{x}$, of size $|\bm{x}|$;
		\item The shares $\share{\alpha}_J$, for a set $J$ of size $d-1$, for the $\rho$ repetitions of the MPC protocol;
		\item A commitment of size $2\lambda$;
		\item A sibling path to reveal the seeds used in the construction of the shares, which is of size $\lambda \cdot \log_2(N)$.
	\end{itemize}
	which results in a size of $$\text{Size}_{\text{TCitH}} = 4\lambda + \tau \left( \underbrace{|\bm{x}|}_{\Delta\bm{x}}+ \underbrace{(d-1) \rho  \log_2(q)}_{\share{\alpha}_J}  + \underbrace{2\lambda}_{\mathsf{com}_I}+ \underbrace{\lambda \cdot \log_2(N)}_{\text{GGM Tree}} \right) $$
	
	\subsection{The VOLEitH framework}
	The VOLEitH framework has been introduced in \cite{BBd+}, and uses VOLE correlations. While this framework appeared before the TCitH one, \cite{FR23} showed that it can be expressed as an instantiation of the TCitH framework, by taking $\ell = 1$ and hiding the secret as the coefficient of degree $1$ in the polynomial.
	More precisely, while the TCitH framework shares a secret $\bm{x}$ with randomness $r$ as $P(X) = r\cdot X+\bm{x}$, the VOLEitH framework shares it as $P(X) = \bm{x} \cdot X + r$. With this construction, the same operations are possible, the only difference being the addition between a share $\share{\bm{x}_1}$ of degree $d_1$ and $\share{\bm{x}_2}$ of degree $d_2$ where $d_1 \ge d_2$, which requires to multiply $\share{\bm{x}_2}$ by $X^{d_1-d_2}$ (where $X$ is replaced by the evaluation point used in the sharing).
	
	Furthermore, the VOLEitH framework benefits an optimization: after committing to the $\tau \cdot N$ shares of $\bm{x}$, $\share{\bm{x}}_i^{j}$, $i \in \oneto{N}$ and $j \in \oneto{\tau}$, the prover can merge sets of $\tau$ sharings to lower the soundness error. Concretely, given a set $\share{\bm{x}}_{i_1}^{(1)},\dots,\share{\bm{x}}_{i_\tau}^{\tau}$, the prover can use a morphism $\phi$ from $\Fq^{\tau}$ to $\mathbb{F}_{q^\rho}$ where $q^\tau \ge 2^\lambda $ and $\rho \ge \tau$ to obtain a share $\share{\bm{x}}_{1+i_1 \cdot N+\dots + i_\tau \cdot N^{\tau-1}}$=$ \phi\left(({\share{\bm{x}}_{i_e}}^{(e)})_{e \in \oneto{\tau}}\right)$.  As a result, $\tau$ shares of the same value can be seen as only one share in a larger field (see \cite{FR23} and \cite{BBd+}), which reduces the soundness error to $\frac{d}{N^\tau}$. For this modification to work however, the $\tau$ shares that are used must be shares of the same value $\bm{x}$. This introduces a parameter, $B$, chosen such that $B\cdot q \ge 16$, and a correction vector $\bm{u}$ for $\tau-1$ rounds. We will use this construction in a "black-box" way to build our signature scheme, and instead describe only the MPC protocol used, i.e., the polynomial constraints that must be verified. We refer to \cite{BBd+} and \cite{FR23} for more details on the framework and its relation to TCitH. 
	
	Finally, when transforming this PoK into a signature, the number of repetitions $\tau$ is chosen such that $\frac{d}{N^\tau} \le 2^{-\lambda}$, and $\rho$ such that $\frac{1}{q^\rho} \le 2^{-\lambda}$. Overall, this gives a signature with the following size: \begin{align*}\text{Size}_{\text{VOLEitH}} =  &4\lambda + (\tau-1) \left( \underbrace{|\bm{x}| + \rho \log_2(q) + (\rho+B) \log_2(q)}_{\text{Corrections of the shares}}\right)  + \underbrace{(\rho + B) \log_2(q)}_{\share{\alpha'}_J}\\& + \tau\cdot \left( \underbrace{\lambda \cdot \log_2(N)}_{\text{GGM Tree}} +  \underbrace{2\lambda}_{\mathsf{com}_I}\right) + \underbrace{|\bm{x}|}_{\Delta\bm{x}}+ \underbrace{(d-1)\rho \log_2(q)}_{\share{\alpha}_J} \end{align*}
	
	Both frameworks also benefit from the optimization on GGM-trees from \cite{BBM+}, which reduces the cost of revealing the shares, replacing the value $\tau \cdot \lambda\cdot  \log_2(N)$ by $\lambda\cdot  T_{\text{open}}$. The optimization on GGM-trees also introduces proofs of work in the Fiat-Shamir transform, which allows the signer to use a parameter $w$, which corresponds to the number of zeros one must have in the last hash digest. The signer can then choose $\tau$ such that the security of the transform is only $\lambda - w$. Essentially, this optimization allows to have one less round in the Fiat-Shamir transform, due to having the additional margin of security $w$.
	
	\subsection{Polynomial constraints for the Matrix Code Permuted Kernel Problem}
	
	In both these frameworks, a crucial part that influences the signature size is the MPC protocol used. In particular, the part of the MPC protocol that has the greatest influence is the modeling used, i.e., how the witness is represented. We describe below the constraints for $\MCPKP$ that must be checked in both of the frameworks.
	
	The modeling for $\MCPKP$ is straightforward: the parties receive a share of $\bm{A}$ and $\bm{B}$, and they then perform the computation to verify the value of $\bm{Y}$, i.e, they want to verify that $\bm{G}'(\bm{A}^\top \otimes \bm{B})\bm{H}^\top = \bm{Y}$ (although Problem \ref{problem3} uses the notation with the inverses, we simply use $\bm{A}$ and $\bm{B}$ here, to lighten the notations). Doing so, the witness size does not depend on the dimension of the code and requires only shares of $\bm{A}$ and $\bm{B}$. This makes an efficient protocol with witness size $(n^2+m^2)\log_2(q)$. We describe the MPC protocol associated with this modeling in Figure \ref{fig:TCitHMPCMCPKP}. Finally, the invertibility of the matrices $\bm{A}$ and $\bm{B}$ does not need to be verified due to taking $\bm{Y}$ of rank $k'$ (which is the case when the generation of the instance is done honestly).
	
	In fact, assuming $\bm{G}'\left(\bm{A}^T\otimes \bm{B}\right)$ behaves as a random subspace of dimension $\le k'$, there are, on average, $$\frac{q^{m^2+n^2}}{(q-1)^2} \cdot \frac{\CG{k}{k'}}{\sum_{i=1}^{k'}\CG{mn}{i}} $$
	couples $\bm{A} \in \Fq^{m \times m},\bm{B}\in \Fq^{n \times n}$ such that $\bm{G}'\left(\bm{A}^T\otimes \bm{B}\right)\bm{H}^\top = \bm{Y}$ with $\bm{Y}$ of rank $k'$. We will always take parameters such that this quantity is negligible. Note that we use $\frac{q^{m^2+n^2}}{(q-1)^2}$ because if $(\bm{A},\bm{B})$ is a solution, so is $(\lambda \bm{A}, \lambda'\bm{B})$ for all $\lambda,\lambda' \in \Fq^\times$. Thus, no solution exists where $\bm{A}$ and $\bm{B}$ are not invertible, which avoids having to verify this property.

\begin{figure}[h!]
	\pcb[codesize=\scriptsize, mode=text, width=0.98\textwidth] { 
		\textbf{Public values:} An instance of the $\MCPKP$ problem, $\bm{H} \in \Fq^{mn-k\times mn}$, $\bm{G}' \in \Fq^{k' \times mn}$, $\bm{Y} \in \Fq^{k' \times mn-k}$. 
		%\mb{G et G' sont aussi des valeurs publiques?}
		\\[0.2\baselineskip]
		\textbf{Inputs:} Matrices $\bm{A} \in \Fq^{m \times m},\bm{B} \in \Fq^{n \times n}$ such that $\bm{G}'\left(\bm{A}^\top \otimes \bm{B} \right)\bm{H}^\top = \bm{Y}$. \\[0.2\baselineskip]
		\textbf{MPC protocol:} \\
		%\mb{Il n'y a plus d'indices ici, mais $\share{A}$ c'est bien ce qui était noté $\share{A}_i$ avant?}
		1. The party receives sharings $\left(\share{\bm{A}}_i,\share{\bm{B}}_i \right)$  where $deg(\share{\bm{A}}_i) = deg(\share{\bm{B}}_i) = 1$. \\
		2. The party receives a uniformly random sharing $\share{v}_i$ of $v=0$ in $\Fq$ and $deg(\share{v}_i) = 2$. 
		%\mb{Est-ce qu'il n'y a pas besoin aussi ici de partager $\share{0}$?} 
		\\
		3. The party receives random values $\gamma_1, \ldots, \gamma_{k'(mn-k)} \in \mathbb{F}_{q}$. 
		%\mb{Il manque quelque chose dans $(\Fq)$, une puissance $k'mn$?}
		\\
		4. The party computes $\share{\bm{\Tilde{G}}'}_i = \bm{G}'\left(\share{\bm{A}}_i^\top \otimes \share{\bm{B}}_i\right)\bm{H}^\top - \bm{Y}$. \\
		5. For all $j \in \oneto{k'(mn-k)}$: %\mb{$\Tilde{G}$ n'est pas définit.}
		\\
		\pcind $\diamond$ The party computes $\share{f_j}_i = \share{\vecrow{\bm{\Tilde{G}}'}_{j}}_i $. \\
		6. The party computes $\share{\alpha}_i = \share{v}_i + \sum\nolimits_{j = 1}^{k'(mn-k)} \gamma_j \cdot \share{f_j}_i$. \\
		7. The party opens $\share{\alpha}_i$ to get $\alpha$. \\
		8. The party outputs $\accept$ if and only if $\alpha = 0$.
	}
	\caption{\footnotesize{MPC protocol of degree $2$, verifying that a given input is a solution of an $\MCPKP$ instance for the $i$th party}} \label{fig:TCitHMPCMCPKP}
\end{figure}

	\begin{lemma}
		The false-positive probability of the protocol from Figure \ref{fig:TCitHMPCMCPKP} is $\frac{1}{q}$ for one repetition, and $\frac{1}{q^\rho}$ when repeated $\rho$ times.
	\end{lemma}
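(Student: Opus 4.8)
The plan is to argue exactly as in the general polynomial-constraint-checking protocol of~\cite[Section 6]{FR23}, which was recalled above, instantiated with the $k'(mn-k)$ polynomials $f_j$ coming from the entries of $\bm{\Tilde{G}}' = \bm{G}'(\bm{A}^\top\otimes\bm{B})\bm{H}^\top - \bm{Y}$, viewed as polynomials in the $m^2+n^2$ unknown entries of $\bm{A}$ and $\bm{B}$. First I would observe that each $f_j$ has degree $2$ (it is bilinear in the entries of $\bm{A}$ and $\bm{B}$), so that using degree-$1$ Shamir sharings for $\bm{A},\bm{B}$ and a degree-$2$ masking sharing $\share{v}$ of $v=0$, the share $\share{\alpha}_i = \share{v}_i + \sum_j \gamma_j \share{f_j}_i$ is a consistent degree-$2$ sharing whose secret is $\alpha = v + \sum_j \gamma_j f_j(\bm{A},\bm{B}) = \sum_j \gamma_j f_j(\bm{A},\bm{B})$. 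Hence if the input is a genuine solution, every $f_j$ vanishes and $\alpha=0$, giving perfect completeness; the content of the lemma is the converse bound on a dishonest input.

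Next I would do the standard case analysis on the witness actually used by the prover. If $(\bm{A},\bm{B})$ does not satisfy the constraints, then at least one $f_{j_0}(\bm{A},\bm{B}) \neq 0$. Conditioning on all the other $\gamma_j$, the quantity $\alpha$ is an affine function of $\gamma_{j_0}$ with nonzero leading coefficient $f_{j_0}(\bm{A},\bm{B})$; since $\gamma_{j_0}$ is drawn uniformly from $\Fq$ independently of everything else, $\Pr[\alpha = 0] = 1/q$. (Equivalently: the map $(\gamma_j)_j \mapsto \sum_j \gamma_j f_j(\bm{A},\bm{B})$ is a nonzero $\Fq$-linear form on $\Fq^{k'(mn-k)}$, so its kernel has density exactly $1/q$.) This gives false-positive probability $1/q$ for a single execution.

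For the $\rho$-fold repetition I would note that the protocol runs $\rho$ independent copies with freshly and independently sampled challenge vectors $(\gamma_j^{(1)}),\dots,(\gamma_j^{(\rho)})$ (and independent masks $\share{v}^{(t)}$), accepting only if $\alpha^{(1)} = \cdots = \alpha^{(\rho)} = 0$. Against a fixed dishonest witness the events $\{\alpha^{(t)}=0\}$ are mutually independent, each of probability $1/q$, so the overall false-positive probability is $1/q^\rho$. The only subtlety — and the one point I would be careful about — is to make sure the masking sharing $\share{v}$ of $v=0$ is what guarantees that $\share{\alpha}$ leaks nothing beyond $\alpha$ and, more importantly here, that the \emph{secret} of $\share{\alpha}$ is exactly $\sum_j\gamma_j f_j$ with no spurious dependence on the sharing randomness of $\bm{A},\bm{B}$; this is precisely why $v$ is taken uniformly random of degree $2$ rather than, say, degree $1$. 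Everything else is the verbatim argument of~\cite[Section 6]{FR23}, so I would simply cite it for the formal soundness/zero-knowledge bookkeeping and keep the proof to the two computations above.
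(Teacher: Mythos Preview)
Your proposal is correct and follows essentially the same approach as the paper, which simply cites \cite[Lemma~2]{FR23} and observes that if some $f_j(\bm{A},\bm{B})\neq 0$ then $\alpha=\sum_j\gamma_j f_j(\bm{A},\bm{B})$ is uniform in $\Fq$ over the choice of the $\gamma_j$. One small imprecision: the degree-$2$ mask $\share{v}$ is there purely for zero-knowledge (to rerandomize the higher-order coefficients of the product sharing), not to make the secret of $\share{\alpha}$ come out right---that secret is already $\sum_j\gamma_j f_j(\bm{A},\bm{B})$ because $v=0$---but this does not affect the false-positive computation.
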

	\begin{proof}
		The proof can be found in \cite[Lemma 2]{FR23}. Essentially, if there is at least one $f_i$ such that $f_i(\bm{x}) \ne 0$, then $\alpha$ is uniformly random in $\Fq$ as all the $\{\gamma_j\}_{j \in \oneto{k'(mn-k)}}$ are uniformly sampled.
	\end{proof}

	Due to being a protocol where the degree of the sharings is only $2$, $\tau$ is taken such that $\frac{2}{N^\tau} \le 2^{-\lambda+w}$ for VOLEitH, and such that $\left(\frac{2}{N}\right)^\tau \le 2^{-\lambda+w}$ for TCitH. Finally, we stress that the transformation of this protocol into a signature scheme is very generic and is not detailed in the purpose of conciseness.

	\begin{remark}
		An essential point of using $\MCPKP$ instead of $\MSE$ in the protocol is the following. Let $\bm{G}'\left( \bm{A}^\top \otimes \bm{B} \right)\bm{H}^\top= \bm{0}$ be the equation to be verified in the protocol. Then, a prover could easily cheat by picking $\bm{A}=\bm{B}=\bm{0}$. Furthermore, specializing one variable in $\bm{A}$ and $\bm{B}$ would not suffice, as many ways to cheat could appear. For instance, suppose $\bm{G}'$ is of the form $\begin{bmatrix}\bm{I}_{k'} & \bm{G}'' \end{bmatrix}$, where the $j$th column of $\bm{G}''$ is the vector $(1,1,\dots,1)^\top$ (the approach also works with a multiple of this vector). Then, a prover can easily cheat, by taking $\bm{A}$ to be the null matrix except on its first top left coefficient, i.e, $a_{1,1} = 1$, and $\bm{B}$ such that $b_{i,1} = 1$ for $i \in \oneto{k'}$ and $b_{k'+j,1} = -1$. In fact, doing so, $\bm{G}'\left( \bm{A}^\top \otimes \bm{B} \right) = \bm{0}$, resulting in a trivial solution for such a subcode. Although only an example, this cheating method illustrates the need to have an "inhomogeneous" instance of the problem. Thus, using $\MSE$ would result in the need to verify the subcode in another way: verifying that $\bm{G}'\left( \bm{A}^\top \otimes \bm{B} \right)$ is equal to $\bm{T}^\top \bm{G}$ for some matrix $\bm{T} \in \Fq^{k \times k'}$ of rank $k'$. The signature size would be greater, as we will see in Section \ref{sec:param}, due to having to transmit the matrix $\bm{T}$ in addition to $\bm{A}$ and $\bm{B}$.
	\end{remark}

	\subsection{Key generation}
	
	The key generation algorithm is straightforward: $\bm{H}$, $\bm{G'}$, $\bm{A}$, and $\bm{B}$ are sampled randomly, which allows to compute $\bm{Y}$. The procedure is described in Algorithm \ref{KeyGen}. 
	\begin{algorithm}[h!]
		\caption{$\mathsf{KeyGen}$ algorithm for a $\MCPKP$ instance}
		\label{KeyGen}
		\begin{algorithmic}[1]
			\REQUIRE Parameters $(q,m,n,k,k')$ of an $\MCPKP$ instance. 
			\ENSURE $\sk =  (\bm{A},\bm{B})$ and $\pk = (\bm{G},\bm{Y})$.
			\STATE $\bm{H} \sampler \Fq^{mn-k \times mn}$ with $\mathsf{rank}(\bm{H}) = mn-k$.
			\STATE $\bm{G}' \sampler \Fq^{k' \times mn}$
			\STATE $\bm{A} \sampler \GL_m(q)$ and $\bm{B} \sampler \GL_n(q)$.
			\STATE $\bm{Y} :=  \mathsf{RowReducedEchelonForm}\big(\bm{G}'\left(\bm{A}^\top \otimes \bm{B} \right)\bm{H}^\top\big)$.
			\IF{$\mathsf{Rank(\bm{Y})} < k'$}
			\STATE Go back to Step $1$
			\ENDIF
			\STATE return $\sk = (\bm{A},\bm{B})$ and $\pk = (\bm{H},\bm{G}',\bm{Y})$.
		\end{algorithmic} 
	\end{algorithm}
	
	The size of $\sk$ is simply the seed used for the sampling. For $\pk$, it is of size $\lambda + \log_2(q) \cdot k' \cdot (mn-k)$ as the matrix $\bm{Y}$ can be stored in a row reduced echelon form. We already see that thanks to the use of $\MCPKP$, the public key will be much smaller than if the code equivalence problem was considered.

	\section{The attacks on Matrix Subcode Equivalence} \label{sec:attaques}
	
	We now describe the attacks on the $\MSE$ problem as $\MCPKP$ is attacked by using the reduction to $\MSE$ (note that we disregard here the complexity of the reduction from $\MCPKP$ to $\MSE$). Since $\mathsf{MSE}$ and $\mathsf{MCE}$ are very similar, we will use the same attacks from code equivalence, and adapt them to the subcode case.
	
	\subsection{Attack through code equivalence}
	A first and naive attack is to simply try and guess the subcode taken in $\C$, and then solve it as an $\mathsf{MCE}$ instance. Since there are $\CG{k}{k'}$ possible subspaces, the complexity is
	$$ \CG{k}{k'} \mathsf{C}_{\MCE_{(q,m,n,k')}} \quad $$
	where $\mathsf{C}_{\MCE_{(q,m,n,k')}}$ is the complexity of solving an $\MCE$ instance with parameters $m,n,k'$.
	This is not efficient however: while the cost of solving the $\MCE$ instance is quite low as it is a much easier one, the quantity $\CG{k}{k'}$ is very large, even for small parameters, as it can be approximated to $q^{k'(k-k')}$.  ~(For instance, with $q=64$, $k'=3$, and $k=12$, this value is already at $2^{162}$, while our actual parameters are even bigger). This makes the attack quickly not practical and unusable. 
	
	\subsection{Algebraic attacks} \label{subsec:algatk}
	
	\paragraph{Naive modeling.} A first algebraic modeling is to simply consider $\bm{G'} = \bm{T}^\top\bm{G}(\bm{A}^\top\otimes \bm{B})$ with $\bm{T}$, $\bm{B}$, and $\bm{A}$ as unknowns. We then have $k'k + m^2 + n^2$ unknowns, and $k'mn$ equations that are (affine) trilinear.
	
	\paragraph{Hybrid approach.} Using the degree 2 modeling $\bm{G'}(\bm{I_m}\otimes \bm{B}^{-1})=\bm{T}^\top\bm{G}(\bm{A}^\top\otimes\bm{I_n})$, it is possible to do as in \cite{MEDS1}, and guess the $\alpha$ first columns of $\bm{A}^\top$. We will obtain $k'\alpha n$ linear equations in $n^2+kk'$ unknowns. This means that when $\alpha = \ceil{\frac{n^2+kk'}{k'n}}$, it is possible to solve the linear system uniquely. We obtain a complexity of $$\mathcal{O}\left( q^{m\ceil{\frac{n^2+kk'}{k'n}}}(n^2+kk')^{\omega} \right), \text{~~~~} n,k \rightarrow \infty$$ 
	
	When $k'$ is small, we see that this approach is much less effective than in the case of $\MCE$.
	
	It is also possible to do the same process with the rows of $\bm{T}^\top$: one can guess the $\alpha$ first rows of $\bm{T}^\top$, to obtain $\alpha mn$ linear equations, in $n^2+m^2$ variables.
	The complexity is similar: $$\mathcal{O}\left( q^{k\ceil{\frac{n^2+m^2}{mn}}}(n^2+m^2)^{\omega} \right), \text{~~~~} n,k \rightarrow \infty$$
	which will not be effective, since we will take $k$ quite large.

	\paragraph{Dual modeling.} Using the same modeling as used in \cite{MEDS1}, it is possible to remove the variables from $\bm{T}$. Let 
	$\Tilde{\bm{G'}} = \bm{G'}(\bm{A}^{-\top}\otimes \bm{B}^{-1})$
	where the unknowns are the coefficients of $\bm{A}^{-1}$ and $\bm{B}^{-1}$. Then, one easily sees that, if we note $\bm{H}$ the dual matrix of $\bm{G}$ (i.e, $\bm{G}\bm{H}^\top = \bm{0}$), then $$\bm{G'}(\bm{A}^{-\top}\otimes \bm{B}^{-1}) \bm{H}^\top = \bm{0}$$ 
	Since $\bm{H} \in \Fq^{(mn-k) \times mn}$, this system possesses $m^2+n^2$ unknowns, and $k'(mn-k)$ quadratic equations. A priori, this attack should be less effective than for $\MCE$, as we have the same number of unknowns, but much fewer equations.
	
	\paragraph{Trilinear form modeling.}
	Finally, we describe a last algebraic attack, which is inspired by \cite{MEDS2}. We use the notation $\vecrow{\bm{M}}$ for the vector formed by the concatenation of the rows of a matrix $\bm{M}$. Using the commutation matrices\footnote{The commutation matrix $\bm{K}_{p,q}$ is the permutation matrix of dimension $pq$ such that, for any matrices $\bm{A}\in\Fq^{p_1\times q_1}$ and $\bm{B}\in\Fq^{p_2\times q_2}$, we have
		$    \bm{K}_{p_1,p_2}^\top(\bm{A}\otimes\bm{B})\bm{K}_{q_1,q_2} = \bm{B}\otimes\bm{A}$.
	}  such that
	
	\begin{align*}
		(\bm{T}\otimes\bm{A}^\top\otimes \bm{B})=\bm{K}_{mn,k}^\top(\bm{A}^\top\otimes \bm{B} \otimes \bm{T})\bm{K}_{mn,k'}=\bm{K}_{n,km}^\top(\bm{B}\otimes \bm{T}\otimes \bm{A}^\top)\bm{K}_{n,k'm},
	\end{align*}
	we define $\bm{G_{A}}\in\Fq^{m\times nk}$ such that 
	
	$\vecrow{\bm{G_{A}}}=\vecrow{\bm{G}}\bm{K}_{mn,k}^\top$
	and $\bm{G'_{A}}\in\Fq^{m\times nk'}$ such that 
	
	$\vecrow{\bm{G'_A}} = \vecrow{\bm{G'}}\bm{K}_{mn,k'}^\top$. 
	The matrices $\bm{G_{B}}\in\Fq^{n\times km}$ and $\bm{G'_{B}}\in\Fq^{n\times k'm}$ are defined analogously with the commutation matrices $\bm{K}_{n,km}^\top$ and $\bm{K}_{n,k'm}^\top$. Then, the naive modeling rewrites in three equivalent sets of equations: 
	\begin{align}
		\vecrow{\bm{G'}}=\vecrow{\bm{G}}(\bm{T}\otimes\bm{A}^\top\otimes \bm{B}) & \Leftrightarrow  \bm{G'}=\bm{T}^\top \bm{G}(\bm{A}^\top\otimes \bm{B})\label{eq:TAB}\\
		\vecrow{\bm{G'_A}}=\vecrow{\bm{G_A}}(\bm{A}^\top\otimes \bm{B}\otimes\bm{T})& \Leftrightarrow  \bm{G'_A}=\bm{A} \bm{G_A}(\bm{B}\otimes \bm{T})\label{eq:ABT}\\
		\vecrow{\bm{G'_B}}=\vecrow{\bm{G_B}}(\bm{B}\otimes \bm{T}\otimes\bm{A}^\top)& \Leftrightarrow  \bm{G'_B}=\bm{B}^\top \bm{G_B}(\bm{T}\otimes \bm{A}^\top)\label{eq:BTA}
	\end{align}
	With these notation and using the dual matrix $\bm{H'_{A}}\in\Fq^{(nk'-m)\times nk'}$ (resp. $\bm{H'_{B}}\in\Fq^{(mk'-n)\times mk'}$) associated to $\bm{G'_{A}}$ (resp. $\bm{G'_{B}}$), we recover the $m(nk'-m)$ (resp. $n(mk'-n)$) linearly independent bilinear equations obtained in \cite{MEDS2} via trilinear forms:
	\begin{align*}
		0 &= \bm{G_A}(\bm{B}\otimes\bm{T})\bm{{H'}_A^\top}& m(nk'-m) \text{ equations in } \bm{B},\bm{T}, \\
		0 &= \bm{G_B}(\bm{T}\otimes\bm{A}^\top)\bm{{H'}_B}^\top& n(mk'-n) \text{ equations in } \bm{A},\bm{T}.
	\end{align*}
	
	However, as $\bm{T}$ is no longer invertible, we do not get bilinear equations in $\bm{A},\bm{B}$ as in \cite{MEDS2}. One could consider the $k'(nm-k)$ equations  $$0 = \bm{G'_T}(\bm{A}^{-1}\otimes\bm{B}^{-\top})\bm{{H}_T}^\top$$ by adding $m^2+n^2$ variables $A^{-1},B^{-1}$ which is not worth. We won't use it in our modeling.
	
	\paragraph{New algebraic modeling.} This coding point of view led us to a new set of trilinear equations, that are experimentally independent from the bilinear equations. For any matrix $\bm{M}$ with linearly independent rows, let $\rightinverse{\bm{M}}$ be a right inverse of $\bm{M}$, i.e. a matrix such that $\bm{M}\rightinverse{\bm{M}}=\bm{I}$. By expressing the fact that $\bm{A}$ (resp. $\bm{B} $) commutes with its inverse, we get the following equations:
	\begin{align}
		\eqref{eq:ABT} \Rightarrow    \bm{I_m}&=\bm{A} \underbrace{\bm{G_A}(\bm{B}\otimes \bm{T}) \rightinverse{\bm{G'_A}} }_{=\bm{A}^{-1}}& \Rightarrow
		\bm{I_m}&= \bm{G_A}(\bm{B}\otimes \bm{T}) \rightinverse{\bm{G'_A}} \bm{A}.\label{eq:m2}\\ 
		\eqref{eq:BTA} \Rightarrow  \bm{I_n}&=  \bm{B}^\top \bm{G_B}(\bm{T}\otimes \bm{A}^\top)\rightinverse{\bm{G'_B}} & \Rightarrow
		\bm{I_n}&=  \bm{G_B}(\bm{T}\otimes \bm{A}^\top)\rightinverse{\bm{G'_B}} \bm{B}^\top. \label{eq:n2}
	\end{align}
	The equations on the left are linear combinations of \eqref{eq:ABT} or \eqref{eq:BTA}, but the equations on the right are linearly independent from them. In practice, we get $n^2+m^2-1$ new equations that contain equations~\eqref{eq:TAB}--\eqref{eq:BTA}. $\bm{T}$'s right inverses are no use since they don't commute with $\bm{T}$. If moreover $\bm{T}$ is invertible, we get $k^2-1$ additional equations. Here we manage to use the invertibility of $\bm{A}$ and $\bm{B}$ without adding the equations $\bm{A}\bm{A}^{-1} -\bm{I_m}$ or $\bm{B}\bm{B}^{-1} -\bm{I_n}$ and thus adding new variables for $\bm{A}^{-1},\bm{B}^{-1}$ or expressing the inverse thanks to the comatrices which would lead to equations of degree $m$ or $n$.

	\paragraph{Experimental results.} Experimentally, the naive trilinear equations are linearly dependent from the new trilinear equations~\eqref{eq:m2} and~\eqref{eq:n2}, which are all but one linearly independent: we add $n^2+m^2-1$ linearly independent trilinear equations to the system (and $(k^2-1)$ additional ones if $k=k'$). Moreover, if $(\bm{A},\bm{B},\bm{T})$ is a solution to the system, then $(a\bm{A},b\bm{B},t\bm{T})$ is also a solution for any $a,b,t\in\Fq$ such that $abt=1$. This means that we can with high probability assume that $\bm{A}_{1,1}=1$ and $\bm{T}_{1,1}=1$ and get a unique solution to the system.
	
	Choosing subcode instead of code reduces the number of equations available in the modeling. Indeed, the number of quadratic equations decreases from $n(mk-n) + m(nk-m) + k(nm-k)$ to $n(mk'-n) + m(nk'-m)$ and the number of trilinear equations from the new modeling decreases from $n^2+m^2+k^2-2$ to $n^2+m^2-1$ when the number of variables decreases from $n^2+m^2+k^2$ to $n^2+m^2+kk'$. 
	For the small examples we were able to run with parameters $n=3,m=4,k=4,q=64$ the Gröbner basis computation took around 100 seconds for $k'=k$ and more than 10 hours for $k'=k-1$.
	
	\paragraph{Constant rate.} This case underlines the loss of equations in MSE compared to MCE. Indeed, in MSE the $k(nm-k)$ bihomogeneous equations in the $n^2+m^2$ $A$ and $B$ variables are missing. Assuming $k=Rnm$ for some $R$ in $[0,1]$ and $m=n$, we end up with $(R-R^2)n^4$ equations in $n^2+n^2$ variables. In order to linearise, the number of equations has to be compared to the number of monomials, $n^4$ in this bihomogeneous case. In bi-degree $(1,1)$ linearisation is impossible since $(R-R^2)n^4 < n^4$. Taking $d>1$ and looking in bi-degree $(d,1)$, we have $(R-R^2)\binom{n^2+d-2}{d}n^4$ equations for $\binom{n^2+d-1}{d}n^2$ monomials. The case of bi-degree $(1,d)$ is symmetrical. The ratio equations over monomials is $\frac{(R-R^2)d}{1-\frac{d-1}{n^2}}$. For $n>2$, $R=1/2$ and $d\geq 6 > 1+\frac{3}{1-\frac{4}{n^2}}$, there are more equations than monomials. Here $d$ is independant from $n$. Experimentally, no syzygy arise before $d=n$ and linearisation is possible. For the set of bihomogeneous equations in $A,T$ ($B,T$ resp.) variables existing in MSE, there are $n(mk'-n)$ ($m(nk'-m)$ resp.) equations in $m^2+kk'$  ($n^2+kk'$ resp.) variables and in constant rate $(k'-1)n^2$ equations for $n^2+Rk'n^2$ variables. In bi-degree $(d,1)$, there are $(k'-1)n^2\binom{n^2+d-2}{d-1}$ equations for $\binom{n^2+d-1}{d}Rk'n^2$ monomials intervening. In order to linearise, we need the ratio $\frac{(k'-1)n^2\binom{n^2+d-2}{d-1}}{\binom{n^2+d-1}{d}Rk'n^2} >1 \Leftrightarrow d > \frac{R(n^2-1)}{1-R-\frac{1}{k'}}$. Thus we need $d=\Theta(n^2)$ with no other constraint on $k'$ than $\frac{1}{k'}\ne 1-R$. Moreover, at such a big degree $d$, syzygies can arise letting even fewer equations to linearise. The case of bi-degree $(1,d)$ isn't symmetrical but analogous, we also end up with a constraint $d=\Theta(n^2)$. One could try to turn a MSE into a MCE by guessing some rows of $T$ which would leave a square matrix of variables assumed to be invertible. For $k'=k-1$, the attacker would have to guess $q^{k-1}$ values which is already exponential in $n$ when $k=\Theta(n)$. For now, MSE seems to be already far harder than MCE at constant rate. However, the equations $0 = \bm{G'_T}(\bm{A}^{-1}\otimes\bm{B}^{-\top})\bm{{H}_T}^\top$ we ignored in the trilinear form modeling (since they added $m^2+n^2$ new variables) are now of use and by themselves, enable to linearise for some MSE instances. Indeed, we have $k'(nm-k)=k'(1-R)n^2$ equations in bi-degree $(1,1)$ for $n^2+m^2$ variables. As previously, looking at the ratio between the number of equations in bi-degree $(d,1)$ over the number of monomials  $\frac{k'(1-R)n^2\binom{m^2+d-2}{d-1}}{n^2\binom{m^2+d-1}{d}}$  gives the following constraint on $d$ for linearisation in $(d,1)$: $d> \frac{n^2-1}{k'(1-R)-1}$ which requires $d$ to be a $\Theta(1)$ when $k'=\Theta(n^2)=\Theta(k)$.

	This example of constant rate strongly emphasizes the lack of equations in MSE compared to MCE when $k'=\Theta(1)$.

	\subsection{Leon algorithm}
	\subsubsection{Two-collision approach}
	In Hamming metric, the main algorithm to solve the equivalence problem is Leon's algorithm \cite{Leon}. The algorithm aims at building the codewords of small weight, and from them, recovering the permutation. This algorithm works since the permutation does not change the weight of the words. In the Matrix Code Equivalence problem, this algorithm is adapted to the rank metric \cite{MEDS1}, where finding codewords of weight $r$ corresponds to a $(q,m,n,k,r)$ MinRank instance. On parameters such as what is used for $\MSE$, the complexity of the MinRank instance is typically not very high, and will thus not be the determining factor. In the case of $\MCE$, the weight distribution between the two codes is identical and as a result the numbers of codewords generated from $\C$ and from $\D$ are the same. In the subcode equivalence however, the minimum weight of the two codes will not be the same. The lists will thus be unbalanced, which will lead to a difference in the complexity. In practice, we apply Algorithm \ref{LeonAlg}, which is the same attack as in \cite{MEDS1} and \cite{MEDS2}.
	\begin{algorithm}[h]
		\caption{Adaptation of Leon's algorithm for an $\MSE$ instance}
		\label{LeonAlg}
		\begin{algorithmic}[1]
			\REQUIRE An $\MSE$ instance $\bm{G} \in \Fq^{k \times mn},\bm{G}'\in \Fq^{k' \times mn}$. 
			\ENSURE If they exist, returns $\bm{A} \in \GL_m(q)$, $\bm{B} \in \GL_n(q)$ such that $\bm{G'}=\bm{T}^\top\bm{G}(\bm{A}^\top\otimes\bm{B})$ for some $\bm{T}\in\Fq^{k\times k'}$ of rank $k'$. Otherwise, returns $\perp$.
			
			\STATE Build two lists $\mathcal{L}_1$ and $\mathcal{L}_2$, of size $N_1$ and $N_2$, of matrices from $\C$ and $\D$ respectively, of rank $r$.
			\STATE Set $\Tilde{\bm{A}} \in \Fq[\bm{a}]^{m \times m}$, $\Tilde{\bm{B}} \in \Fq[\bm{b}]^{n \times n}$, $\bm{A}' \in \Fq[\bm{a}']^{m \times m}$, $\bm{B}' \in \Fq[\bm{b}']^{n \times n}$ where $\bm{a},\bm{b},\bm{a}',\bm{b}'$ are sets of unknowns.
			\REPEAT
			\STATE $\bm{C}_1 \sampler \mathcal{L}_1$ and $\bm{C}_2 \sampler \mathcal{L}_1$.
			\STATE $\bm{D}_1 \sampler \mathcal{L}_2$ and $\bm{D}_2 \sampler \mathcal{L}_2$.
			\STATE Use the system $\label{eqleon}
			\begin{cases}
				\Tilde{\bm{A}}\bm{C}_1 = \bm{D}_1\bm{B}'\\
				\Tilde{\bm{A}}\bm{C}_2 = \bm{D}_2\bm{B}'\\
				\bm{A}'\bm{D}_1 = \bm{C}_1\Tilde{\bm{B}}\\
				\bm{A}'\bm{D}_2 = \bm{C}_2\Tilde{\bm{B}}\\
			\end{cases}$ to express $\Tilde{\bm{A}}$ and $\Tilde{\bm{B}}$ with fewer variables.
			
			\STATE Use $\Tilde{\bm{A}}$ and $\Tilde{\bm{B}}$ to apply one of the algebraic attacks with fewer variables, which returns two matrices $\bm{A} \in \Fq^{m\times m}$ and $\bm{B} \in \Fq^{n \times n}$.
			\UNTIL{$\mathsf{rank}\left(\begin{matrix}\bm{G} \\ \bm{G}'\big(\bm{A}^\top \otimes \bm{B}) \end{matrix} \right) = k$ and $\bm{A} \in \GL_m(q)$ and $\bm{B} \in \GL_n(q)$ or all tuples $(\bm{C}_1,\bm{C}_2,\bm{D}_1,\bm{D}_2)$ have been used}
			\IF{$\mathsf{rank}\left(\begin{matrix}\bm{G} \\ \bm{G}'\big(\bm{A}^\top \otimes \bm{B}) \end{matrix} \right) = k$ and $\bm{A} \in \GL_m(q)$ and $\bm{B} \in \GL_n(q)$}
			\STATE Return $\bm{A},\bm{B}$.
			\ENDIF
			\STATE Return $\perp$
		\end{algorithmic} 
	\end{algorithm}
	
	The complexity is not straightforward however. The size of the list must be studied, and the solving of the system must be looked at as well.
	
	\paragraph{Solving of the system.} Since the matrices $\bm{C}_1,\bm{C}_2,\bm{D}_1,\bm{D}_2$ are not of full rank (they have rank $r$), we will not have enough independent equations to fully solve. One can obtain this way at most $2(mn-(n-r)(m-r))$ equations in $\Tilde{\bm{A}}$ and $\bm{B}'$, and the same number in $\bm{A}'$ and $\Tilde{\bm{B}}$. To explain this number of linear equations, consider the first line of the system:
	$$\Tilde{\bm{A}}\bm{C}_1 = \bm{D}_1\bm{B}' $$ where $\bm{C}_1 $ and $\bm{D}_1$ are two rank $r$ matrices. When putting $\bm{C}_1$ and $\bm{D}_1$ under row-reduced echelon form (for $\bm{D}_1$) and column-reduced echelon form (for $\bm{C}_1$), which corresponds to a multiplication by invertible matrices $\bm{D}_{\mathsf{ech}}$ on the left and $\bm{C}_{\mathsf{ech}}$ on the right, we obtain $$\bm{D}_{\mathsf{ech}}\Tilde{\bm{A}}\bm{C}_1\bm{C}_{\mathsf{ech}} = \bm{D}_{\mathsf{ech}}\bm{D}_1\bm{B}'\bm{C}_{\mathsf{ech}} $$
	where $\bm{C}_1\bm{C}_{\mathsf{ech}} = \begin{bmatrix}\bm{W} & \bm{0} \end{bmatrix}$ and $\bm{D}_{\mathsf{ech}}\bm{D}_1 = \begin{bmatrix} \bm{V} \\ \bm{0} \end{bmatrix}$ for some matrices $\bm{W} \in \Fq^{m \times r}$ and $\bm{V} \in \Fq^{r \times n}$. We then obtain, after putting all the terms on one side, $$ \begin{bmatrix}\bm{W}' & \bm{0} \end{bmatrix} - \begin{bmatrix} \bm{V}' \\ \bm{0} \end{bmatrix} = \bm{0}$$ for some matrices $\bm{W}' \in \Fq[\bm{a}]^{m \times r}$ and $\bm{V}' \in \Fq[\bm{b}']^{r \times n}$ where $\bm{a}$ and $\bm{b}'$ are the unknowns of $\Tilde{\bm{A}}$ and $\bm{B}'$, such that $ \begin{bmatrix}\bm{W}' & \bm{0} \end{bmatrix} $ $ \in \Fq[\bm{a}]^{m \times n}$ and   $\begin{bmatrix} \bm{V}' \\ \bm{0} \end{bmatrix} $ $ \in \Fq[\bm{b}']^{m \times n}$. Because of the submatrix of $\bm{0}$ in both matrices, this results in at most $mn-(n-r)(m-r)$ independent linear equations. Considering first the collision on $\bm{C}_1$ and $\bm{D}_1$ thus gives $mn-(n-r)(m-r)$ equations, and then doing the same on $\bm{C}_2$ and $\bm{D}_2$ gives $mn-(n-r)(m-r)$ at most as well. When considering the two collisions, we have at most  $2(mn-(n-r)(m-r))=2r(m+n-r)$ independent equations.
	
	Due to the rank of the matrices, it is obviously not possible to fully solve the system. However, there are enough equations to obtain a basis of the space of solutions of $\bm{A}$ and $\bm{B}^{-1}$ thanks to the two first lines of the system (i.e, the $mn-(n-r)(m-r)$ first equations), and a basis of the space of $\bm{A}^{-1}$ and $\bm{B}$ thanks to the two last lines (i.e, the $mn-(n-r)(m-r)$ last equations). It turns out one can find a solution space of $\bm{A}$ that does not depend on the variables from $\bm{B}^{-1}$, and the same goes when the considered variables are $\bm{B}$ and $\bm{A}^{-1}$. Thus, we obtain a space of solutions of $\bm{A}$ and $\bm{B}$ of dimension $(m-n)^2+2(m-r)(n-r)$. This reduces the number of variables that intervene in any of the algebraic modeling, making them easier to solve. However, the extent to which it reduces their complexity is still to be determined.
	
	\paragraph{The sizes of the lists.} To obtain the complexity of the attack, one needs to know the sizes of the lists, i.e., $N_1$ and $N_2$. First, we will note $C(r)$ and $C'(r)$ the expected number of codewords of rank $r$ in $\C$ and $\D$ respectively. 
	For that, we recall the well-known lemma:
	\begin{lemma}\label{nbmat}
		The number of matrices in $\Fq^{m \times n}$ of rank $r$ is
		$$M_{m,n}(r)= \prod_{i=0}^{r-1}\frac{(q^m-q^i)(q^{n}-q^i)}{q^r-q^i}$$
	\end{lemma}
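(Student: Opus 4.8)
The plan is to count rank-$r$ matrices through their \emph{rank factorization}. The key observation is that $\bm{M}\in\Fq^{m\times n}$ has rank $r$ if and only if $\bm{M}=\bm{U}\bm{V}$ for some $\bm{U}\in\Fq^{m\times r}$ of full column rank and some $\bm{V}\in\Fq^{r\times n}$ of full row rank: given $\bm{M}$, take the columns of $\bm{U}$ to be a basis of the column space of $\bm{M}$ and read off $\bm{V}$ from the (unique) expression of each column of $\bm{M}$ in that basis; conversely $\bm{U}\bm{V}$ has image $\bm{U}(\Fq^r)$, which is $r$-dimensional since $\bm{V}\colon\Fq^n\to\Fq^r$ is onto and $\bm{U}\colon\Fq^r\to\Fq^m$ is one-to-one. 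First I would count the full-column-rank matrices $\bm{U}$: choosing the columns one at a time so that each lies outside the span of the previous ones gives $\prod_{i=0}^{r-1}(q^m-q^i)$ choices. Symmetrically, there are $\prod_{i=0}^{r-1}(q^n-q^i)$ full-row-rank matrices $\bm{V}$.

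Next I would pin down the redundancy of this parametrization. If $\bm{M}=\bm{U}\bm{V}=\bm{U}'\bm{V}'$, then $\bm{U}$ and $\bm{U}'$ have the same column space (namely that of $\bm{M}$) and both are injective as maps $\Fq^r\to\Fq^m$, so there is a \emph{unique} $\bm{P}\in\GL_r(q)$ with $\bm{U}'=\bm{U}\bm{P}$, and then necessarily $\bm{V}'=\bm{P}^{-1}\bm{V}$; conversely $(\bm{U}\bm{P},\bm{P}^{-1}\bm{V})$ is a valid factorization for every $\bm{P}\in\GL_r(q)$. Hence $\GL_r(q)$ acts freely and transitively on the set of factorizations of a fixed rank-$r$ matrix, so the map $(\bm{U},\bm{V})\mapsto \bm{U}\bm{V}$ is exactly $|\GL_r(q)|$-to-one onto the set of rank-$r$ matrices, with $|\GL_r(q)|=\prod_{i=0}^{r-1}(q^r-q^i)$. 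Dividing yields
\[
M_{m,n}(r)=\frac{\prod_{i=0}^{r-1}(q^m-q^i)\cdot\prod_{i=0}^{r-1}(q^n-q^i)}{\prod_{i=0}^{r-1}(q^r-q^i)}=\prod_{i=0}^{r-1}\frac{(q^m-q^i)(q^n-q^i)}{q^r-q^i}.
\]

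An equivalent route, which I would mention as a sanity check, is to first choose the row space of $\bm{M}$ — there are $\CG{n}{r}$ of these $r$-dimensional subspaces of $\Fq^n$ — and then, identifying that subspace with $\Fq^r$, count the surjections $\Fq^m\twoheadrightarrow\Fq^r$, i.e.\ the full-column-rank matrices in $\Fq^{m\times r}$, of which there are $\prod_{i=0}^{r-1}(q^m-q^i)$; expanding $\CG{n}{r}$ gives the same product. There is no genuine obstacle in this proof; the only point that deserves an explicit line is the \emph{freeness} of the $\GL_r(q)$-action (equivalently, that $\bm{U}$, or the ordered basis of the column space, determines the change-of-basis matrix $\bm{P}$ uniquely), since everything else reduces to the elementary count of ordered linearly independent tuples in a finite vector space.
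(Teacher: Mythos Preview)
Your proof is correct and entirely standard: the rank-factorization count, the $\GL_r(q)$-action argument for the redundancy, and the alternative row-space route all check out. Note that the paper itself gives no proof of this lemma---it is merely recalled as ``well-known''---so there is nothing to compare your argument against; your write-up would in fact supply what the paper omits.
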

	
	Then, the expected number $C(r)$ of codewords of rank $r$ is simply the probability that a uniformly sampled matrix is of rank $r$, multiplied by the number of matrices in the code, i.e. $C(r)=M_{m,n}(r)/q^{mn-k}$. Since $\D$ is a subcode (and its dimension is small in practice), we can consider that one will build all the codewords of rank $r$ in $\D$, i.e, $N_2 = C'(r)=M_{m,n}(r)/q^{mn-k'}$. Note that it is possible to build fewer codewords, which we will detail further. Then, to obtain \underline{one} collision, one can either:
	\begin{itemize}
		\item sample $C(r) -N_2+1$ codewords of rank $r$ in $\C$. This assures that at least one collision is found;
		\item or sample $N_1$ codewords of rank $r$ in $\C$ with $N_1 \le C(r)-N_2$. We detail this case below.
	\end{itemize}
	Let $N_1 \le C(r)-N_2$. Then, one can obtain a collision with probability $c$ if  \[\frac{\binom{C(r)-N_2}{N_1}}{\binom{C(r)}{N_1}} = 1-c.\]
	
	This equation arises from the fact that the number of ways to sample $N_1$ elements from $C(r)$ possibilities is $\binom{C(r)}{N_1}$, and there are $\binom{C(r)-N_2}{N_1}$ possible samplings where no collision is found.

	Then, since $$\frac{\binom{C(r)-N_2}{N_1}}{\binom{C(r)}{N_1}} = \frac{(C(r)-N_2) \times \dots \times (C(r)-N_2-N_1+1)}{C(r)\times \dots \times (C(r)-N_1+1)} \ge \Big(\frac{C(r) - C'(r)}{C(r)}\Big)^{N_1}$$ and by taking $N_2 = C'(r)$, it is possible to get a lower bound of $N_1$, which is $N_1$ such that \[ \Big(\frac{C(r) - C'(r)}{C(r)}\Big)^{N_1} \le 1-c\]
	Finally, we obtain that $N_1$ must be such that \[N_1 \ge \frac{\log(1-c)}{\log(\frac{C(r)-C'(r)}{C(r)})}.\]
	Using the fact that $\log(1-x)\le -x$ when $x\in[0..1]$ (here, $x = \frac{C'(r)}{C(r)}$, where $C(r)$ is much larger than $C'(r)$), we easily obtain that 
	
	$$N_1 \ge \frac{- \log(1-c) \cdot C(r)}{C'(r)}$$
	and, by using the actual values of $C(r)=M_{m,n}(r)/q^{mn-k}$ and $C'(r)=M_{m,n}(r)/q^{mn-k'}$, we obtain that $$N_1 \ge -\log(1-c) \cdot q^{k-k'}.$$ 
	Then, since one has to compute all the possible pairs to obtain one collision, the complexity for one collision is $N_1 N_2 \mathsf{C_{SolveAlg}}$, as for each element in $\mathcal{L}_1\times \mathcal{L}_2$ we need to check if it is a real collision by solving the system, where $\mathsf{C_{SolveAlg}}$ is the complexity to solve the system given by any algebraic attack using the space of the solutions of $\bm{A}$ and $\bm{B}$. Since we want $2$ collisions, $N_1$ needs to be slightly higher (although negligible compared to $q^{k-k'}$), and we now need to check the compatibility of the system for each pair of elements in $\mathcal{L}_1\times \mathcal{L}_2$, resulting in a complexity proportional to $\binom{N_1N_2}2$. 
	From this analysis, we obtain the following result:
	\begin{lemma}
		Algorithm \ref{LeonAlg} finds a solution to the $\MSE$ instance in time at least $$ \mathcal{O}\left( \binom{N_1\cdot N_2}{2} (mn)^\omega \mathsf{C_{SolveAlg}} \right), \text{~~~~} n,k \rightarrow \infty $$ where $r$ is taken such that matrices of rank $r$ exists in the subcode $\D$. 
	\end{lemma}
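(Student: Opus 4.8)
The statement is essentially an aggregation of the three analyses that precede it (the cost of solving the linear system of Step~6, the sizes $N_1,N_2$ of the two lists, and the number of list-pairs the main loop must examine), so the plan is to assemble these pieces and check that nothing outside them dominates.

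First I would settle correctness, which is what makes the running-time claim meaningful. Because an isometry preserves the rank, if $(\bm{A},\bm{B})$ is the hidden solution then $\bm{C}\mapsto\bm{A}\bm{C}\bm{B}$ sends every rank-$r$ element of $\C$ to a rank-$r$ element of the subcode $\bm{A}\C\bm{B}$, hence of $\D$. Step~1 builds \emph{all} rank-$r$ codewords of $\D$, so $N_2=C'(r)=M_{m,n}(r)/q^{mn-k'}$ by Lemma~\ref{nbmat}, and this list is non-empty precisely when $r$ is a rank attained in $\D$, which is the hypothesis on $r$. Consequently every genuine collision $(\bm{C}_i,\bm{D}_i)$ (a pair with $\bm{D}_i=\bm{A}\bm{C}_i\bm{B}$) lies in $\mathcal{L}_1\times\mathcal{L}_2$ as soon as $\bm{C}_i\in\mathcal{L}_1$. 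Once the loop has drawn a $4$-tuple consisting of two such collisions, the true $(\bm{A},\bm{B})$ belongs to the solution space of the Step~6 system; Step~7 then runs an algebraic attack on the resulting parametrisation and outputs a candidate that passes the rank-$k$ and invertibility test terminating the loop. Hence the algorithm returns a valid answer as soon as it has inspected two genuine collisions, and the running time is governed by how many tuples it inspects before that happens.

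Next I would reuse the list-size bound. Taking $N_2=C'(r)$ and sampling $N_1\le C(r)-N_2$ rank-$r$ codewords of $\C$, the earlier computation via $\binom{C(r)-N_2}{N_1}/\binom{C(r)}{N_1}$ together with $\log(1-x)\le -x$ gives one collision with constant probability once $N_1\ge -\log(1-c)\,C(r)/C'(r)=-\log(1-c)\,q^{k-k'}$; inflating $N_1$ by a constant factor (absorbed in the $\mathcal{O}$ as $n,k\to\infty$) yields two collisions with constant probability, so $N_1=\Theta(q^{k-k'})$. The main loop picks a $4$-tuple $(\bm{C}_1,\bm{C}_2,\bm{D}_1,\bm{D}_2)$, i.e. an unordered pair of elements of $\mathcal{L}_1\times\mathcal{L}_2$, and must in general go through a constant fraction of the $\binom{N_1N_2}{2}$ such pairs; for each one it performs one Gaussian elimination on the Step~6 system, which has $\mathcal{O}(mn)$ equations (at most $2r(m+n-r)$, as shown) in $\mathcal{O}(m^2+n^2)$ unknowns, at cost $\mathcal{O}((mn)^\omega)$, then one call to the algebraic attack of Step~7 at cost $\mathsf{C_{SolveAlg}}$, then the polynomial consistency check. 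Multiplying, and noting that the list construction in Step~1 is a $(q,m,n,k,r)$ MinRank computation whose cost is negligible in the regimes considered, gives total cost $\mathcal{O}\!\left(\binom{N_1N_2}{2}(mn)^\omega\,\mathsf{C_{SolveAlg}}\right)$, which I would present as a \emph{lower} bound on the running time of Algorithm~\ref{LeonAlg}, since the two-collision strategy offers no way to avoid examining on the order of $\binom{N_1N_2}{2}$ tuples.

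The step I expect to be the main obstacle is making the two-collision bookkeeping rigorous: showing that the loop genuinely needs $\Theta\!\big(\binom{N_1N_2}{2}\big)$ tuples in the average case, that a constant-factor increase of $N_1$ over the one-collision threshold suffices to get two collisions with constant probability, and that every inspected pair---collision or not---can be processed within the same $\mathcal{O}((mn)^\omega\,\mathsf{C_{SolveAlg}})$ budget (for a non-collision the Step~6 system is inconsistent or its solution space contains no invertible pair, and this is caught inside the algebraic attack and the final test without extra cost). A minor related point is to justify that the solution-space dimension $(m-n)^2+2(m-r)(n-r)$ after two collisions is the generic value, so that $\mathsf{C_{SolveAlg}}$ is well defined as the cost of the chosen algebraic attack on that reduced set of variables.
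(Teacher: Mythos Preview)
Your proposal is correct and follows essentially the same approach as the paper: the lemma in the paper is not given a separate proof but is stated as a direct consequence of the two preceding paragraphs (``Solving of the system'' and ``The sizes of the lists''), and you have assembled exactly those ingredients---the per-tuple cost $(mn)^\omega\,\mathsf{C_{SolveAlg}}$, the list sizes $N_1=\Theta(q^{k-k'})$ and $N_2=C'(r)$, and the $\binom{N_1N_2}{2}$ pairs to examine for two collisions. Your write-up is in fact more careful than the paper's, since you make the correctness argument explicit and flag the need to justify the two-collision bookkeeping, both of which the paper leaves implicit.
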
 
	
	This is the same result obtained as for $\MCE$: in all the estimate, we will take $N_2 = C'(r)$ and $N_1=q^{k-k'}$, meaning $N_1N_2 = C(r)$, which matches \cite{MEDS1}. 
	However, due to the low dimension of the subcode, the weight repartition of the matrices changes: there will be much fewer matrices that are not of full rank. Compared to $\MCE$, this implies that the rank $r$ that must be taken in the algorithm will be higher. Finally, we consider that the attack is possible as long as $C'(r) \ge 2^{-\lambda}$, as one (or several) codeword of rank $r$ could exist in the subcode for some instances of the $\MSE$ problem.
	
	One should also note that it is possible to improve the attack by considering a simple fact: there are $(q-1)^2$ couples $(\bm{A},\bm{B})$ solutions, because $\bm{A}$ and $\lambda \bm{A}$ (resp. $\bm{B}$ and $\lambda \bm{B}$) for $\lambda \in \Fq$ are both solutions. This means that it is possible to reduce the size of $C'(r)$ by a factor $(q-1)$ (by removing the matrices which are the same up to a scalar) and the same goes for $C(r)$ (because then we have just one collision with smaller lists). Overall, this reduces the cost of the attack by a factor $(q-1)^2$.
	
	\subsubsection{One-collision approach}
	It is possible to improve the above attack, by using the one-collision approach from \cite{MEDS2}. The idea is to use only one collision between two codewords $\bm{C}_1$ and $\bm{D}_1$. Having a collision between the two codewords means $\bm{A}\bm{C}_1\bm{B}=\bm{D}_1$. The idea from \cite{MEDS2} is then to use the kernel of $\bm{C}_1$ and $\bm{D}_1$: since $\bm{C}_1$ and $\bm{D}_1$ are of rank $r$, their kernels are of dimension $(n-r)$ (resp. $(m-r)$ for the left kernel). Let $\bm{K} \in \Fq^{n \times (n-r)}$ be the matrix of the right kernel of $\bm{D}_1$. Then, we have that $\bm{A}\bm{C}_1\bm{B}\bm{K}=\bm{0}$. This means that $\bm{C}_1\bm{B}\bm{K} = \bm{0}$ (since $\bm{A}$ is invertible). This system possesses $r(n-r)$ independent equations (as $\Rank(\bm{C}_1)=r$) in  $n^2$ unknowns: compared to the previous modeling, this removed the variables in $\bm{A}^{-1}$, which is a better approach due to the lack of the second collision. This yields $r(n-r)$ linear equations in the variables of $\bm{B}$ and $r(m-r)$ linear equations in the variables of $\bm{A}$ by applying the same process to the left. As a result, we can reduce the number of variables of $\bm{A}$ to $m^2-rm+r^2$, and the number of variables of $\bm{B}$ to $n^2-nr+r^2$. It is then possible to apply the algebraic modeling with fewer variables to obtain the solution.  We sum up this attack in Algorithm \ref{alg:onecoll}.
	
	\begin{algorithm}[h]
		\caption{Adaptation of Leon's algorithm for an $\MSE$ instance}
		\label{alg:onecoll}
		\begin{algorithmic}[1]
			\REQUIRE An $\MSE$ instance $\bm{G} \in \Fq^{k \times mn},\bm{G}'\in \Fq^{k' \times mn}$.
			\ENSURE If they exist, returns $\bm{A} \in \GL_m(q)$, $\bm{B} \in \GL_n(q)$ such that $\bm{G'}=\bm{T}^\top\bm{G}(\bm{A}^\top\otimes\bm{B})$ for some $\bm{T}\in\Fq^{k\times k'}$ of rank $k'$. Otherwise, returns $\perp$.
			\STATE Build two lists $\mathcal{L}_1$ and $\mathcal{L}_2$, of size $N_1$ and $N_2$, of matrices from $\C$ and $\D$ respectively, of rank $r$.
			\STATE Set $\Tilde{\bm{A}} \in \Fq[\bm{a}]^{m \times m}$, $\Tilde{\bm{B}} \in \Fq[\bm{b}]^{n \times n}$ where $\bm{a},\bm{b}$ are sets of unknowns.
			\REPEAT
			\STATE $\bm{C}_1 \sampler \mathcal{L}_1$.
			\STATE $\bm{D}_1 \sampler \mathcal{L}_2$.
			\STATE Compute $\bm{K} \in \Fq^{n \times (n-r)}$ of rank $n-r$ such that $\bm{D}_1\bm{K} = \bm{0}$.
			\STATE Compute $\bm{K}' \in \Fq^{(m-r) \times m}$ of rank $m-r$ such that $\bm{K}'\bm{C}_1 = \bm{0}$.
			\STATE Use the system $\begin{cases}\bm{C}_1\Tilde{\bm{B}}\bm{K} = \bm{0}\\\bm{K}'\Tilde{\bm{A}}\bm{C}_1 = \bm{0}\end{cases}$ to express $\Tilde{\bm{A}}$ and $\Tilde{\bm{B}}$ with fewer variables.
			\STATE Use $\Tilde{\bm{A}}$ and $\Tilde{\bm{B}}$ to apply one of the algebraic attacks with fewer variables, which returns two matrices $\bm{A} \in \Fq^{m\times m}$ and $\bm{B} \in \Fq^{n \times n}$.
			\UNTIL{$\mathsf{rank}\left(\begin{matrix}\bm{G} \\ \bm{G}'\big(\bm{A}^\top \otimes \bm{B}) \end{matrix} \right) = k$ and $\bm{A} \in \GL_m(q)$ and $\bm{B} \in \GL_n(q)$ or all tuples $(\bm{C}_1,\bm{D}_1)$ have been used}
			\IF{$\mathsf{rank}\left(\begin{matrix}\bm{G} \\ \bm{G}'\big(\bm{A}^\top \otimes \bm{B}) \end{matrix} \right) = k$ and $\bm{A} \in \GL_m(q)$ and $\bm{B} \in \GL_n(q)$}
			\STATE Return $\bm{A},\bm{B}$.
			\ENDIF
			\STATE Return $\perp$
		\end{algorithmic} 
	\end{algorithm}
	
	From the previous analysis, we obtain the following lemma:
	
	\begin{lemma}The Algorithm \ref{alg:onecoll} finds a solution to the $\MSE$ instance in time at least $$ \mathcal{O} \left( N_1N_2(mn)^\omega\mathsf{C_{solve}} \right), \text{~~~~} n,k \rightarrow \infty$$ where $\mathsf{C_{solve}}$ is the cost of the algebraic modeling with fewer variables, and where $r$ is taken such that matrices of rank $r$ exists in the subcode $\D$. \end{lemma}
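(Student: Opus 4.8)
The plan is to treat the statement as a complexity bookkeeping over the \textsf{repeat} loop of Algorithm~\ref{alg:onecoll}, reusing the list-size analysis already carried out above. First I would recall the two facts that are in place: taking $N_2 = C'(r) = M_{m,n}(r)/q^{mn-k'}$ (all rank-$r$ codewords of $\D$, which exist by the hypothesis on $r$) and $N_1 = q^{k-k'}$, the product list $\mathcal{L}_1\times\mathcal{L}_2$ contains, with constant probability, a \emph{genuine} collision $(\bm{C}_1,\bm{D}_1)$, i.e.\ a pair with $\bm{A}\bm{C}_1\bm{B} = \bm{D}_1$ for the (essentially unique) solution $(\bm{A},\bm{B})$; and the one-collision reduction converts such a pair into a small linear system in the entries of $\bm{A}$ and $\bm{B}$ only.

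Then I would establish correctness of one loop iteration run on a genuine collision. Since $\bm{D}_1 = \bm{A}\bm{C}_1\bm{B}$ has rank $r$ and $\bm{A},\bm{B}$ are invertible, $\bm{C}_1\bm{B}$ has rank $r$; left-multiplying the identity $\bm{A}\bm{C}_1\bm{B}\bm{K}=\bm{0}$ (with $\bm{K}$ spanning the right kernel of $\bm{D}_1$) by $\bm{A}^{-1}$ yields $\bm{C}_1\bm{B}\bm{K}=\bm{0}$, so the true $\bm{B}$ lies in the solution space of $\bm{C}_1\widetilde{\bm{B}}\bm{K}=\bm{0}$, and symmetrically $\bm{A}$ lies in that of $\bm{K}'\widetilde{\bm{A}}\bm{C}_1=\bm{0}$. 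Using $\rank(\bm{C}_1)=r$, these systems carry $r(n-r)$ and $r(m-r)$ independent equations, so the free variables of $\bm{B}$ and of $\bm{A}$ drop to $n^2-nr+r^2$ and $m^2-rm+r^2$; substituting the resulting reduced parametrisations into one of the algebraic modelings of Section~\ref{subsec:algatk} returns $(\bm{A},\bm{B})$, which then passes the rank and invertibility tests of the \textsf{until} condition. Hence the loop stops with a valid output as soon as it processes a genuine collision, and conversely any pair passing those tests is a correct output by construction.

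Finally I would count the cost: per iteration, forming $\bm{K},\bm{K}'$ is a kernel computation on $m\times n$ matrices and solving the two reduced systems is Gaussian elimination in $\mathcal{O}(mn)$ unknowns, both bounded by $\mathcal{O}((mn)^\omega)$, after which the algebraic step costs $\mathsf{C_{solve}}$ by definition; so one iteration costs $\mathcal{O}((mn)^\omega\,\mathsf{C_{solve}})$. In the worst case the loop runs over all $N_1N_2$ pairs of $\mathcal{L}_1\times\mathcal{L}_2$, which multiplies out to the claimed $\mathcal{O}\!\left(N_1N_2(mn)^\omega\mathsf{C_{solve}}\right)$ as $n,k\to\infty$. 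I expect the only genuinely non-routine point to be the correctness step, namely checking that the reduced systems retain the true solution and have exactly the stated rank, hence the announced variable counts. This is precisely where the assumption $\rank(\bm{C}_1)=\rank(\bm{D}_1)=r$ enters, and it was essentially prepared in the text preceding the lemma, so in the write-up it reduces to a short rank argument together with a pointer to the list-size analysis.
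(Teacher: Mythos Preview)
Your proposal is correct and follows essentially the same approach as the paper: the paper does not give a separate proof for this lemma but simply states it as a consequence of the preceding discussion (``From the previous analysis, we obtain the following lemma''), and your write-up is precisely a clean elaboration of that analysis---the list-size argument carried over from the two-collision case, the kernel trick yielding $r(n-r)$ and $r(m-r)$ linear relations on $\bm{B}$ and $\bm{A}$, and the per-pair cost $(mn)^\omega\cdot\mathsf{C_{solve}}$ summed over $\mathcal{L}_1\times\mathcal{L}_2$. One small point of phrasing: you frame the $N_1N_2$ factor as a worst-case upper bound on the loop, whereas the lemma says ``at least''; to match the paper's intent (a lower bound on the attacker's work for parameter selection), it is cleaner to say that since only $\Theta(1)$ genuine collisions are expected among the $N_1N_2$ candidate pairs and the algorithm has no way to identify them without running the $(mn)^\omega\cdot\mathsf{C_{solve}}$ check, $\Omega(N_1N_2)$ iterations are needed in expectation.
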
 In the estimate, $r$ is taken such that $C'(r) \ge 2^{-\lambda}$ (as only one codeword of rank $r$ would be enough to perform this attack) and the cost of the attack is minimal. Once again, this is the same complexity that is given in \cite{MEDS2}, but $r$ is higher than $\MCE$ in practice, as codewords have a higher weight on average in a subcode.
	
	\subsection{QMLE}
	
	In \cite{RST22}, the authors described an attack to solve $\MCE$ using the fact that it is equivalent to the $\QMLE$ problem, which we recall below:
	\begin{problem}\label{QMLE} $\QMLE(q,k,N,\mathcal{F},\mathcal{P})$:\\
		Let $\F = (f_1, \dots, f_k) \in \Fq[x_1,\dots,x_N]^k$ and $\Pcal = (p_1, \dots, p_k)$ $\in \Fq[x_1,\dots,x_N]^k$. Determine if there exist $\bm{S} \in \GL_N(q)$ and $\bm{T}\in \GL_k(q)$ such that $$\Pcal(\bm{x})=\F(\bm{xS})\bm{T}$$
	\end{problem}
	
	In the case of the subcode however, the problem that interests us is when $\F$ and $\Pcal$ are not both in $\Fq[x_1,\dots,x_N]^k$. In fact, $\Pcal$ will be in $\Fq[x_1,\dots,x_N]^{k'}$, since $\bm{T}\in \Fq^{k \times k'}$: 
	
	This problem seems to be halfway between the Isomorphism of polynomials problem and the Morphism of Polynomials problem, where one requires both matrices invertible and the other does not require any of them to be invertible. We will call this problem $\QSMLE$, for Quadratic Sub Map Linear Equivalence.
	
	\begin{problem}\label{QSMLE} $\QSMLE(q,k,k',N,\mathcal{F},\mathcal{P})$:\\
		Let $\F = (f_1, \dots, f_k) \in \Fq[x_1,\dots,x_N]^k$ and $\Pcal = (p_1, \dots, p_k')$ $\in \Fq[x_1,\dots,x_N]^{k'}$. Determine if there exist $\bm{S} \in \GL_N(q)$ and $\bm{T}\in \Fq^{k\times k'}$ of rank $k'$ such that $$\Pcal(\bm{x})=\F(\bm{xS})\bm{T}$$
	\end{problem}
	
	\paragraph{From $\MSE$ to $\QSMLE$.}
	As for $\MCE$, we can obtain a $\QSMLE$ instance from an $\MSE$ one.
	\begin{lemma}
		Let $q,m,n,k,k'$ be positive integers with $k' < k$. Let $\mathcal{A}$ be an algorithm which solves a $(q,k,k',m+n)$ $\QSMLE$ instance with probability $\epsilon_{1}$. Then, there is an algorithm $\mathcal{A}'$ that solves a $(q,m,n,k,k')$ $\MSE$ instance with probability $\epsilon_{2}$ where $\epsilon_2 \ge \epsilon_1$
	\end{lemma}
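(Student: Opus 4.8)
\emph{Proof plan.} The plan is to transport the $\MCE$/$\QMLE$ correspondence of \cite{RST22} to the subcode setting, the only change being that the change-of-basis matrix becomes a rectangular full-rank matrix rather than an invertible one. First I would attach to the $\MSE$ instance a system of bilinear forms: fix bases $\C=\langle\bm{C}_1,\dots,\bm{C}_k\rangle$ and $\D=\langle\bm{D}_1,\dots,\bm{D}_{k'}\rangle$ with $\bm{C}_i,\bm{D}_j\in\Fq^{m\times n}$ (the rows of $\bm{G}$, $\bm{G}'$ read back through $\vecrow{\cdot}^{-1}$), set $f_i(\bm{x},\bm{y})=\bm{x}\bm{C}_i\bm{y}^\top$ and $p_j(\bm{x},\bm{y})=\bm{x}\bm{D}_j\bm{y}^\top$ in $\Fq[x_1,\dots,x_m,y_1,\dots,y_n]$, i.e. in $\Fq[z_1,\dots,z_{m+n}]$ after concatenating the variables into $\bm{z}=(\bm{x},\bm{y})$, and let $\F=(f_1,\dots,f_k)$, $\Pcal=(p_1,\dots,p_{k'})$. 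This produces, in polynomial time, a $\QSMLE(q,k,k',m+n,\F,\Pcal)$ instance.

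Next I would show that the $\MSE$ instance and the constructed $\QSMLE$ instance are simultaneously satisfiable. For the direct implication, Lemma \ref{lemma2} provides $\bm{A}\in\GL_m(q)$, $\bm{B}\in\GL_n(q)$ and $\bm{T}\in\Fq^{k\times k'}$ of rank $k'$ with $\bm{D}_j=\bm{A}\big(\sum_i T_{ij}\bm{C}_i\big)\bm{B}$ for all $j$; hence $p_j(\bm{x},\bm{y})=\bm{x}\bm{A}\big(\sum_i T_{ij}\bm{C}_i\big)\bm{B}\bm{y}^\top=\sum_i T_{ij}\,f_i(\bm{x}\bm{A},\,\bm{y}\bm{B}^\top)$, which is exactly $\Pcal(\bm{z})=\F(\bm{z}\bm{S})\bm{T}$ for the block-diagonal matrix $\bm{S}=\left(\begin{smallmatrix}\bm{A}&\bm{0}\\\bm{0}&\bm{B}^\top\end{smallmatrix}\right)\in\GL_{m+n}(q)$ and the same $\bm{T}$; so $(\bm{S},\bm{T})$ solves the $\QSMLE$ instance. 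For the converse, given a solution $(\bm{S},\bm{T})$ of the $\QSMLE$ instance I would argue as in \cite{RST22}: since each $f_i$ is bilinear with respect to the $(\bm{x},\bm{y})$-splitting (no $x_ix_j$ or $y_iy_j$ monomials), the matrix $\bm{S}$ realising the equivalence must respect that splitting — under the standing genericity assumption of the paper, that a random matrix code has no automorphism beyond scalars (and, when $m=n$, up to the transpose isometry, handled exactly as in $\MCE$) — so $\bm{S}$ is block-diagonal, $\bm{S}=\left(\begin{smallmatrix}\bm{A}&\bm{0}\\\bm{0}&\bm{B}^\top\end{smallmatrix}\right)$; reading off $\bm{A},\bm{B}$ and using that $\bm{T}$ has rank $k'$ yields $\D\subseteq\bm{A}\C\bm{B}$, a solution of the $\MSE$ instance.

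Finally, $\mathcal{A}'$ is: from $(\bm{G},\bm{G}')$ build $(\F,\Pcal)$, run $\mathcal{A}$ on it, translate the returned $(\bm{S},\bm{T})$ back to $(\bm{A},\bm{B})$ as above (if this fails, report no solution); the back-check $\D\subseteq\bm{A}\C\bm{B}$ is mere linear algebra by Lemma \ref{lemma2}. Every valid $\QSMLE$ solution gives a valid $\MSE$ solution, so $\mathcal{A}'$ succeeds whenever $\mathcal{A}$ does, giving $\epsilon_2\ge\epsilon_1$; in the purely decisional reading $\mathcal{A}'$ just returns $\mathcal{A}$'s bit, and the two instances being simultaneously satisfiable gives $\epsilon_2=\epsilon_1$. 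The main obstacle is the converse direction of the middle step: making rigorous that a $\QSMLE$ solution must respect the bipartite variable structure. Unlike the $\MCE/\QMLE$ case, where $\bm{T}$ is invertible and the whole span $\langle f_i\circ\bm{S}\rangle$ is forced to consist of split-bilinear forms, here only a $k'$-dimensional subspace is forced, so pinning down $\bm{S}$ genuinely relies on the triviality-of-automorphisms hypothesis already used in the paper for counting solutions; this is the step I would treat with the most care.
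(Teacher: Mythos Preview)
Your construction is exactly the paper's: attach to the bases of $\C$ and $\D$ the bilinear forms $f_i(\bm{x},\bm{y})=\bm{x}\bm{C}_i\bm{y}^\top$ and $p_j(\bm{x},\bm{y})=\bm{x}\bm{D}_j\bm{y}^\top$, feed $(\F,\Pcal)$ to the $\QSMLE$ solver, and read off $(\bm{A},\bm{B})$ from the block-diagonal structure of the returned $\bm{S}$. The forward implication (an $\MSE$ solution yields the block-diagonal $\QSMLE$ solution) is handled identically in both.

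Where you go further than the paper is the converse step. The paper simply writes ``By definition of the $\MSE$ problem, the solution of the $\QSMLE$ instance will be of the form $\bm{S}=\left(\begin{smallmatrix}\bm{A}&\bm{0}\\\bm{0}&\bm{B}^\top\end{smallmatrix}\right)$'' and concludes; it does not argue why an arbitrary $\QSMLE$ solver cannot return a non-block-diagonal $\bm{S}$. You correctly isolate this as the delicate point and, moreover, observe that it is genuinely harder here than in the $\MCE$/$\QMLE$ correspondence of \cite{RST22}: with $\bm{T}$ invertible one gets that \emph{every} $f_i\circ\bm{S}$ lies in the span of the bilinear $p_j$, which forces the block structure of $\bm{S}$; with $\bm{T}$ merely full-rank only a $k'$-dimensional subspace of $\langle f_i\circ\bm{S}\rangle$ is constrained to be split-bilinear, so a genericity hypothesis on $\C$ is indeed what carries the argument. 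That diagnosis is sound. One small caveat: the hypothesis you invoke (``triviality of automorphisms'' as used for counting solutions) concerns the action of $\GL_m\times\GL_n$ on $\C$, whereas what is needed here is that no non-block-diagonal element of $\GL_{m+n}$ sends a $k'$-dimensional subspace of $\langle f_i\rangle$ into the split-bilinear locus; these are related but not literally the same genericity statement, so if you want to make this fully rigorous you should state the required property explicitly rather than pointing to the automorphism count.
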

	\begin{proof}
		To reduce an instance from $\MSE$ to $\QSMLE$, one proceeds exactly as in \cite{RST22}, by setting $\bm{x}=\begin{pmatrix*}x_1&\ldots & x_n &x_{m+1} & \ldots & x_{n+m}\end{pmatrix*}$ and
		
		\begin{center}
			\begin{align*}
				p_{i}(\bm{x}) & = \begin{pmatrix*}x_1 & \ldots & x_m\end{pmatrix*}\bm{D}^{(i)}\begin{pmatrix*}x_{m+1} & \ldots & x_{m+n}\end{pmatrix*}^\top \\
				f_i(\bm{x}) & = \begin{pmatrix*}x_1 & \ldots & x_m\end{pmatrix*}\bm{C}^{(i)}\begin{pmatrix*}x_{m+1} & \ldots & x_{m+n}\end{pmatrix*}^\top
			\end{align*}
		\end{center}
		
		where $\D = \langle \bm{D}^{(1)},\dots,\bm{D}^{(k')} \rangle$, and where $\C = \langle \bm{C}^{(1)},\dots,\bm{C}^{(k)} \rangle$ are the codes of the $\MSE$ instance.
		By definition of the $\MSE$ problem, the solution of the $\QSMLE$ instance will be of the form $\bm{S} = \begin{bmatrix}
			\bm{A}  & \bm{0} \\
			\bm{0} & \bm{B}^\top 
		\end{bmatrix}$. Thus, solving the $\QSMLE$ instance directly yields the solutions of the $\MSE$ one.
	\end{proof}
	
	We will denote with $\iQSMLE$ the \textit{inhomogeneous} variant of $\QSMLE$ and $\hQSMLE$ and \textit{homogeneous} variant, i.e.,
	
	\begin{problem}\label{iQSMLE} $\iQSMLE(q,k,k',N,\mathcal{F},\mathcal{P})$:\\
		Let $\F = (f_1, \dots, f_k) \in \Fq[x_1,\dots,x_N]^k$ and $\Pcal = (p_1, \dots, p_k')$ $\in \Fq[x_1,\dots,x_N]^{k'}$ two tuples of inhomogeneous polynomials. Determine if there exist $\bm{S} \in \GL_N(q)$ and $\bm{T}\in \Fq^{k\times k'}$ of rank $k'$ such that $$\Pcal(\bm{x})=\F(\bm{xS})\bm{T}$$
	\end{problem}
	
	\begin{problem}\label{hQSMLE} $\hQSMLE(q,k,k',N,\mathcal{F},\mathcal{P})$:\\
		Let $\F = (f_1, \dots, f_k) \in \Fq[x_1,\dots,x_N]^k$ and $\Pcal = (p_1, \dots, p_k')$ $\in \Fq[x_1,\dots,x_N]^{k'}$ two tuples of homogeneous polynomials. Determine if there exist $\bm{S} \in \GL_N(q)$ and $\bm{T}\in \Fq^{k\times k'}$ of rank $k'$ such that $$\Pcal(\bm{x})=\F(\bm{xS})\bm{T}$$
	\end{problem}
	
	We will see later that the inhomogeneous instances are easier than the homogeneous ones. It is thus beneficial to attack the problem by transforming a homogeneous instance into an inhomogeneous one.
	
	\paragraph{General solver for $\QSMLE$.} We now recall the principle of the first attack from \cite{RST22}:
	\begin{itemize}
		\item Build two lists $\mathcal{L}_1,\mathcal{L}_2\subseteq \Fq^{n+m}$ of size $N_1$ and $N_2$ respectively;
		\item For each pair of elements $(\bm{a},\bm{b}) \in \mathcal{L}_1 \times \mathcal{L}_2$, compute $\F(\bm{x}+\bm{a})$ and $\Pcal(\bm{x}+\bm{b})$, and find an isometry between them by solving $\iQSMLE$;
		\item Check if the isometry corresponds to a solution of the $\MSE$ instance.
	\end{itemize}
	If one manages to find $(\bm{a},\bm{b})$ such that $\bm{b} = \bm{aS}$, then $\F((\bm{x}+\bm{a})\bm{S})\bm{T}=\F(\bm{x}\bm{S}+\bm{aS})\bm{T} = \Pcal(\bm{x}+\bm{aS})=\Pcal(\bm{x}+\bm{b})$. To find such $(\bm{a},\bm{b})$, one can sample one $\bm{a} \in \Fq^{m+n}$, and then solve the inhomogeneous instance for every $\bm{b} \in \Fq^{m+n}$ (using a solver $ \mathsf{Solve}_{\mathsf{\iQSMLE}}$). The inhomogeneous instance will be solvable if $\bm{b} = \bm{aS}$. We sum up this process in Algorithm \ref{alg:QMLEalg}.
	
	To reduce the cost of the attacks, \cite{bouillaguet} and \cite{RST22} use $\mathcal{D}_{\bm{a}}(\F) = \F(\bm{x}+\bm{a})-\F(\bm{x})-\F(\bm{a})$, which is a linear application (it is easy to see that $\mathcal{D}_a(f_i)(\bm{x}) = (x_1,\dots,x_m)\bm{C}^{(i)}(a_{m+1}\dots a_{m+n})^\top+ (a_1,\dots,a_m)\bm{C}^{(i)}(x_{m+1}\dots x_{m+n})^\top$, which is linear in $\bm{x})$). Then, they look at the dimensions of the kernel of this linear application and keep only the elements $\bm{a}$ such that $\dim(\mathsf{ker}(\mathcal{D}_{\bm{a}}(\F))) = 1$, and proceeding in the same way with $\Pcal$, which reduces the sizes of the lists. Adapting this invariant is not possible for $\QSMLE$, as the dimension of the kernel $\mathcal{D}_{\bm{a}}(\F)$ is not an invariant, as $\F$ is multiplied by a non-invertible matrix.
	
	\begin{algorithm}[h!]
		\caption{General solver for the $\QSMLE$ problem}
		\label{alg:QMLEalg}
		\begin{algorithmic}[1]
			\REQUIRE An $\MSE$ instance $\bm{G} \in \Fq^{k \times mn},\bm{G}'\in \Fq^{k' \times mn}$. 
			\ENSURE $\bm{A} \in \GL_m(q)$, $\bm{B} \in \GL_n(q)$.
			\STATE Set $\bm{A} \leftarrow \bm{0}$ and $\bm{B} \leftarrow \bm{0}$.
			\STATE $\bm{a} \sampler \Fq^{m+n}$.
			\STATE $E = \{ \}$.
			\REPEAT
			\STATE $\bm{b} \sampler \Fq^{m+n} \setminus{E}$.
			\STATE $E = E \cup {\bm{b}}$.
			\STATE $(\bm{A},\bm{B}) \sampler \mathsf{Solve}_{\mathsf{\iQSMLE}}\left( \F(\bm{x}+\bm{a}), \Pcal(\bm{x}+\bm{b})\right)$.
			\UNTIL{$\mathsf{rank}\left(\begin{matrix}\bm{G} \\ \bm{G}'\big(\bm{A}^\top \otimes \bm{B}) \end{matrix} \right) = k$ and $\bm{A} \in \GL_m(q)$ and $\bm{B} \in \GL_n(q)$}
			\STATE Return $\bm{A},\bm{B}$.
		\end{algorithmic} 
	\end{algorithm}
	
	Since there are $(q-1)^2$ values of $\bm{S}$ possible (we remind that there are $(q-1)^2$ solutions to the $\MSE$ instance, because multiplication by a scalar keeps the solution valid), one needs to test around only $q^{m+n-2}$ values of $\bm{b}$. We thus obtain the following lemma:
	\begin{lemma}
		The Algorithm \ref{alg:QMLEalg} finds a solution to the $\MSE$ problem in time at least $$(q^{m+n-2}-n_{\mathsf{sol}}) \mathsf{C}_{\iQSMLE} +  n_{\mathsf{sol}}\mathsf{C}_{\iQSMLE} (mn)^\omega $$ where $n_{\mathsf{sol}}$ is the number of queries for which $\mathsf{Solve}_{\mathsf{\iQSMLE}}\left( \F(\bm{x}+\bm{a}), \Pcal(\bm{x}+\bm{b}) \right)$ returns $(\bm{A},\bm{B})$ instead of $\perp$, and $\mathsf{C}_{\iQSMLE}$ is the cost of running the $\mathsf{Solve}_{\mathsf{\iQSMLE}}$ algorithm.
	\end{lemma}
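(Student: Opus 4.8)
The plan is to bound the (expected) running time of Algorithm~\ref{alg:QMLEalg} as the product of the number of iterations of its \textbf{repeat} loop and the cost of one iteration, distinguishing the iterations on which $\mathsf{Solve}_{\mathsf{\iQSMLE}}$ returns $\perp$ from those on which it returns a pair $(\bm{A},\bm{B})$. First I would analyse a single iteration: it draws a fresh $\bm{b}\in\Fq^{m+n}\setminus E$, runs $\mathsf{Solve}_{\mathsf{\iQSMLE}}\big(\F(\bm{x}+\bm{a}),\Pcal(\bm{x}+\bm{b})\big)$ at cost $\mathsf{C}_{\iQSMLE}$, and only when the output is a genuine pair does it evaluate the stopping test, i.e. it forms $\bm{G}'(\bm{A}^\top\otimes\bm{B})$, stacks it below $\bm{G}$, and checks the rank of the stacked matrix together with the invertibility of $\bm{A}$ and $\bm{B}$; all of this is a linear-algebra computation of cost $O((mn)^\omega)$. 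Hence an iteration returning $\perp$ costs $\mathsf{C}_{\iQSMLE}$, while an iteration returning a candidate costs $\mathsf{C}_{\iQSMLE}+O((mn)^\omega)$, which I would bound by $\mathsf{C}_{\iQSMLE}\,(mn)^\omega$.

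Next I would count the iterations, reusing the observation made just before the statement. For a fixed $\bm{a}$, the shifted instance $\big(\F(\bm{x}+\bm{a}),\Pcal(\bm{x}+\bm{b})\big)$ becomes solvable exactly when $\bm{b}=\bm{aS}$ for one of the solutions $\bm{S}$ of the underlying $\MSE$ instance, and under the genericity assumption on the parameters adopted earlier the number of such solutions is $(q-1)^2$ (obtained from one essentially-unique block-diagonal solution by scaling $\bm{A}$ and $\bm{B}$ independently), so only about $(q-1)^2$ of the $q^{m+n}$ possible vectors $\bm{b}$ are ``good''. Since $\bm{b}$ is drawn uniformly without replacement from $\Fq^{m+n}\setminus E$, the loop exits on average after about $q^{m+n}/(q-1)^2\approx q^{m+n-2}$ draws. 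By definition of $n_{\mathsf{sol}}$, exactly $n_{\mathsf{sol}}$ of those $\approx q^{m+n-2}$ iterations are the ones on which $\mathsf{Solve}_{\mathsf{\iQSMLE}}$ returned a pair (and hence triggered the extra $(mn)^\omega$ test), while the remaining $q^{m+n-2}-n_{\mathsf{sol}}$ returned $\perp$.

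Multiplying the per-iteration costs by these counts gives $(q^{m+n-2}-n_{\mathsf{sol}})\,\mathsf{C}_{\iQSMLE}+n_{\mathsf{sol}}\,\mathsf{C}_{\iQSMLE}\,(mn)^\omega$, which is the claimed estimate. The delicate point — and the reason $n_{\mathsf{sol}}$ is carried as an explicit parameter rather than computed — is the step equating the number of ``good'' values of $\bm{b}$ with $(q-1)^2$: it relies on the instance really having the expected $(q-1)^2$ solutions and on the heuristic inhomogeneous solver both succeeding on those $\bm{b}$ and not returning spurious non-$\perp$ answers on the others. The quantity $n_{\mathsf{sol}}$ absorbs precisely such spurious successes, so the bound is best read as an estimate of the expected running time over the random choices of $\bm{a}$ and of the sequence of $\bm{b}$'s.
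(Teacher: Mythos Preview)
Your argument is correct and matches the paper's own justification. In the paper the lemma is not given a separate proof; it is stated as following directly from the sentence preceding it (``Since there are $(q-1)^2$ values of $\bm{S}$ possible \ldots\ one needs to test around only $q^{m+n-2}$ values of $\bm{b}$'') together with the remark after it that $n_{\mathsf{sol}}$ is expected to equal $(q-1)^2$. Your write-up is simply a more explicit unpacking of that same reasoning: count iterations as $q^{m+n}/(q-1)^2\approx q^{m+n-2}$, split them according to whether the inhomogeneous solver returns $\perp$ or a candidate, and charge the extra $(mn)^\omega$ rank/invertibility test only to the latter.
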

	
	\begin{remark}
		In the above complexity, $n_{\mathsf{sol}}$ is expected to be only $(q-1)^2$, as this corresponds to the number of solutions $(\bm{A},\bm{B})$. This can be explained mainly by the fact that there is no reason for the inhomogeneous instance to have a solution if $\bm{b} \ne \bm{aS}$. Thus, very few ranks of matrices actually need to be computed. 
	\end{remark}
	
	\paragraph{Bilinear solver.} As to the second algorithm proposed in \cite{RST22}, it aims at matching elements where the kernel is of positive dimension. The difference is that elements are built by sampling them in $\Fq^{m}$ instead of $\Fq^{m+n}$, making direct use of the fact that $\F$ and $\Pcal$ are $k$ (or $k'$)-dimensional bilinear forms. In what follows, we will thus consider either $\F(\bm{x})$ with $\bm{x} \in \Fq^{m+n}$ or $\F(\bm{x},\bm{y})$ with $\bm{x} \in \Fq^{m}$ and $\bm{y} \in \Fq^{n}$ (and the same for $\Pcal$). The algorithm uses the fact that $\Pcal(\cdot,\bm{b})$ is linear for any $\bm{b} \in \Fq^{n}$, making it easy to compute its kernel. 
	
	For $\MCE$, this consideration results in a complexity in $q^{\min(m,n,k)}$. However, in the case of $\MSE$, once again, this complexity changes greatly due to the subcode. In fact, we have a different lemma than \cite[Lemma 35 4-5]{RST22}: there is no equality between $\operatorname{Ker}\left( \F({\bm{aA}},\bm{y} \bm{B}^\top) \right)$ and $\operatorname{Ker}\left( \Pcal(\bm{a},\bm{y}) \right)$. In particular, we have the following lemma:
	
	\begin{lemma}
		Let $\bm{a} \in \Fq^m$, and $(\bm{A},\bm{B},\bm{T},\F,\Pcal)$ be from a $\QSMLE$ instance. Then, $\operatorname{Ker}\left( \F({\bm{aA}},\bm{y} \bm{B}^\top) \right) \subset \operatorname{Ker}\left( \Pcal(\bm{a},\bm{y}) \right)$
	\end{lemma}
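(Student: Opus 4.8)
The plan is to unwind the bilinear reformulation of the $\QSMLE$ relation; once the block structure of the hidden map is in place, the inclusion drops out from the fact that $\bm{T}$ multiplies on the right. Recall from the reduction above that, for an instance coming from $\MSE$, the hidden linear map has block form $\bm{S} = \begin{bmatrix}\bm{A} & \bm{0}\\ \bm{0} & \bm{B}^\top\end{bmatrix}$, so that, splitting an input $\bm{x}$ of $\F$ (and of $\Pcal$) as a pair $(\bm{x}_1,\bm{x}_2)\in\Fq^m\times\Fq^n$, the defining relation $\Pcal(\bm{x}) = \F(\bm{x}\bm{S})\bm{T}$ becomes $\Pcal(\bm{x}_1,\bm{x}_2) = \F(\bm{x}_1\bm{A},\,\bm{x}_2\bm{B}^\top)\,\bm{T}$, valid identically in $(\bm{x}_1,\bm{x}_2)$.

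First I would fix $\bm{a}\in\Fq^m$ and specialise $\bm{x}_1 = \bm{a}$, $\bm{x}_2 = \bm{y}$. For a fixed first argument the maps $\F(\bm{a}\bm{A},\cdot)$ and $\Pcal(\bm{a},\cdot)$ are $\Fq$-linear (the $f_i$ and $p_j$ are bilinear forms), and $\bm{y}\mapsto\bm{y}\bm{B}^\top$ is linear, so $L_1:\bm{y}\mapsto\F(\bm{a}\bm{A},\bm{y}\bm{B}^\top)\in\Fq^k$ and $L_2:\bm{y}\mapsto\Pcal(\bm{a},\bm{y})\in\Fq^{k'}$ are both linear maps on $\Fq^n$, and by definition $\operatorname{Ker}\left(\F(\bm{a}\bm{A},\bm{y}\bm{B}^\top)\right)=\operatorname{Ker}(L_1)$ and $\operatorname{Ker}\left(\Pcal(\bm{a},\bm{y})\right)=\operatorname{Ker}(L_2)$. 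Evaluating the identity from the previous paragraph at first argument $\bm{a}$ gives $L_2(\bm{y}) = L_1(\bm{y})\,\bm{T}$ for every $\bm{y}\in\Fq^n$. Consequently, if $L_1(\bm{y}) = \bm{0}$ then $L_2(\bm{y}) = L_1(\bm{y})\bm{T} = \bm{0}$, i.e. $\operatorname{Ker}(L_1)\subseteq\operatorname{Ker}(L_2)$, which is exactly the claimed inclusion.

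The argument has no genuine obstacle; the only thing one must get right is the block form of $\bm{S}$, everything else being a one-line computation. It is worth stressing, though, why the inclusion does not upgrade to an equality, in contrast with the $\MCE$ case \cite[Lemma 35]{RST22}: there $k=k'$ and $\bm{T}\in\GL_k(q)$, so $L_1(\bm{y}) = L_2(\bm{y})\bm{T}^{-1}$ and the two kernels coincide. Here $\rank(\bm{T}) = k' < k$, so the left kernel $\{\bm{v}\in\Fq^k : \bm{v}\bm{T} = \bm{0}\}$ has positive dimension $k-k'$; any $\bm{y}$ with $L_1(\bm{y})$ a nonzero element of this left kernel lies in $\operatorname{Ker}(L_2)\setminus\operatorname{Ker}(L_1)$. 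This is precisely why the dimension of $\operatorname{Ker}(L_1)$ is no longer controlled by that of $\operatorname{Ker}(L_2)$, and hence why the kernel-dimension invariant used in \cite{bouillaguet} and \cite{RST22} cannot be carried over to $\QSMLE$.
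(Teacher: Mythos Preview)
Your proof is correct and follows essentially the same approach as the paper: both specialise the defining relation $\Pcal(\bm{a},\bm{y})=\F(\bm{aA},\bm{yB}^\top)\bm{T}$ and observe that $\F(\bm{aA},\bm{yB}^\top)=\bm{0}$ forces $\Pcal(\bm{a},\bm{y})=\bm{0}$, while the non-invertibility of $\bm{T}$ blocks the converse. Your version is simply more explicit in naming the linear maps $L_1,L_2$ and in spelling out the left-kernel obstruction to equality.
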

	
	\begin{proof}
		We recall that we have that by definition of $\QSMLE$, $\Pcal(\bm{a},\bm{y})=\F(\bm{aA},\bm{yB}^\top)\bm{T}$ where $\bm{T} \in \Fq^{k\times k'}$. 
		It is then straightforward that $\F(\bm{aA},\bm{yB^\top}) = \bm{0}$ for some fixed $\bm{a} \in \Fq^{m}$ does imply $\Pcal(\bm{a},\bm{y}) = \bm{0}$. However,  the reciprocal is not true as $\Pcal(\bm{a},\bm{y}) = \F(\bm{aA},\bm{yB^\top})\bm{T}$ and $\bm{T}$ is not invertible.
	\end{proof}
	
	This lemma implies that to obtain a working algorithm similar to \cite{RST22}, matching the elements will be more difficult. In fact, this strict inclusion means it is necessary to sample elements using one element of the kernel of $\F$, and all the elements from the one from $\Pcal$.
	
	More precisely, the algorithm asks to:
	\begin{itemize}
		\item Build a set $\Tilde{P}_b = \{ \bm{b} \in \Fq^{n} ~|~ \operatorname{dim}\operatorname{Ker} \Pcal(\cdot,\bm{b}) > 0  \}$;
		\item Build the set $\Tilde{P}_a =\{\bm{a} \in \Fq^{m} ~|~ \operatorname{dim}\operatorname{Ker} \Pcal(\bm{a},\cdot) > 0  \} $ from the set $\Tilde{P}_b$ (because these two sets are related, see \cite{RST22});
		\item Build one element $(\bm{a},\bm{b})$ in $\Tilde{F}_a \times \Tilde{F}_b$ (where these sets are defined similarly). Then, sample an element $(\bm{a}',\bm{b}')$ in $\Tilde{P}_a \times \Tilde{P}_b$ and solve the $\iQSMLE$ instance $\left( \F(\bm{x}+\bm{a},\bm{y}+\bm{b}), \Pcal(\bm{x}+\bm{a}',\bm{y}+\bm{b}')\right)$ until a solution is found.
	\end{itemize}
	
	The complexity is then determined only by the expected cardinal of the sets: $|\Tilde{F}_a| = \frac{M_{k,n}(n-1)q^m}{q^{kn}}$, $|\Tilde{F}_b| = \frac{M_{k,m}(m-1)q^n}{q^{km}}$, $|\Tilde{P}_a| = q^{m}$, $|\Tilde{P}_b| = q^{n}$ (as $k' < n$, $\operatorname{Ker} \Pcal(\cdot,\bm{b})$ is always non-empty, and the same for $\operatorname{Ker} \Pcal(\bm{a},\cdot)$). Furthermore, contrary to \cite{RST22}, it is not possible to reduce the sizes of the sets as much (by considering an exact dimension of the kernel for instance since it is non-invariant).
	
	The only possibility to reduce the size of the sets is by considering the sets of zeros $\bm{F}^{0} = \{\bm{x} \in \Fq^{m+n} ~|~ \F(\bm{x}) = \bm{0} \}$ and $\bm{P}^{0} = \{\bm{x} \in \Fq^{m+n} ~|~ \Pcal(\bm{x}) = \bm{0} \}$. Since only one element of $\left( \Tilde{F}_a \times \Tilde{F}_b \right) \cap \bm{F}^{0}$ is sampled, only the size of the set $\left( \Tilde{P}_a \times \Tilde{P}_b \right) \cap \bm{P}^{0}$ is of interest. The cardinal of $\Tilde{P}_a \times \Tilde{P}_b$ is $q^{m+n}$, and when using only zeroes, it is of size at least $q^{m+n-k'}$ (because the set of zeroes for $p_1$ is at least $q^{n+m-1}$, and the Grassmann formula gives us the intersection with the set of zeroes of the quadratic forms).
	
	As in the first algorithm, there are $(q-1)^2$ possibilities for $\left((\bm{a},\bm{b}),(\bm{a}',\bm{b}')\right)$ that gives a collision, allowing us to obtain the complexity given in Lemma \ref{lem:qmlealg2}. Finally, since $|\Tilde{F}_a| \times |\Tilde{F}_b|$ is equal to $\frac{M_{k,n}(n-1)q^m}{q^{kn}} \times \frac{M_{k,m}(m-1)q^n}{q^{km}} \approx q^{2m+2n-2k-2}$, if $2k\ge2n+2m$ then the algorithm will not work on average, due to the absence of elements in $\Tilde{F}_a \times \Tilde{F}_b$.

	\begin{algorithm}[h]
		\caption{A bilinear solver for the $\QSMLE$ problem}
		\label{alg:QMLEalg2}
		\begin{algorithmic}[1]
			\REQUIRE An $\MSE$ instance $\bm{G} \in \Fq^{k \times mn},\bm{G}'\in \Fq^{k' \times mn}$. 
			\ENSURE $\bm{A} \in \GL_m(q)$, $\bm{B} \in \GL_n(q)$.
			\STATE Build $\Tilde{P}_b = \{ \bm{b} \in \Fq^{n} ~|~ \operatorname{dim}\operatorname{Ker} \Pcal(\cdot,\bm{b}) > 0  \}$.
			\STATE Build $\Tilde{P}_a =\{\bm{a} \in \Fq^{m} ~|~ \operatorname{dim}\operatorname{Ker} \Pcal(\bm{a},\cdot) > 0  \} $ from $\Tilde{P}_b$.
			\STATE Build $\Tilde{F}_b = \{ \bm{b} \in \Fq^{n} ~|~ \operatorname{dim}\operatorname{Ker} \F(\cdot,\bm{b}) > 0  \}$.
			\STATE Build $\Tilde{F}_a =\{\bm{a} \in \Fq^{m} ~|~ \operatorname{dim}\operatorname{Ker} \F(\bm{a},\cdot) > 0  \} $ from $\Tilde{F}_b$.
			\STATE Set $\bm{A} \leftarrow \bm{0}$ and $\bm{B} \leftarrow \bm{0}$.
			\STATE $\bm{a} \sampler \Tilde{F}_{a} \times \Tilde{F}_{b}$.
			\REPEAT
			\STATE $\bm{b} \sampler \Tilde{P}_{a} \times \Tilde{P}_{b} \setminus E$.
			\STATE $E = E \cup \{ \bm{b} \}$.
			\STATE $(\bm{A},\bm{B}) \sampler \mathsf{Solve}_{\mathsf{\iQSMLE}}\left( \F(\bm{x}+\bm{a}), \Pcal(\bm{x}+\bm{b})\right)$. 
			\UNTIL{$\mathsf{rank}\left(\begin{matrix}\bm{G} \\ \bm{G}'\big(\bm{A}^\top \otimes \bm{B}) \end{matrix} \right) = k$ and $\bm{A} \in \GL_m(q)$ and $\bm{B} \in \GL_n(q)$}
			\STATE Return $\bm{A},\bm{B}$.
		\end{algorithmic} 
	\end{algorithm}

	\begin{lemma}\label{lem:qmlealg2}
		When the sets $\Tilde{F}_a$, $\Tilde{F}_b$, $\Tilde{P}_a $, $\Tilde{P}_b$ are non-empty, the Algorithm \ref{alg:QMLEalg2} finds a solution to the $\QSMLE$ problem in time at least $$(q^{m+n-k'-2}-n_{\mathsf{sol}}) \mathsf{C}_{\iQSMLE} +  n_{\mathsf{sol}}\mathsf{C}_{\iQSMLE} (mn)^\omega $$ where $n_{\mathsf{sol}}$ is the number of queries for which $\mathsf{Solve}_{\mathsf{\iQSMLE}}\left( \F(\bm{x}+\bm{a}), \Pcal(\bm{x}+\bm{b}) \right)$ returns $(\bm{A},\bm{B})$ instead of $\perp$, and $\mathsf{C}_{\iQSMLE}$ is the cost of running the $\mathsf{Solve}_{\mathsf{\iQSMLE}}$ algorithm.
	\end{lemma}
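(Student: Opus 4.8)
The plan is to break the argument into three pieces: a correctness/termination step that reuses the $\MSE\to\QSMLE$ reduction, a combinatorial step showing that the collision partner of the fixed sample $\bm{a}$ survives the restriction of the search for $\bm{b}$ to $(\widetilde{P}_a\times\widetilde{P}_b)\cap\bm{P}^{0}$, and a counting step that lower-bounds the size of that set. Throughout I assume the instance has a solution (the lemma is used to estimate an attack cost), and I keep the heuristic already in force for Algorithm~\ref{alg:QMLEalg}, namely that $\mathsf{Solve}_{\iQSMLE}$ returns $\perp$ on inputs that are not genuine collisions.

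\paragraph{Correctness and termination.}
First I would note that, by the reduction recalled above, every solution of a $\QSMLE$ instance obtained from an $\MSE$ instance has block shape $\bm{S}=\begin{bmatrix}\bm{A}&\bm{0}\\\bm{0}&\bm{B}^\top\end{bmatrix}$ with $\bm{A}\in\GL_m(q)$, $\bm{B}\in\GL_n(q)$. Hence it suffices to check that the loop's exit condition certifies a genuine solution: if $\bm{A}\in\GL_m(q)$, $\bm{B}\in\GL_n(q)$ and $\rank\left(\begin{smallmatrix}\bm{G}\\\bm{G}'(\bm{A}^\top\otimes\bm{B})\end{smallmatrix}\right)=k$, then by Lemma~\ref{lemma2} the rows of $\bm{G}'(\bm{A}^\top\otimes\bm{B})$ lie in the row space of $\bm{G}$, i.e. $\bm{G}'=\bm{T}^\top\bm{G}(\bm{A}^\top\otimes\bm{B})$ for some rank-$k'$ matrix $\bm{T}$, so $(\bm{A},\bm{B},\bm{T})$ solves $\MSE$ and therefore $\QSMLE$. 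Termination then amounts to the loop eventually drawing a collision.

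\paragraph{The collision survives the restriction.}
For the fixed $\bm{a}=(\bm{a}_1,\bm{a}_2)\in(\widetilde{F}_a\times\widetilde{F}_b)\cap\bm{F}^{0}$ and a solution matrix $\bm{S}$ as above, put $\bm{b}_0=\bm{a}\bm{S}^{-1}=(\bm{a}_1\bm{A}^{-1},\,\bm{a}_2\bm{B}^{-\top})$. Then $\bm{b}_0\bm{S}=\bm{a}$, so from $\Pcal(\bm{z})=\F(\bm{z}\bm{S})\bm{T}$ we get $\Pcal(\bm{x}+\bm{b}_0)=\F(\bm{x}\bm{S}+\bm{a})\bm{T}$, which shows that $(\bm{S},\bm{T})$ witnesses the $\iQSMLE$-equivalence of $\F(\bm{x}+\bm{a})$ and $\Pcal(\bm{x}+\bm{b}_0)$ exactly as in the analysis of Algorithm~\ref{alg:QMLEalg}; thus $\bm{b}_0$ is a collision partner of $\bm{a}$. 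I would then check $\bm{b}_0\in(\widetilde{P}_a\times\widetilde{P}_b)\cap\bm{P}^{0}$. First, $\Pcal(\bm{b}_0)=\F(\bm{b}_0\bm{S})\bm{T}=\F(\bm{a})\bm{T}=\bm{0}$ because $\bm{a}\in\bm{F}^{0}$, so $\bm{b}_0\in\bm{P}^{0}$; moreover, for every $(\mu,\nu)\in(\Fq^\times)^2$ the matrix $\begin{bmatrix}\mu\bm{A}&\bm{0}\\\bm{0}&\nu\bm{B}^\top\end{bmatrix}$ is again a solution, and since $\Pcal$ is bilinear its partner $(\mu^{-1}\bm{a}_1\bm{A}^{-1},\,\nu^{-1}\bm{a}_2\bm{B}^{-\top})$ also lies in $\bm{P}^{0}$, giving $(q-1)^2$ distinct collision partners of $\bm{a}$ when both blocks of $\bm{a}$ are nonzero. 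Second, membership in $\widetilde{P}_a\times\widetilde{P}_b$ is automatic: since $k'$ is strictly smaller than both $m$ and $n$, the linear maps $\Pcal(\cdot,\bm{b})$ and $\Pcal(\bm{a},\cdot)$ have rank at most $k'$ and hence non-trivial kernel for every argument, so $\widetilde{P}_a\times\widetilde{P}_b=\Fq^{m}\times\Fq^{n}$; alternatively, the inclusion lemma carries the positive kernel dimensions witnessed by $\bm{a}_1\in\widetilde{F}_a$ and $\bm{a}_2\in\widetilde{F}_b$ onto the corresponding blocks of $\bm{b}_0$.

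\paragraph{Cost and the main obstacle.}
Finally I would bound the search space. Restricting the draws of $\bm{b}$ to $(\widetilde{P}_a\times\widetilde{P}_b)\cap\bm{P}^{0}$ as the complexity analysis does, the locus $\bm{P}^{0}$ is cut out by the $k'$ bilinear forms $p_1,\dots,p_{k'}$ in $m+n$ variables: the zero locus of a single nonzero such form has size at least $q^{m+n-1}$, and intersecting them one at a time while applying the Grassmann formula gives $\bigl|(\widetilde{P}_a\times\widetilde{P}_b)\cap\bm{P}^{0}\bigr|\ge q^{m+n-k'}$. Since $(q-1)^2$ of these points are collision partners of $\bm{a}$, the expected number of loop iterations before a solution is found is at least $\approx q^{m+n-k'}/(q-1)^2\approx q^{m+n-k'-2}$; each iteration costs one call to $\mathsf{Solve}_{\iQSMLE}$, i.e. $\mathsf{C}_{\iQSMLE}$, while only the $n_{\mathsf{sol}}$ iterations whose call returns a non-$\perp$ value additionally pay the $(mn)^\omega$ cost of the rank test, which assembles into the stated bound. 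The main obstacle I anticipate is making the inequality $\bigl|(\widetilde{P}_a\times\widetilde{P}_b)\cap\bm{P}^{0}\bigr|\ge q^{m+n-k'}$ fully rigorous---the Grassmann step is a bound only under a mild genericity hypothesis on the $p_i$---and, relatedly, controlling $n_{\mathsf{sol}}$, i.e. justifying that $\mathsf{Solve}_{\iQSMLE}$ essentially never returns a spurious solution for a $\bm{b}$ that is not a collision partner, which is precisely the heuristic already invoked for Algorithm~\ref{alg:QMLEalg} and in the remark following it.
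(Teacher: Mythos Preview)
Your proposal is correct and follows essentially the same route as the paper: the paper's justification (given in the paragraphs immediately preceding the lemma rather than as a separate proof) consists of observing that $\widetilde{P}_a\times\widetilde{P}_b=\Fq^m\times\Fq^n$ since $k'<n,m$, restricting to $\bm{P}^0$ via the Grassmann-formula heuristic to get a search set of size at least $q^{m+n-k'}$, and dividing by the $(q-1)^2$ scalar-multiple collisions. Your write-up is in fact more careful than the paper's sketch---you explicitly verify that the collision partner $\bm{b}_0$ of the fixed $\bm{a}$ lands in $(\widetilde{P}_a\times\widetilde{P}_b)\cap\bm{P}^0$, a step the paper takes for granted---and you correctly flag the two heuristic points (the Grassmann bound and the behaviour of $\mathsf{Solve}_{\iQSMLE}$ on non-collisions) that the paper also leaves as assumptions.
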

	
	\paragraph{Solving particular $\iQSMLE$.} While \cite{RST22} adapted the matrix-pencil algorithm from Bouillaguet for $\MCE$, the differences in the case of a subcode need to be addressed.
	
	Let $\Pcal$ and $\F$ the two inhomogeneous quadratic maps such that $\Pcal(\bm{x})=\F(\bm{xS})\bm{T}$ with $\bm{S} = \begin{bmatrix}
		\bm{A}  & \bm{0} \\
		\bm{0} & \bm{B}^\top 
	\end{bmatrix}$, $\bm{A} \in \GL_m(q)$, $\bm{B} \in \GL_n(q)$. Then, one can write $\Pcal(\bm{x}\bm{S}^{-1}) = \F(\bm{x})\bm{T}$. One can then obtain the same equations as in \cite{RST22}.
	
	Due to the construction of the polynomials $\{ p_i \}_{i \in \oneto{k'}}$ and $\{f_j\}_{j \in \oneto{k}}$, which are built by applying an affine change of variables to bilinear forms, they can be written as:
	
	\begin{equation*} \label{eqiqsmle0}
		\begin{cases}
			p_i(\bm{x}) = (x_1,\dots,x_m)\bm{P}_i^{(2)}(x_{m+1},\dots,x_{m+n})+\bm{P}_i^{(1)}\bm{x}^\top +y_i\\
			f_j(\bm{x}) = (x_1,\dots,x_m)\bm{F}_j^{(2)}(x_{m+1},\dots,x_{m+n})+\bm{F}_j^{(1)}\bm{x}^\top +y'_j
		\end{cases}
	\end{equation*}
	for some matrices $\bm{P}_i^{2},\bm{F}_j^{2} \in \Fq^{m \times n}$, $\bm{P}_i^{(1)},\bm{F}_j^{(1)} \in \Fq^{1 \times (m+n)}$

	\begin{equation*}
		\begin{cases}
			p'_i(\bm{x}) = (x_1,\dots,x_m)\bm{A}^{-1}\bm{P}_i^{(2)}\bm{B}^{-\top}(x_{m+1},\dots,x_{m+n})+\bm{P}_i^{(1)}\begin{bmatrix}
				\bm{A}^{-1}  & \bm{0} \\
				\bm{0} & \bm{B}^{-\top} 
			\end{bmatrix}^\top\bm{x}^\top +y_i\\
			f_j(\bm{x}) = (x_1,\dots,x_m)\bm{F}_j^{(2)}(x_{m+1},\dots,x_{m+n})+\bm{F}_j^{(1)}\bm{x}^\top +y'_j
		\end{cases}
	\end{equation*}
	
	Then we have the equality $(p'_1,\dots,p'_{k'}) = (f_1,\dots,f_k)\bm{T}$.
	By identifying the degree $\ell$ parts, we deduce the equations $\mathcal{S}_\ell$ in the variables $\bm{A},\bm{B},\bm{T}$:
	
	\begin{align*}
		\begin{cases}
			\bm{A}^{-1}\bm{P}_i^{(2)}\bm{B}^{-\top} = \sum\limits_{j=1}^k t_{j,i}\bm{F}_j^{(2)}~~~~~~ & (\mathcal{S}_2)\\
			\bm{P}_i^{(1)}\begin{bmatrix}
				\bm{A}^{-1}  & \bm{0} \\
				\bm{0} & \bm{B}^{-\top} 
			\end{bmatrix}^\top = \sum\limits_{j=1}^k t_{j,i} \bm{F}_j^{(1)} & (\mathcal{S}_1)\\
			y_i = \sum\limits_{j=1}^k t_{j,i} y'_j & (\mathcal{S}_0)
		\end{cases}
	\end{align*}
	
	Trivially, the $k'$ equations of degree $0$ are linear in the variables of $\bm{T}$ only, as $\bm{S}^{-1}$ does not influence this part. In $\mathcal{S}_1$, there are $k'(m+n)$ linear equations, as $\bm{a}$ and $\bm{b}$ have been chosen randomly when building the $\iQSMLE$ instance. By applying the same trick from \cite{RST22} to obtain a new collision, i.e, finding $\bm{x_0}$ and $\bm{x_0}'$ such that $\bm{P}^{(1)}_i\bm{x_0} - y =0$ and $\sum\limits_{j=1}^kt_{j,i}\bm{F}^{(1)}_k\bm{x_0}' - y' =0$, one may add $k'$ equations to $\mathcal{S}_1$. However, such an approach does not work since one would have to iterate through $q^{n+m-k'}$ such elements $\bm{x_0}$ which would increase the complexity by a factor too large to be worth it. In the next lines, we give a lower bound for solving the system made out of $\mathcal{S}_\ell$.
	
	Thanks to $\mathcal{S}_0$ and $\mathcal{S}_1$, $k'$ variables in $\bm{T}$ can be replaced by constants and $k'm$ variables in $\bm{A}^{-1}$ and $k'n$ variables in $\bm{B}^{-\top}$ can be replaced by variables in $\bm{T}$ in the equations $\mathcal{S}_2$. Then a lower bound for solving would be the complexity to linearize in degree 3 where we have $(m^2-k'm)(n^2-k'n)(k-1)k'$ monomials in tri-degree $(1,1,1)$ for $k'(n-k')(m^2-k'm) + (k'm-k')(n^2-k'n)+(n-k')(m-k')(k-1)k'$ equations, $(m-k')\binom{(k-1)k'+1}{2}$ monomials in tri-degree $(1,0,2)$ for $(m-k')k'^2(k-1)+k'^2(m-k')m$ equations and $(n-k')\binom{(k-1)k'+1}{2}$ monomials in tri-degree $(0,1,2)$ for $(n-k')k'^2(k-1)+k'^2(n-k')n$ equations. The order of variables in tri-degree is assumed to be $\bm{A}^{-1},\bm{B}^{-\top},\bm{T}$. For each tri-degree, no linearisation is possible. Thus we need to go at least in degree 3 to linearise ending up with the following complexity:

	\begin{lemma}The $\iQSMLE$ instance obtained from either Algorithm \ref{alg:QMLEalg} or Algorithm \ref{alg:QMLEalg2} can be solved in time at least $$\mathsf{C}_{\iQSMLE} = \mathcal{O}\left(\Big(\binom{v+2}{3}-v_{-}\Big)^\omega\right), \text{~~~~} n,k \rightarrow \infty$$ where \begin{equation*}
			\begin{cases} v = n^2+m^2+kk'-2k'(m+n) \\v_{-} = \binom{m(m-k')+1}{2}\left(k'(k-1)+n(n-k')\right)+\binom{n(n-k')+1}{2}\left(k'(k-1)+m(m-k')\right)
			\end{cases}
	\end{equation*}\end{lemma}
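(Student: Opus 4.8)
The plan is to view the affine identity $\Pcal(\bm{x}\bm{S}^{-1}) = \F(\bm{x})\bm{T}$, with $\bm{S}^{-1} = \bigl[\begin{smallmatrix}\bm{A}^{-1} & \bm{0}\\ \bm{0} & \bm{B}^{-\top}\end{smallmatrix}\bigr]$, as a polynomial system in the entries of $\bm{A}^{-1}$, $\bm{B}^{-\top}$ and $\bm{T}$, and to bound the cost of solving it by linearisation. First I would record the three blocks $\mathcal{S}_2$, $\mathcal{S}_1$, $\mathcal{S}_0$ obtained by identifying the degree-$2$, degree-$1$ and degree-$0$ parts of the two maps — exactly the computation already displayed above. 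Then I would use the linear blocks to reduce the number of unknowns: since the shift vectors $\bm{a},\bm{b}$ defining the $\iQSMLE$ instance are uniform, with overwhelming probability $\mathcal{S}_0$ yields $k'$ independent linear relations among the entries of $\bm{T}$, and $\mathcal{S}_1$ yields $k'(m+n)$ independent linear relations that can be solved for $k'm$ entries of $\bm{A}^{-1}$ and $k'n$ entries of $\bm{B}^{-\top}$ as affine functions of $\bm{T}$; substituting everything back into $\mathcal{S}_2$ leaves a degree-$2$ system in the $v$ free variables of the statement.

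Next I would show that no linearisation can succeed in degree $2$, nor within any single tri-degree: multiplying the reduced $\mathcal{S}_2$ by the $v$ degree-one monomials produces degree-$3$ equations that in particular populate the tri-degrees $(1,1,1)$, $(1,0,2)$ and $(0,1,2)$ with $(m^2-k'm)(n^2-k'n)(k-1)k'$, $(m-k')\binom{(k-1)k'+1}{2}$ and $(n-k')\binom{(k-1)k'+1}{2}$ monomials respectively, and comparing these with the number of equations available in each of those tri-degrees I expect strictly more monomials than equations, so that no Macaulay matrix restricted to a single (bi- or tri-)degree is full rank. One is therefore forced to work with the full degree-$3$ system, whose unknowns are essentially the $\binom{v+2}{3}$ degree-$3$ monomials in the $v$ variables; the relations generated above pin down exactly the $v_{-}$ monomials of tri-degrees $(2,1,0)$, $(2,0,1)$, $(1,2,0)$ and $(0,2,1)$, so the residual linear system has $\binom{v+2}{3}-v_{-}$ unknowns and is solved by Gaussian elimination in $\mathcal{O}\bigl((\binom{v+2}{3}-v_{-})^{\omega}\bigr)$ operations. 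The estimate is stated as a lower bound because it assumes the most favourable behaviour of the system: all generated equations independent modulo the obvious $(a,b,t)\mapsto(aA,bB,tT)$-scaling syzygies, and no new syzygy in degree $\le 3$.

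The step I expect to be the genuine obstacle is the exact accounting of the degree-$3$ relations: one has to verify that, after the substitutions from $\mathcal{S}_0$ and $\mathcal{S}_1$, the products of the $k'mn$ equations of $\mathcal{S}_2$ with the $v$ variables span a space of the announced dimension in each tri-degree — so that $v_{-}$ is neither over- nor under-counted — and that no unexpected syzygy arises in degree $\le 3$; this is the kind of claim that is ultimately confirmed by direct computation (the paper reports such experiments for $n=3,m=4,k=4,q=64$) and is asserted only in the asymptotic regime $n,k\to\infty$. A lighter point is to check that the uniform shift vectors $\bm{a},\bm{b}$ make $\mathcal{S}_0$ and $\mathcal{S}_1$ of full rank for all but a negligible fraction of choices, which goes exactly as for $\MCE$ in \cite{RST22}; note also that the estimate bounds the cost of a single call to $\mathsf{Solve}_{\iQSMLE}$ whether or not $\bm{b}=\bm{a}\bm{S}$, since the linearisation is carried out identically in both cases. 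Once these facts are granted, the announced complexity is just the cost of the linear algebra on the residual monomial set.
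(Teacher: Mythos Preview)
Your proposal follows essentially the same route as the paper's own justification (the paragraph immediately preceding the lemma): you set up the graded blocks $\mathcal{S}_0,\mathcal{S}_1,\mathcal{S}_2$, use the $k'$ constant equations and the $k'(m+n)$ linear equations to reduce to $v$ free variables, check that linearisation fails in each of the tri-degrees $(1,1,1)$, $(1,0,2)$, $(0,1,2)$ individually (quoting the same monomial and equation counts the paper gives), and conclude that full degree-$3$ linearisation is required, yielding the stated lower bound. On that level the match is close.

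One point where you go beyond the paper and where you should be careful: your explanation of the correction term $v_{-}$. You assert that the degree-$3$ relations ``pin down exactly the $v_{-}$ monomials of tri-degrees $(2,1,0)$, $(2,0,1)$, $(1,2,0)$ and $(0,2,1)$'', i.e.\ that in each of those four tri-degrees the Macaulay equations have full column rank so those monomials can be eliminated. The paper does not make (or prove) this claim; it simply records the formula and states it as a \emph{lower} bound on the solving cost. Your interpretation is plausible as a heuristic, but it is an extra assertion that would itself need justification, and it is not the argument the paper gives. If you keep the proposal, you should either downgrade this step to a heuristic count (as the paper does) or provide the rank argument explicitly; as written it reads as a proved fact when it is not.
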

	
	Overall, this attack is the one that allows us to select our parameters, as the exponential factor we consider is much lower than for Leon's algorithm.
	
	\subsection{Other attacks through invariant}
	
	Other attacks also exist on $\MCE$, which must be mentioned. The two attacks that we will mention use the fact that, similarly as what is explained in \cite{MEDS2,KQT24} and \cite{RS24}, the $\MSE$ problem can be translated as finding $\bm{A} \in \GL_m(q)$, $\bm{B} \in \GL_n(q)$, and $\bm{T} \in \Fq^{k \times k'}$ such that  $\C(\bm{x},\bm{y},\bm{T}\bm{z}) = \D(\bm{A}^{-\top}\bm{x},\bm{B}^{-1}\bm{y},\bm{z})$ for $$\C  \text{ : } \Fq^m \times \Fq^n \times \Fq^k \mapsto \Fq \text{ , } \C(\bm{x},\bm{y},\bm{z}) = \sum_{i,j,l}\bm{C}_{i,j}^{l}x_iy_jz_l $$ $$\D  \text{ : } \Fq^m \times \Fq^n \times \Fq^{k'} \mapsto \Fq \text{ , } \D(\bm{x},\bm{y},\bm{z}) = \sum_{i,j,l}\bm{D}_{i,j}^{l}x_iy_jz_l$$ where $\bm{C}^l$ (resp. $\bm{D}^l$) is the $l$th matrix in the basis of $\C$ (resp. $\D$). In \ref{subsec:algatk}, a matrix representing $\C(\bm{x},\bm{y},\bm{z})$ (resp. $\D$) is given by $Vec_{row}(\bm{G}_A)(\bm{x}\otimes \bm{y}\otimes \bm{z})^\top$ (resp. $Vec_{row}(\bm{G}'_A)(\bm{x}\otimes \bm{y}\otimes \bm{z})^\top$). The first attack, from \cite{KQT24}, uses a walk on a graph to obtain a path that allows to recover the isometry. The second one, from \cite{RS24}, uses a rare invariant in a trilinear map, called a triangle, and then recovers the matrices $\bm{A}$ and $\bm{B}$. We explain hereafter why these attacks do not work with $\MSE$. Furthermore, the adaptation of these attacks either does not work (for the first one) or with a very low probability (for the second one), as they rely on invariants that are not present in $\MSE$.
	
	\paragraph{Using corank of bilinear forms.} The idea of the attack from \cite{KQT24} is to look at subsets  $\mathbb{P}_{\C,R}= \{ \bm{u} \in \mathbb{P}(\Fq^m) ~~\vert~~ \Rank\big(\C(\bm{u},\textbf{--},\textbf{--})\big) = R \}$ and $\mathbb{P}_{\D,R}= \{ \bm{u} \in \mathbb{P}(\Fq^m) ~~\vert~~  \Rank\big(\D(\bm{u},\textbf{--},\textbf{--})\big) = R \}$ for a fixed value $R$. Once the subsets are built, the algorithm applies an invariant function, by building points $\bm{u}$, $\bm{v}$ and $\bm{w}$ where $\Rank\big(\C(\bm{u},\textbf{--},\textbf{--})\big) = R$, $\Rank\big(\C(\textbf{--},\bm{v},\textbf{--})\big) = R$ and  $\Rank\big(\C(\textbf{--},\textbf{--},\bm{w})\big) = R$. Then the isometry is found by matching these points. 
	
	Aside from the fact this attack is specifically designed for parameters such that $m=n=k$ (and hence cannot work here as we will take larger values of $k$), in $\MSE$ the ranks of the bilinear forms obtained are not the same, because $\Rank\big(\C(\bm{u},\textbf{--},\textbf{--})\big)$ is at most $\min(n,k)$ while $\Rank\big(\D(\bm{u},\textbf{--},\textbf{--})\big)$ is at most $\min(n,k')$. Because $k'$ is much smaller than $k$ (and than $n$), the rank of the bilinear forms obtained is not an invariant, meaning it is not possible to look directly for collisions in the subsets $\mathbb{P}_{\C,R}$ and $\mathbb{P}_{\D,R}$. Furthermore, as the algorithm does not look for full rank matrices (it looks for elements such that the co-rank of $\Rank\big(\D(\bm{u},\textbf{--},\textbf{--})\big) = R$ is $1$), the fact that the rank is not invariant prevents any adaptation of the attack.

	\paragraph{Using triangles in trilinear maps.} Another attack described in \cite{RS24} uses the same equivalence of trilinear maps (also called $3$-Tensor Isomorphism problem). This attack uses an invariant called a triangle, which is a vector $(\bm{u},\bm{v},\bm{w})$ such that $\C(\bm{u},\bm{v},\textbf{--}) = 0$, $\C(\textbf{--},\bm{v},\bm{w}) = 0$, $\C(\bm{u},\textbf{--},\bm{w}) = 0$. By finding a triangle for the code $\C$, and then a triangle for $\D$, the attack then obtains the matrices $\bm{A},\bm{B}$, and $\bm{T}$ thanks to an algebraic modeling and a computation of a Gröbner basis.
	
	As stated in \cite{RS24}, these triangles occur with probability $1/q$ in the trilinear maps, and the cost of the algorithm is dominated by the cost of finding a triangle. The case of the subcode equivalence problem is different however. In fact, while a triangle is an invariant in the case of $\MCE$, this is not the case for $\MSE$, making the adaptation of the attack possible much less often.
	
	More precisely, let $(\bm{u},\bm{v},\bm{w})$ be a triangle of $\D(\bm{x},\bm{y},\bm{z})$.
	Then, we have that $\D(\bm{u},\bm{v},\bm{e}_i)=0$ for $\{\bm{e}_i\}_{i \in \oneto{k'}}$ a basis of $\Fq^{k'}$. However, this does not imply that $\C(\bm{A}^{\top}\bm{x},\bm{B}\bm{y},\bm{e}'_j)=0$ for $\{\bm{e}_i\}_{i \in \oneto{k'}}$ a basis of $\Fq^{k}$: we will only have that $\C(\bm{A}^\top\bm{x},\bm{B}\bm{y},\bm{T}\bm{e}_i)=0$ for all $\bm{e}_i$.
	
	Reciprocally, if $(\bm{u},\bm{v},\bm{w})$ is a triangle for $\C$, then, this can imply a triangle in $\D$ only if $\bm{w}$ is in the image of the application 
	\begin{align*}
		\psi_{\bm{T}} \text{ : } \Fq^{k'} &\rightarrow \Fq^{k}\\
		\bm{z} &\mapsto \bm{T} \bm{z}
	\end{align*}
	
	This image is a vector space of dimension $k'$, while there are $q^k$ elements $\bm{w}$ possible in a triangle. As a result, when in $\MCE$ the weak keys appear in proportion $1/q$, they appear for $\MSE$ with probability $\frac{1}{q^{k-k'+1}}$ (if a triangle appears in $\C$, which it does with probability $1/q$, then it leads to a triangle in $\MSE$ with probability at most $\frac{1}{q^{k-k'}}$, as it is the probability that one out of $q^{k'}$ elements belongs in a set of size $q^{k}$). As a result, the attack does not work in practice when $k'$ is small, as weak keys will not appear.
	
	\paragraph{Adapting the triangles} First we rewrite the definition of a triangle $(\bm{u},\bm{v},\bm{w})$ for $\D$ as $\D(\bm{u},\bm{v},\bm{e}_i)=\D(\bm{u},\bm{f}_j,\bm{w}) = \D(\bm{g}_\ell,\bm{v},\bm{w}) = 0$ for all $i,j,\ell$ where $\Fq^{k'}=\langle e_i \rangle,\Fq^m=\langle f_j \rangle,\Fq^n =\langle g_\ell\rangle$. This is equivalent to $\C(\bm{A}^\top\bm{u},\bm{B}\bm{v},\bm{T}\bm{e}_i)=\C(\bm{A}^{\top}\bm{u},\bm{B}\bm{f}_j,\bm{T}\bm{w}) = \C(\bm{A}^{\top}\bm{g}_\ell,\bm{B}\bm{v},\bm{T}\bm{w}) = 0$ for all $i,j,\ell$. Thus a triangle for $\D$ does not correspond to a triangle for $\C$ but to a \emph{pseudo-triangle} $(\bm{u},\bm{v},\bm{w},\bm{W}) \in \mathbb{P}^m(\Fq) \times \mathbb{P}^n(\Fq) \times \mathbb{P}^{k}(\Fq) \times \mathsf{Gr}^{k'}(\Fq^k)$: $\C(\bm{u},\bm{v},\bm{W})=\C(\bm{u},\Fq^n,\bm{w}) = \C(\Fq^m,\bm{v},\bm{w}) = 0$ with $\bm{w} \in \bm{W}$. The idea of invariants is to deduce $(\bm{A},\bm{B},\bm{T})$ from a collision. To find such collision, we need to find pseudo-triangles for $\C$ which can be done by an analogous method as for the triangles. If $\D$ has a triangle, we know there exists a $\bm{W} \in \mathsf{Gr}^{k'}(\Fq^k)$ generated by $\{\bm{w}_\ell \}_{\ell \in \oneto{k'}}$ such that solving this system
	
	\begin{equation*}
		\begin{cases}
			\C(\bm{u},\bm{v},\bm{w_\ell}) = 0, \forall 1\leq \ell \leq k'\\
			\C(\bm{u},\bm{f}_j,\bm{w}) = 0, \forall 1\leq j \leq n\\
			\C(\bm{e}_i,\bm{v},\bm{w}) = 0, \forall 1\leq i \leq m\\
			\bm{w} \in \bm{W}
		\end{cases}
	\end{equation*}
	
	will give a pseudo-triangle. However, we need first to guess $\bm{W}= \langle \bm{w}_1,\dots,\bm{w}_{\ell} \rangle \in \mathsf{Gr}^{k'}(\Fq^k)$ and then solve the system. This simple step of guessing $\bm{W}$ has already a cost $\CG{k}{k'}$, making the adaptation already unpractical, even without considering the solving of the aforementioned system. One could try to solve directly the system without specifying the $\bm{w}_\ell$ vectors, which would add $k'k$ variables, switch the linear equations from $\bm{w}\in \bm{W}$ by equations of degree $k'+1$ and the first quadratic equations by cubic equations.

	\begin{remark}
		There is a reason all the attacks that use any invariant do not perform well on $\MSE$. In fact, all such attacks work using the same process: 
		\begin{itemize}
			\item Select an invariant property $\mathbb{P}$ on elements of $\C$ and $\D$ (this can be the rank of matrices as in the Leon algorithm, the triangle property in the attack from \cite{RS24}, or any other property);
			\item Build elements of $\C$ and $\D$ that possesses this property;
			\item Find a collision between these lists of elements.
		\end{itemize}
		However, such an approach is bound to have an increased complexity in the case of $\MSE$. There are, in fact, two possible cases when trying this approach with $\MSE$. The first one is when the property $\mathbb{P}$ is an invariant (which is the case in Leon's algorithm for instance). In that case, it is obvious that the set $\mathbb{P}_{\C} = \{\bm{x} \in \C | \text{~}\bm{x}\text{~has the property~} \mathbb{P} \}$ is larger than the one from $\D$. Naively, we can expect $|\mathbb{P}_{\D}|$ to be equal to $ |\mathbb{P}_{\C}|/q^{k-k'}$ as $|\D| = |\C|/q^{k-k'}$. This (rough) estimation leads us to think that collisions are much harder to find. 
		
		When the property is not invariant, it is not possible to look for collisions, as what happens with triangles. Furthermore, adapting the property requires some kind of enumeration, which increases the complexity of finding a possible match.
	\end{remark}

	\section{Parameters choice and signature sizes} \label{sec:param}
	
	We give in Table \ref{tab:MSEparam} the parameters that we are going to use for our signature scheme. They are chosen according to the attacks described in Section \ref{sec:attaques}. 
	For Leon's attack, we focus solely on the two-collision approach, since the one-collision attack relies on algebraic methods for which we do not yet have a precise formula. Nevertheless, the estimation of the number of equations and monomials from Section \ref{sec:attaques} indicates that the complexity will not threaten our parameter sets. We then give in Table \ref{tab:MCPKPparameterstotal} the sizes obtained for the signature from $\MCPKP$ from Section \ref{sec:protocol}. In Table \ref{tab:MSEparameterstotal}, we illustrate the difference with $\MSE$ to highlight the need to use an inhomogeneous version of the problem: at level I, the difference is around 1 kB (the parameters are the same for $\MSE$, except the value $k$, which is taken as $k+k'$ compared to $\MCPKP$). The parameters $(T_{\text{open}},w)$ for both TCitH and VOLEitH frameworks are the same as what is taken in \cite{BFG+}.
	
	A final remark can be made to compare the size of the signature we obtain and the size of other MPCitH-based signature schemes. In particular, one may wonder why we obtain larger sizes than MinRank, Rank Syndrome Decoding, Syndrome Decoding, or even PKP (\cite{BFG+}, \cite{BBGK24}). The reason is that the witness is composed of two invertible matrices, which do not have a smaller representation than $(m^2+n^2)\log_2(q)$, which is quite large.
	
	\begin{table}[h]
		\renewcommand*{\arraystretch}{1.5}
		\centering
		\scalebox{1}{
			\begin{tabular}{ccccccccc c c}
				\toprule
				\multirow{2}{*}{NIST Security level} & \multirow{2}{*}{Parameter set} & \multirow{2}{*}{$q$} & \multirow{2}{*}{$m$} & \multirow{2}{*}{$n$} & \multirow{2}{*}{$k$} & \multirow{2}{*}{$k'$} & \multirow{2}{*}{$|\pk|$} & \multicolumn{2}{c}{Complexity ($\log_2$)} \\
				\cmidrule(lr){9-10}
				& & & & & & & & Leon & $\QSMLE$ \\
				\midrule
				\multirow{2}{*}{I} & MCPKP-Ia & 64 & 12 & 12 & 32 & 3 & 268 B & 230 & 156 \\
				& MCPKP-Ib & 128 & 11 & 11 & 30 & 3 & 255 B & 237 & 160 \\
				\midrule
				III & MCPKP-III & 64 & 18 & 18 & 50 & 3 & 641 B & 340 & 235 \\
				\midrule
				V & MCPKP-V & 64 & 22 & 22 & 67 & 3 & 963 B & 413 & 286 \\ 
				\bottomrule
		\end{tabular}}
		\vspace{1.5mm}
		\caption{Parameter sets chosen according to the attacks described in Section \ref{sec:attaques} and according public key size in Bytes}
		\label{tab:MSEparam}
	\end{table}
	
	\begin{table}[h!]
		\renewcommand*{\arraystretch}{1.5}
		\centering
		\scalebox{1}{
			\begin{tabular}{cccccccc}
				\toprule
				Security & MSE Parameters & Framework & Trade-off & $N$ & $\tau$ & $T_{\text{open}}$ & $|\mathsf{Sig}|$ \\
				\midrule
				\multirow{4}{*}{NIST I} & \multirow{4}{*}{\texttt{MCPKP-Ia}} & \multirow{2}{*}{TCitH} & Fast & 256 & 20 & 113 & 7 162 B\\
				&                                  &                       & Short & 2048 & 12 & 111 & 5 014 B\\
				\cmidrule(lr){3-8}
				&                                  & \multirow{2}{*}{VOLEitH} & Fast & 256 & 16 & 102 & 6 292 B\\
				&                                  &                         & Short & 2048 & 11 & 99 & 4 828 B\\
				\midrule
				\multirow{4}{*}{NIST I} & \multirow{4}{*}{\texttt{MCPKP-Ib}} & \multirow{2}{*}{TCitH} & Fast & 256 & 20 & 113 & 7 097 B\\
				&                                  &                       & Short & 2048 & 12 & 111 & 4 975 B\\
				\cmidrule(lr){3-8}
				&                                  & \multirow{2}{*}{VOLEitH} & Fast & 256 & 16 & 102 & 6 234 B\\
				&                                  &                         & Short & 2048 & 11 & 99 & 4 788 B\\
				\midrule
				\multirow{4}{*}{NIST III} & \multirow{4}{*}{\texttt{MCPKP-III}} & \multirow{2}{*}{TCitH} & Fast & 256 & 30 & 178 & 21 108 B\\
				&                                  &                       & Short & 2048 & 18 & 174 & 14 316 B\\
				\cmidrule(lr){3-8}
				&                                  & \multirow{2}{*}{VOLEitH} & Fast & 256 & 24 & 176 & 18 438 B\\
				&                                  &                         & Short & 2048 & 16 & 162 & 13 428 B\\
				\midrule
				\multirow{4}{*}{NIST V} & \multirow{4}{*}{\texttt{MCPKP-V}} & \multirow{2}{*}{TCitH} & Fast & 256 & 39 & 247 & 40 129 B\\
				&                                  &                       & Short & 2048 & 25 & 245 & 28 543 B\\
				\cmidrule(lr){3-8}
				&                                  & \multirow{2}{*}{VOLEitH} & Fast & 256 & 32 & 247 & 35 576 B\\
				&                                  &                         & Short & 2048 & 22 & 248 & 27 041 B\\
				\bottomrule
		\end{tabular}}
		\vspace{1.5mm}
		\caption{Resulting signature sizes for the new signature scheme based on $\MCPKP$.}
		\label{tab:MCPKPparameterstotal}
	\end{table}

	\begin{table}[h!]
		\renewcommand*{\arraystretch}{1.5}
		\centering
		\scalebox{1}{
			\begin{tabular}{cccccccc}
				\toprule
				Security & MSE Parameters & Framework & Trade-off & $N$ & $\tau$ & $T_{\text{open}}$ & $|\mathsf{Sig}|$ \\
				\midrule
				\multirow{4}{*}{NIST I} & \multirow{4}{*}{\texttt{MSE-Ia}} & \multirow{2}{*}{TCitH} & Fast & 256 & 20 & 113 & 8 872 B\\
				&                                &                       & Short & 2048 & 12 & 111 & 6 040 B\\
				\cmidrule(lr){3-8}
				&                                & \multirow{2}{*}{VOLEitH} & Fast & 256 & 16 & 102 & 7 660 B\\
				&                                &                         & Short & 2048 & 11 & 99 & 5 769 B\\
				\midrule
				\multirow{4}{*}{NIST I} & \multirow{4}{*}{\texttt{MSE-Ib}} & \multirow{2}{*}{TCitH} & Fast & 256 & 20 & 113 & 8 987 B\\
				&                                &                       & Short & 2048 & 12 & 111 & 6 109 B\\
				\cmidrule(lr){3-8}
				&                                & \multirow{2}{*}{VOLEitH} & Fast & 256 & 16 & 102 & 7 746 B\\
				&                                &                         & Short & 2048 & 11 & 99 & 5 828 B\\
				\bottomrule
		\end{tabular}}
		\vspace{1.5mm}
		\caption{Resulting signature sizes for the signature scheme based on $\MSE$, for the first security level. We see a quite large increase compared to $\MCPKP$, which is even larger at security levels III and V.}
		\label{tab:MSEparameterstotal}
	\end{table}
	
	\paragraph{Computational cost.} We also give a rough explanation of the computational cost of the signature scheme, in particular, the signing time. Two main parts influence the signature speed: the \textit{symmetric} part and the \textit{MPC simulation} part. For the symmetric part, \cite{BBM+} gives a timing of $3.2$ milliseconds for the whole signing process of \texttt{FAESTER-128s} with a signature size of around 4.6 kB, and $3.6$ milliseconds for \texttt{KuMQuat-$2^8$-L1s} for 2.5 kB of signature. As a result, we can be confident in the fact that the symmetric part will not be slower than these timings. For the MPC simulation part, one has to perform $m^2n^2$ operations in $\Fq$ for the Kronecker product, and then multiply the matrix of size $\Fq^{k' \times mn}$ with a matrix of size  $mn \times mn$. Finally, the multiplication of a matrix of size ${k' \times mn}$ with a matrix of size $mn \times mn-k$ must be done, along with $k'(mn+k') \cdot \rho $ linear combination over $\Fq$. We can compare this with other MPC protocols: for \texttt{KuMQuat-$2^8$-L1s}, there are $\mathcal{O}(mn^2)$  multiplications in $\Fq$, with $q=2$ and $m=n=152$ compared to $\mathcal{O}(k'(mn)^2)$ multiplications in $\Fq$ for us, with $q=64$ and $m=n=12$. This results in around the same amount of multiplications in $\Fq$, but for $q=2$, this protocol would need to be repeated more times than for $q=64$ (to have an appropriate soundness error), so we can expect the size of $\Fq$ to have a limited influence. Furthermore, since parameters are taken with $N=2048$, the symmetric part tends to often be more important than the MPC emulation one. We can thus expect our scheme to have a signing and verification time in the same order of magnitude as other MPCitH schemes, which are known to be quite efficient with the recent frameworks.
	
	\bibliographystyle{plain}
\bibliography{ref}

\begin{thebibliography}{10}

\bibitem{PERK}
Najwa Aaraj, Slim Bettaieb, Loïc Bidoux, Alessandro Budroni, Victor Dyseryn,
  Andre Esser, Philippe Gaborit, Mukul Kulkarni, Victor Mateu, Marco Palumbi,
  Lucas Perin, and Jean-Pierre Tillich.
\newblock {PERK}.
\newblock NIST's Post-Quantum Cryptography Standardization of Additional
  Digital Signature Schemes Project (Round 2),\url{https://pqc-perk.org/},
  2024.

\bibitem{Mirath}
Gora Adj, Nicolas Aragon, Stefano Barbero, Magali Bardet, Emanuele Bellini,
  Loïc Bidoux, Jesús-Javier Chi-Domínguez, Victor Dyseryn, Andre Esser,
  Thibauld Feneuil, Philippe Gaborit, Romaric Neveu, Matthieu Rivain, Luis
  Rivera-Zamarripa, Carlo Sanna, Jean-Pierre Tillich, Javier Verbel, and Floyd
  Zweydinger.
\newblock {Mirath}.
\newblock NIST's Post-Quantum Cryptography Standardization of Additional
  Digital Signature Schemes Project (Round 2), \url{https://pqc-mirath.org/},
  2024.

\bibitem{SDitHNIST}
Carlos Aguilar-Melchor, Slim Bettaieb, Loïc Bidoux, Thibauld Feneuil, Philippe
  Gaborit, Nicolas Gama, Shay Gueron, James Howe, Andreas Hülsing, David
  Joseph, Antoine Joux, Mukul Kulkarni, Edoardo Persichetti, Tovohery~H.
  Randrianarisoa, Matthieu Rivain, and Dongze Yue.
\newblock {SD-in-the-Head Signature Scheme, NIST's Post-Quantum Cryptography
  Standardization of Additional Digital Signature Schemes Project (Round 2)}.
\newblock \url{https://sdith.org/}, 2024.

\bibitem{AGHH}
Carlos Aguilar-Melchor, Nicolas Gama, James Howe, Andreas H{\"u}lsing, David
  Joseph, and Dongze Yue.
\newblock {The Return of the SDitH}.
\newblock In Carmit Hazay and Martijn Stam, editors, {\em Advances in
  Cryptology -- EUROCRYPT 2023}, pages 564--596, Cham, 2023. Springer Nature
  Switzerland.

\bibitem{RYDE}
Nicolas Aragon, Magali Bardet, Loïc Bidoux, Jesús-Javier Chi-Domínguez,
  Victor Dyseryn, Thibauld Feneuil, Philippe Gaborit, Antoine Joux, Romaric
  Neveu, Matthieu Rivain, Jean-Pierre Tillich, and Adrien Vincotte.
\newblock {RYDE}.
\newblock NIST's Post-Quantum Cryptography Standardization of Additional
  Digital Signature Schemes Project (Round 2), \url{https://pqc-ryde.org/},
  2024.

\bibitem{Sphincs}
Jean-Philippe Aumasson, Daniel~J. Bernstein, Ward Beullens, Christoph
  Dobraunig, Maria Eichlseder, Scott Fluhrer, Stefan-Lukas Gazdag, Andreas
  Hülsing, Panos Kampanakis, Stefan Kölbl, Tanja Lange, Martin~M. Lauridsen,
  Florian Mendel, Ruben Niederhagen, Christian Rechberger, Joost Rijneveld,
  Peter Schwabe, and Bas Westerbaan.
\newblock {SPHINCS+, Submission to the 3rd round of the NIST post-quantum
  project}.
\newblock \url{https://sphincs.org/}, 2022.

\bibitem{CROSS}
Marco Baldi, Alessandro Barenghi, Michele Battagliola, Sebastian Bitzer,
  Patrick Karl, Felice Manganiello, Alessio Pavoni, Gerardo Pelosi, Paolo
  Santini, Jonas Schupp, Edoardo Signorini, Freeman Slaughter, Antonia
  Wachter-Zeh, and Violetta Weger.
\newblock {CROSS : Codes and Restricted Objects Signature Scheme}.
\newblock NIST's Post-Quantum Cryptography Standardization of Additional
  Digital Signature Schemes Project (Round 2),
  \url{https://www.cross-crypto.com/}, 2024.

\bibitem{LESS}
Marco Baldi, Alessandro Barenghi, Luke Beckwith, Jean-François Biasse, Tung
  Chou, Andre Esser, Kris Gaj, Patrick Karl, Kamyar Mohajerani, Gerardo Pelosi,
  Edoardo Persichetti, Markku-Juhani Saarinen, Paolo Santini, Robert Wallace,
  and Floyd Zweydinger.
\newblock {LESS : Linear Equivalence Signature Scheme}.
\newblock NIST's Post-Quantum Cryptography Standardization of Additional
  Digital Signature Schemes Project (Round
  2),\url{https://www.less-project.com/}, 2024.

\bibitem{FAEST}
Carsten Baum, Ward Beullens, Lennart Braun, Cyprien~Delpech de~Saint~Guilhem,
  Michael Klooß, Christian Majenz, Shibam Mukherjee, Emmanuela Orsini,
  Sebastian Ramacher, Christian Rechberger, Lawrence Roy, and Peter Scholl.
\newblock {FAEST}.
\newblock NIST's Post-Quantum Cryptography Standardization of Additional
  Digital Signature Schemes Project (Round 2), \url{https://faest.info/}, 2024.

\bibitem{BBM+}
Carsten Baum, Ward Beullens, Shibam Mukherjee, Emmanuela Orsini, Sebastian
  Ramacher, Christian Rechberger, Lawrence Roy, and Peter Scholl.
\newblock {One Tree to Rule Them All: Optimizing GGM Trees and OWFs for
  Post-Quantum Signatures}.
\newblock In Kai-Min Chung and Yu~Sasaki, editors, {\em Advances in Cryptology
  -- ASIACRYPT 2024}, pages 463--493, Singapore, 2024. Springer Nature
  Singapore.

\bibitem{BBd+}
Carsten Baum, Lennart Braun, Cyprien~Delpech de~Saint~Guilhem, Michael
  Kloo{\ss}, Emmanuela Orsini, Lawrence Roy, and Peter Scholl.
\newblock {Publicly Verifiable Zero-Knowledge and Post-Quantum Signatures from
  VOLE-in-the-Head}.
\newblock In Helena Handschuh and Anna Lysyanskaya, editors, {\em Advances in
  Cryptology -- CRYPTO 2023}, pages 581--615, Cham, 2023. Springer Nature
  Switzerland.

\bibitem{BN20}
Carsten Baum and Ariel Nof.
\newblock {Concretely-Efficient Zero-Knowledge Arguments for Arithmetic
  Circuits and Their Application to Lattice-Based Cryptography}.
\newblock In Aggelos Kiayias, Markulf Kohlweiss, Petros Wallden, and Vassilis
  Zikas, editors, {\em Public-Key Cryptography -- PKC 2020}, pages 495--526,
  Cham, 2020. Springer International Publishing.

\bibitem{mqom}
Ryad Benadjila, Thibauld Feneuil, and Matthieu Rivain.
\newblock {MQ on my Mind: Post-Quantum Signatures from the Non-Structured
  Multivariate Quadratic Problem}.
\newblock {\em 2024 IEEE 9th European Symposium on Security and Privacy
  (EuroS\&P)}, pages 468--485, 2024.

\bibitem{BTK17}
Thierry~P. Berger, Cheikh~Thi{\'e}coumba Gueye, and Jean~Belo Klamti.
\newblock A np-complete problem in coding theory with application to code based
  cryptography.
\newblock In Said El~Hajji, Abderrahmane Nitaj, and El~Mamoun Souidi, editors,
  {\em Codes, Cryptology and Information Security}, pages 230--237, Cham, 2017.
  Springer International Publishing.

\bibitem{BBGK24}
Slim Bettaieb, Loïc Bidoux, Philippe Gaborit, and Mukul Kulkarni.
\newblock { Modelings for generic {PoK} and Applications: Shorter {SD} and
  {PKP} based Signatures }.
\newblock Cryptology {ePrint} Archive, Paper 2024/1668, 2024.

\bibitem{beullens}
Ward Beullens.
\newblock {Not Enough LESS: An Improved Algorithm for Solving Code Equivalence
  Problems over $\mathbb{F}_q$}.
\newblock In Orr Dunkelman, Michael~J. Jacobson, Jr., and Colin O'Flynn,
  editors, {\em Selected Areas in Cryptography}, pages 387--403, Cham, 2020.
  Springer International Publishing.

\bibitem{pkpdss}
Ward Beullens, Jean-Charles Faug{\`e}re, Eliane Koussa, Gilles Macario-Rat,
  Jacques Patarin, and Ludovic Perret.
\newblock {PKP-Based Signature Scheme}.
\newblock In Feng Hao, Sushmita Ruj, and Sourav Sen~Gupta, editors, {\em
  Progress in Cryptology -- INDOCRYPT 2019}, pages 3--22, Cham, 2019. Springer
  International Publishing.

\bibitem{BFG+}
Lo{\"i}c Bidoux, Thibauld Feneuil, Philippe Gaborit, Romaric Neveu, and
  Matthieu Rivain.
\newblock {Dual Support Decomposition in the Head: Shorter Signatures from Rank
  {SD} and {MinRank}}.
\newblock In Kai-Min Chung and Yu~Sasaki, editors, {\em Advances in Cryptology
  -- ASIACRYPT 2024}, pages 38--69, Singapore, 2025. Springer Nature Singapore.

\bibitem{BG22}
Lo{\"i}c Bidoux and Philippe Gaborit.
\newblock {Compact Post-quantum Signatures from Proofs of Knowledge Leveraging
  Structure for the PKP, SD and RSD Problems}.
\newblock In Said El~Hajji, Sihem Mesnager, and El~Mamoun Souidi, editors, {\em
  Codes, Cryptology and Information Security}, pages 10--42, Cham, 2023.
  Springer Nature Switzerland.

\bibitem{ALTEQ}
Markus Bläser, Dung~Hoang Duong, Anand~Kumar Narayanan, Thomas Plantard,
  Youming Qiao, Arnaud Sipasseuth, and Gang Tang.
\newblock {ALTEQ}.
\newblock \url{https://pqcalteq.github.io/}, 2023.

\bibitem{bouillaguet}
Charles Bouillaguet.
\newblock {\em {Algorithms for some hard problems and cryptographic attacks
  against specific cryptographic primitives}}.
\newblock Theses, {Universit{\'e} Paris Diderot (Paris 7)}, November 2011.

\bibitem{medspqforum}
Tung Chou, Ruben Niederhagen, Edoardo Persichetti, Lars Ran, Tovohery
  Hajatiana~Randrianarisoa, Krijn Reijnders, Simona Samardjiska, and Monika
  Trimoska.
\newblock {NIST} first round for additional post-quantum signatures - {Official
  Comments}.

\bibitem{MEDS2}
Tung Chou, Ruben Niederhagen, Edoardo Persichetti, Lars Ran, Tovohery~Hajatiana
  Randrianarisoa, Krijn Reijnders, Simona Samardjiska, and Monika Trimoska.
\newblock {MEDS: Matrix Equivalence Digital Signature Scheme}.
\newblock NIST's Post-Quantum Cryptography Standardization of Additional
  Digital Signature Schemes Project (Round 1), \url{https://www.meds-pqc.org/},
  2023.

\bibitem{MEDS1}
Tung Chou, Ruben Niederhagen, Edoardo Persichetti, Tovohery~Hajatiana
  Randrianarisoa, Krijn Reijnders, Simona Samardjiska, and Monika Trimoska.
\newblock Take your meds: Digital signatures from matrix code equivalence.
\newblock In Nadia El~Mrabet, Luca De~Feo, and Sylvain Duquesne, editors, {\em
  Progress in Cryptology - AFRICACRYPT 2023}, pages 28--52, Cham, 2023.
  Springer Nature Switzerland.

\bibitem{CNR+}
Tung Chou, Ruben Niederhagen, Lars Ran, and Simona Samardjiska.
\newblock {Reducing Signature Size of Matrix-Code-Based Signature Schemes}.
\newblock In Markku-Juhani Saarinen and Daniel Smith-Tone, editors, {\em
  Post-Quantum Cryptography}, pages 107--134, Cham, 2024. Springer Nature
  Switzerland.

\bibitem{CPS24}
Tung Chou, Edoardo Persichetti, and Paolo Santini.
\newblock {On Linear Equivalence, Canonical Forms, and Digital Signatures}.
\newblock {\em Designs, Codes and Cryptography}, pages 1573--7586, 2025.

\bibitem{CDG21}
Alain Couvreur, Thomas Debris-Alazard, and Philippe Gaborit.
\newblock { On the hardness of code equivalence problems in rank metric },
  2021.

\bibitem{limbo}
Cyprien Delpech~de Saint~Guilhem, Emmanuela Orsini, and Titouan Tanguy.
\newblock {Limbo: Efficient Zero-knowledge MPCitH-based Arguments}.
\newblock In {\em Proceedings of the 2021 ACM SIGSAC Conference on Computer and
  Communications Security}, CCS '21, page 3022–3036, New York, NY, USA, 2021.
  Association for Computing Machinery.

\bibitem{FJR22}
Thibauld Feneuil, Antoine Joux, and Matthieu Rivain.
\newblock {Syndrome Decoding in the Head: Shorter Signatures from
  Zero-Knowledge Proofs}.
\newblock In Yevgeniy Dodis and Thomas Shrimpton, editors, {\em Advances in
  Cryptology -- CRYPTO 2022}, pages 541--572, Cham, 2022. Springer Nature
  Switzerland.

\bibitem{FR22}
Thibauld Feneuil and Matthieu Rivain.
\newblock {Threshold Linear Secret Sharing to the Rescue of MPC-in-the-Head}.
\newblock In Jian Guo and Ron Steinfeld, editors, {\em Advances in Cryptology
  -- ASIACRYPT 2023}, pages 441--473, Singapore, 2023. Springer Nature
  Singapore.

\bibitem{FR23}
Thibauld Feneuil and Matthieu Rivain.
\newblock {Threshold Computation in the Head: Improved Framework for
  Post-Quantum Signatures and Zero-Knowledge Arguments}.
\newblock {\em J. Cryptol.}, 38(3), June 2025.

\bibitem{FS87}
Amos Fiat and Adi Shamir.
\newblock {How to prove yourself : practical solutions to identification and
  signature problems}.
\newblock In Andrew M. Odlyzko, editor, CRYPTO’86, volume 263 of LNCS, pages
  186–194. Springer, Heidelberg, 1987.

\bibitem{HJ24}
Janik Huth and Antoine Joux.
\newblock { {VOLE}-in-the-head signatures from Subfield Bilinear Collisions }.
\newblock Cryptology {ePrint} Archive, Paper 2024/1537, 2024.

\bibitem{IKO}
Yuval Ishai, Eyal Kushilevitz, Rafail Ostrovsky, and Amit Sahai.
\newblock {Zero-knowledge from secure multiparty computation}.
\newblock In David S. Johnson and Uriel Feige, editors, 39th ACM STOC, pages
  21–30. ACM Press, 2007.

\bibitem{KKW18}
Jonathan Katz, Vladimir Kolesnikov, and Xiao Wang.
\newblock {Improved Non-Interactive Zero Knowledge with Applications to
  Post-Quantum Signatures}.
\newblock In {\em Proceedings of the 2018 ACM SIGSAC Conference on Computer and
  Communications Security}, CCS '18, page 525–537, New York, NY, USA, 2018.
  Association for Computing Machinery.

\bibitem{Leon}
J.~Leon.
\newblock { Computing automorphism groups of error-correcting codes }.
\newblock {\em IEEE Transactions on Information Theory}, 28(3):496--511, 1982.

\bibitem{Mac62}
Florence~J. MacWilliams.
\newblock {\em {Combinatorial properties of elementary abelian groups.}}
\newblock PhD thesis, {Radcliff college}, 1962.

\bibitem{KQT24}
Anand~Kumar Narayanan, Youming Qiao, and Gang Tang.
\newblock {Algorithms for Matrix Code and Alternating Trilinear Form
  Equivalences via New Isomorphism Invariants}.
\newblock In {\em Advances in Cryptology – EUROCRYPT 2024: 43rd Annual
  International Conference on the Theory and Applications of Cryptographic
  Techniques, Zurich, Switzerland, May 26–30, 2024, Proceedings, Part III},
  page 160–187, Berlin, Heidelberg, 2024. Springer-Verlag.

\bibitem{nistcall2023}
NIST.
\newblock {Call for Additional Digital Signature Schemes for the Post-Quantum
  Cryptography Standardization Process}.

\bibitem{nistcall2017}
NIST.
\newblock {Call for Post-Quantum Cryptography Standardization}.

\bibitem{PR97}
E.~Petrank and R.M. Roth.
\newblock { Is code equivalence easy to decide? }.
\newblock {\em IEEE Transactions on Information Theory}, 43(5):1602--1604,
  1997.

\bibitem{RS24}
Lars Ran and Simona Samardjiska.
\newblock {Rare Structures in Tensor Graphs}.
\newblock In Kai-Min Chung and Yu~Sasaki, editors, {\em Advances in Cryptology
  -- ASIACRYPT 2024}, pages 66--96, Singapore, 2025. Springer Nature Singapore.

\bibitem{RST22}
Krijn Reijnders, Simona Samardjiska, and Monika Trimoska.
\newblock { Hardness estimates of the Code Equivalence Problem in the Rank
  Metric }.
\newblock {\em Designs, Codes and Cryptography}, 92:833–862, 2024.

\bibitem{SBC24}
Paolo Santini, Marco Baldi, and Franco Chiaraluce.
\newblock { Computational Hardness of the Permuted Kernel and Subcode
  Equivalence Problems }.
\newblock {\em IEEE Transactions on Information Theory}, 70(3):2254--2270,
  2024.

\bibitem{S00}
N.~Sendrier.
\newblock { Finding the permutation between equivalent linear codes: the
  support splitting algorithm }.
\newblock {\em IEEE Transactions on Information Theory}, 46(4):1193--1203,
  2000.

\bibitem{SH90}
Adi Shamir.
\newblock {An Efficient Identification Scheme Based on Permuted Kernels}.
\newblock In Gilles Brassard, editor, {\em Advances in Cryptology --- CRYPTO'
  89 Proceedings}, pages 606--609, New York, NY, 1990. Springer New York.

\bibitem{Shor}
P.W. Shor.
\newblock { Algorithms for quantum computation: discrete logarithms and
  factoring }.
\newblock In {\em Proceedings 35th Annual Symposium on Foundations of Computer
  Science}, pages 124--134, 1994.

\bibitem{Stern94}
Jacques Stern.
\newblock { A new identification scheme based on syndrome decoding }.
\newblock In {\em International Cryptology Conference (CRYPTO)}, 1993.

\bibitem{quicksilver}
Kang Yang, Pratik Sarkar, Chenkai Weng, and Xiao Wang.
\newblock { QuickSilver: Efficient and Affordable Zero-Knowledge Proofs for
  Circuits and Polynomials over Any Field }.
\newblock In {\em Proceedings of the 2021 ACM SIGSAC Conference on Computer and
  Communications Security}, CCS '21, page 2986–3001, New York, NY, USA, 2021.
  Association for Computing Machinery.

\end{thebibliography}

\end{document}